\newcommand\myfrac[2]{\genfrac{}{}{0pt}{}{#1}{#2}}
\newtheorem{theorem}{Theorem}[section]
\newtheorem{corollary}[theorem]{Corollary}
\newtheorem{lemma}[theorem]{Lemma}
\newtheorem{definition}[theorem]{Definition}
\newtheorem{proposition}[theorem]{Proposition}
\newtheorem{remark}[theorem]{Remark}
\newlength{\blank}
\newcommand{\1}{\mathds{1}}
\DeclareMathOperator{\Tr}{Tr}
\DeclareMathOperator{\id}{id}
\newcommand{\EE}{\mathbb{E}}
\newcommand{\RR}{\mathbb{R}}
\newcommand{\NN}{\mathbb{N}}
\newcommand{\ZZ}{\mathbb{Z}}
\newcommand{\cP}{\mathcal{P}}
\newcommand{\cC}{\mathcal{C}}
\newcommand{\cD}{\mathcal{D}}
\newcommand{\cE}{\mathcal{E}}
\newcommand{\cL}{\mathcal{L}}
\newcommand{\cM}{\mathcal{M}}
\newcommand{\cN}{\mathcal{N}}
\newcommand{\cS}{\mathcal{S}}
\newcommand{\cT}{\mathcal{T}}
\newcommand{\cU}{\mathcal{U}}
\newcommand{\cX}{\mathcal{X}}
\newcommand{\cY}{\mathcal{Y}}
\newcommand{\ox}{\otimes}
\newcommand{\proj}[1]{\ket{#1}\!\bra{#1}}
\begin{document}

\title{Deterministic \!identification \!over \!channels \!with \!finite \!output: \!a \!dimensional \!perspective \!on \!superlinear \!rates
\thanks{This work has been accepted for publication in IEEE Transactions on Information Theory \cite{CDBW:DI_IEEE}. A preliminary version was presented at ISIT 2024, Athens (Greece) 7-12 July 2024 \cite{CDBW:ID-continuous-discrete:ISIT}.
HB and CD are supported by the German Federal Ministry of Education and Research (BMBF) within the national initiative on 6G Communication Systems through the research hub 6G-life, grants 16KISK002 and 16KISK263, within the national initiative on Post Shannon Communication (NewCom), grants 16KIS1003K and 16KIS1005, within the national initiative ``QuaPhySI'', 
grants 16KISQ1598K and 16KIS2234, and within the national initiative ``QTOK'', 
grant 16KISQ038. 
HB has further received funding from the German Research Foundation (DFG) within Germany’s Excellence Strategy EXC-2092-390781972. 
AW is supported by the European Commission QuantERA grant ExTRaQT (Spanish MICIN project PCI2022-132965), by the Spanish MICIN (projects PID2019-107609GB-I00 and PID2022-141283NB-I00) with the support of FEDER funds, by the Spanish MICIN with funding from European Union NextGenerationEU (PRTR-C17.I1) and the Generalitat de Catalunya, and by the Alexander von Humboldt Foundation.
PC and AW are furthermore supported by the Institute for Advanced Study of the Technical University Munich (IAS-TUM).}}
\author{Pau~Colomer\textsuperscript{\orcidlink{0000-0002-0126-4521}}\thanks{P. Colomer (pau.colomer@tum.de) is with the Chair of Theoretical Information Technology, Technische Universit\"at M\"unchen (LTI-TUM), Theresienstra{\ss}e 90, 80333 M\"unchen, Germany; and IAS-TUM, Lichtenbergstra{\ss}e 2a, 85748 Garching, Germany.},~\IEEEmembership{Student Member,~IEEE}, 
Christian~Deppe\textsuperscript{\orcidlink{0000-0003-3047-3549}}\thanks{C. Deppe (christian.deppe@tu-bs.de) is with the Institute for Communications Technology and the 6G-life research hub in Technische Universit\"at Braunschweig, Schleinitzstra{\ss}e 22, 38106 Braunschweig, Germany; and was with the Institute for Communications Engineering, 
TUM.},~\IEEEmembership{Member,~IEEE},\protect\\
Holger~Boche\textsuperscript{\orcidlink{0000-0002-8375-8946}}\thanks{H. Boche (boche@tum.de) is with LTI-TUM and the 6G-life research hub in Theresienstra{\ss}e 90, 80333 M\"unchen, Germany; Munich Center for Quantum Science and Technology, Schellingstra{\ss}e 4, 80799 München, Germany; and Munich Quantum Valley, Leopoldstra{\ss}e 244, 80807 München, Germany.},~\IEEEmembership{Fellow,~IEEE},
and~Andreas~Winter\textsuperscript{\orcidlink{0000-0001-6344-4870}}
\thanks{A. Winter (andreas.winter@uab.cat) is with ICREA and Grup d'informaci\'o qu\`antica, Departament de F\'isica, Universitat Aut\`onoma de Barcelona, 08193 Bellaterra (Barcelona), Spain, as well as with IAS-TUM.}}

%
%
%


\maketitle

\begin{abstract}
Following initial work by JaJa, Ahlswede and Cai, and inspired by a recent renewed surge in interest in deterministic identification (DI) via noisy channels, we consider the problem in its generality for memoryless channels with finite output, but arbitrary input alphabets. 
Such a channel is essentially given by its output distributions as a subset in the probability simplex. 
Our main findings are that the maximum length of messages thus identifiable scales superlinearly as $R\,n\log n$ with the block length $n$,
and that the optimal rate $R$ is bounded in terms of the covering (aka Minkowski, or Kolmogorov, or entropy) dimension $d$ of a certain algebraic transformation of the output set: $\frac14 d \leq R \leq \frac12 d$. Remarkably, both the lower and upper Minkowski dimensions play a role in this result. Along the way, we present a \emph{Hypothesis Testing Lemma} 
showing that it is sufficient to ensure pairwise reliable distinguishability of the output distributions to construct a DI code.
%
%
Although we do not know the exact capacity formula, we can conclude that the DI capacity exhibits superactivation: there exist channels whose capacities individually are zero, but whose product has positive capacity. 
We also generalise these results to classical-quantum channels with finite-dimensional output quantum system, 
in particular to quantum channels on finite-dimensional quantum systems under 
the constraint that the identification code can only use tensor product inputs. 
\end{abstract}

\begin{IEEEkeywords}
Shannon theory;
identification via channels; 
quantum information.
\end{IEEEkeywords}

\renewcommand{\contentsname}{}
\setlength{\cftbeforetoctitleskip}{0pt}
\setlength{\cftaftertoctitleskip}{0pt}
\setlength{\cftbeforesecskip}{1pt}
\addtolength{\cftsecnumwidth}{1mm}
\addtolength{\cftsubsecnumwidth}{1.66mm}
\addtolength{\cftsubsecindent}{1mm}
\phantom{.}
\vspace{-1.6cm}
\tableofcontents

\thispagestyle{empty}
\section[{Communication and identification}]{Communication and identification\protect\\ via noisy channels}
\label{sec:intro}
\IEEEPARstart{I}{n} Shannon's problem of communication over a noisy channel \cite{Shannon:TheoryCommunication}, the receiver aims to faithfully recover an original message sent through a noisy memoryless channel $W:\cX\rightarrow\cY$, where $\cX$ and $\cY$ are the input and output alphabets, respectively, and which is given by the transition probabilities $W(y|x)$, for $x\in\cX$ and $y\in\cY$. That is (for discrete $\cY$), for all $x,y$, $W(y|x) \geq 0$ and for all $x$, $\sum_y W(y|x) = 1$. This means that one may consider a channel equivalently as a (measurable) function, denoted by the same letter, $W:\cX \rightarrow \cP(\cY)$, mapping inputs $x\in\cX$ to probability distributions $W_x = W(\cdot|x) \in \cP(\cY)$ in the probability simplex over $\cY$.

In block length $n\in\NN$, we have the product transition probabilities
\(
  W_{x^n}(y^n) = W^n(y^n|x^n) = \prod_{i=1}^n W(y_i|x_i) = \prod_{i=1}^n W_{x_i}(y_i)
\)
for the $n$-letter sequences (words) $x^n = x_1x_2\ldots x_n \in \cX^n$ and $y^n = y_1y_2\ldots y_n \in \cY^n$. 

\begin{definition}
\label{def:transmission_code}
An $(n,M,\lambda)$-\emph{transmission code} over $n$ uses of the memoryless channel $W$ is a family of pairs, $\left\{(u_m,\cD_m) : m\in[M]=\{1,\ldots,M\}\right\}$, with $u_m\in \cX^n$ code words over the input alphabet and $\cD_m\subset\cY^n$ pairwise disjoint subsets of the output words, $\cD_m\cap\cD_{m'}=\emptyset$ for all $m\neq m'\in[M]$, such that the error probability is bounded by $\lambda$ for all $m\in[M]$: 
\begin{equation}
  W^n(\cD_m|u_m)\geq 1-\lambda.
\end{equation}
The maximum number $M$ of messages of an $(n,M,\lambda)$-code is denoted by $M(n,\lambda)$. 
\end{definition}

The value of $\frac1n\log M$ is called the \emph{rate} of the code, and we define the \emph{capacity} of a channel $C(W)$ as the maximum rate for asymptotically faithful transmission:
\begin{equation}
  \label{eq:trans_capacity_def}
  C(W) := \inf_{\lambda>0} \liminf_{n\rightarrow\infty} \frac1n \log M(n,\lambda).
\end{equation}

\begin{theorem}[Shannon \cite{Shannon:TheoryCommunication}, Wolfowitz \cite{Wolfowitz:converse}]
\label{thm:Shannon_trans_capacity}
The transmission capacity of a memoryless channel $W$ is given by the following formula, and the strong converse holds, namely for all $\lambda\in (0;1)$, 
\[
  C(W) = \lim_{n\rightarrow\infty}\frac{1}{n}\log M(n,\lambda)
       = \max_{P\in\cP(\cX)} I(P;W).
\]
Here $\cP(\cX)$ is the set of probability distributions on $\cX$ and $I(P;W)=H(PW)-H(W|P)$ is the mutual information, using the notation
$PW = \sum_x P(x)W_x \in \cP(\cY)$, with the entropy $H(Q)=-\sum_y Q(y)\log Q(y)$ and the conditional entropy $H(W|P)=\sum_x P(x) H(W(\cdot|x))$.
\end{theorem}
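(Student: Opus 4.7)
The theorem combines a direct part (every rate below $\max_P I(P;W)$ is achievable) with a strong converse (no rate above this is achievable for any $\lambda<1$). Both parts are classical; I only sketch the architecture.

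\textbf{Direct part via random coding.} I would fix $P^{*}$ attaining $\max_P I(P;W)$ and, for given $\delta>0$ and block length $n$, draw $M=\lfloor 2^{n(I(P^{*};W)-2\delta)}\rfloor$ codewords $u_1,\ldots,u_M$ independently from the product distribution $(P^{*})^{\otimes n}$, decoding $y^n$ to the unique $m$ (if any) such that $(u_m,y^n)$ lies in the jointly $\delta$-typical set for $(P^{*},W)$. Error control is the standard AEP estimate: the probability that the true codeword is not jointly typical with its output vanishes by the law of large numbers, while the probability that a random competing codeword happens to be jointly typical with a fixed $y^n$ is at most $2^{-n(I(P^{*};W)-\delta)}$; a union bound over the $M-1$ competitors leaves an overall ensemble-averaged error of order $2^{-n\delta}$. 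Expurgating the worse half of codewords then converts this into a maximum-error code at the cost of a single bit of rate, establishing $\liminf_n \frac1n\log M(n,\lambda)\geq \max_P I(P;W)$.

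\textbf{Strong converse via types (Wolfowitz).} Given any $(n,M,\lambda)$-code, I would partition the codewords by their empirical type on $\cX$; since there are only polynomially many types, it suffices to bound $M_P$, the number of codewords of a fixed type $P\in\cP(\cX)$. For any $u_m$ of type $P$, the output distribution $W^n(\cdot|u_m)$ concentrates, with probability at least $1-\epsilon_n$, on the conditionally $\delta$-typical set $T^n_{W|u_m}$ of cardinality at most $2^{n(H(W|P)+\eta)}$, and all such sets sit inside a common output set of cardinality at most $2^{n(H(PW)+\eta)}$. Intersecting each decoding region $\cD_m$ with $T^n_{W|u_m}$ retains mass $\geq 1-\lambda-\epsilon_n$ under $W^n(\cdot|u_m)$, hence contains at least $(1-\lambda-\epsilon_n)\,2^{n(H(W|P)-\eta)}$ sequences. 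The $\cD_m$ being pairwise disjoint and all living inside the common typical set, a pigeonhole count yields $M_P(1-\lambda-\epsilon_n)\leq 2^{n(I(P;W)+O(\eta))}$. Summing over types costs only a polynomial factor and gives $\frac{1}{n}\log M(n,\lambda)\leq \max_P I(P;W)+o(1)$ for every $\lambda<1$.

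\textbf{Main obstacle.} The direct part is routine once joint typicality is set up. The delicate point is the strong converse: the typicality parameter $\delta=\delta_n$ must be chosen to vanish slowly enough that $\epsilon_n\to 0$ (so the factor $(1-\lambda-\epsilon_n)^{-1}$ stays bounded uniformly in every fixed $\lambda<1$), but still fast enough that the slack $\eta=\eta(\delta_n)\to 0$ and does not inflate the rate bound. Balancing these scales (Wolfowitz used $\delta_n\sim n^{-1/2}$, controlling the typicality probability by Chebyshev) is the heart of the strong-converse argument.
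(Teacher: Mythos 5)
The paper does not prove this theorem; it is stated as classical background and attributed to Shannon (achievability) and Wolfowitz (strong converse), so there is no in-paper proof to compare against. Your sketch is the standard textbook route to exactly those two results — random coding plus joint typicality and expurgation for the direct part, and the type-counting / image-size argument with a $\delta_n\sim n^{-1/2}$ typicality window for the strong converse — and it is correct at the level of a sketch. One caveat worth flagging: the type-partition step in your converse silently assumes a finite input alphabet $\cX$ (there are only polynomially many types only when $|\cX|<\infty$). That is fine here, since Theorem~\ref{thm:Shannon_trans_capacity} is stated in the introductory DMC setting; but in the paper's later setting of arbitrary (possibly continuous) input with finite output, one would instead run the converse purely on the output side, e.g.\ via the entropy-typical sets of Equation~\eqref{eq:entropy-typical-set} and the Chebyshev-type bound~\eqref{eq:entropy-typical-prob}, which avoids any appeal to input types and is closer in spirit to Wolfowitz's original argument.
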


Here and elsewhere in this article $\log$ and $\exp$ are to base $2$ by default, unless explicitly specified. 
In particular, the maximum number of messages that we can transmit through a noisy channel is exponential in the block length: $M(n,\lambda) \sim 2^{nR}$.
\begin{figure*}
    \centering
    \includegraphics[width=0.9\textwidth]{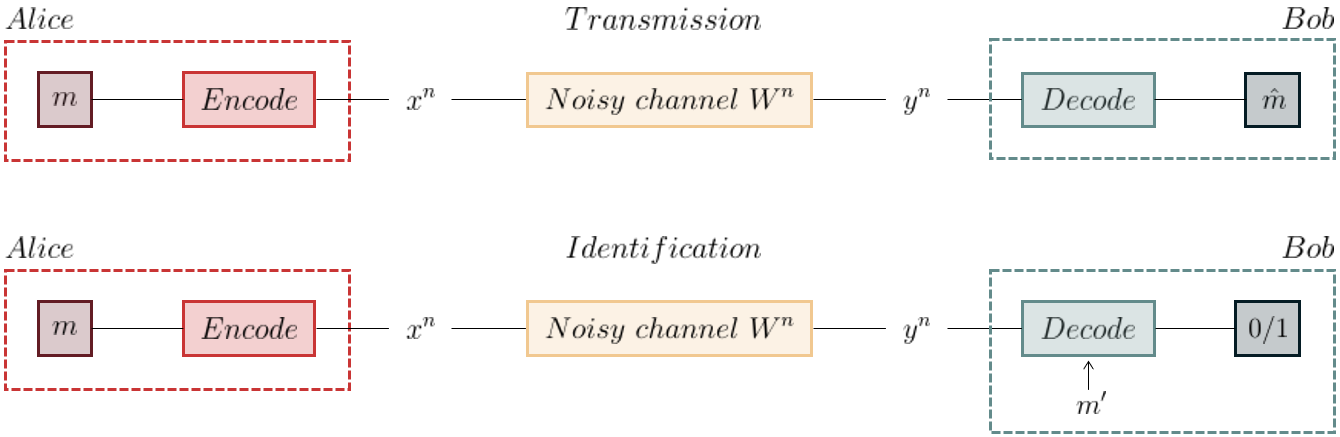}
    \caption{Alice encodes a message $m$ chosen from a set $\cM=\{1,\ldots,M\}$ into a code word of block length $n$ and sends it through $n$ uses of a noisy memoryless channel $W$. In the usual transmission scheme (above), when Bob receives $y^n$ he can decode the message, aiming to recover some $\widehat{m}\approx m$. In an identification scheme (below), he instead chooses any message $m'\in\cM$ and checks whether it is equal to $m$ with a particular hypothesis test, obtaining a binary answer.}
    \label{fig:Trans_VS_ID}
\end{figure*}

Some years after this seminal result, JaJa presented an \emph{identification} task in which instead of recovering the initial message, the receiver is only interested in knowing whether the output text corresponds to the one they have in mind, and found that this task has less communication complexity than Shannon's original transmission problem \cite{Ja:ID_easier}.

Ahlswede and Dueck fully characterized the identification problem in \cite{AD:ID_ViaChannels}, proving that identification (ID) codes can achieve doubly exponential growth of the number $N$ of messages as a function of the block length $n$, $N \sim 2^{2^{nR}}$: we can identify exponentially more messages than we can transmit. 
The main insight to prove this surprising achievability result comes from utilizing randomness in the encoder (inspired by \cite{Yao:complexity}). Starting from an $(n,M,\lambda)$-transmission code, Ahlswede and Dueck proved that reliable identification can be achieved by encoding an exponentially larger number $N \geq 2^{\lfloor{\epsilon M}\rfloor}/M$ of messages into uniform distributions over ``large'' subsets of the $M$ transmission code words. 

\begin{definition}
\label{def:ID_code}
A (randomized) $(n,N,\lambda_1,\lambda_2)$-\emph{ID code} is a family $\{(P_j,\cE_j) : j\in[N] \}$ of pairs consisting of probability distributions $P_j = P(\cdot|j)\in\cP(\cX^n)$ and subsets $\cE_j\subset \cY^n$, such that for all $j\neq k\in [N]$,
\begin{align}
  (P_jW^n)(\cE_j) &\geq 1-\lambda_1,\label{eq:I-kind}\\
  (P_jW^n)(\cE_k) &\leq \lambda_2. \label{eq:II-kind}
\end{align}
\end{definition}

Let us appreciate the formal and conceptual differences between identification and transmission codes: indeed, in randomized identification, we have probability distributions on the input, and we do not require disjointness of the output decoding sets. The latter entails the appearance of two possible error probabilities termed \emph{first} and \emph{second kind}, following the standard terminology in hypothesis testing, in contrast to transmission where we only have a single maximum error probability of incorrectly decoding $\lambda$. Here, $\lambda_1$ is the probability of a missed identification: this happens when the message Alice sends is the same as the one Bob wants to identify but, due to the noise of the protocol, the hypothesis test has a negative outcome. Likewise, $\lambda_2$ is the probability of incorrect identification: the messages sent and tested are different, but the outcome on Bob's side is positive.

Similarly to the transmission case, the maximum $N$ such that an $(n,N,\lambda_1,\lambda_2)$-ID code exists is denoted by $N(n,\lambda_1,\lambda_2)$. The asymptotic ID capacity of a channel $W$ is then defined, keeping in mind the double-exponential growth of $N$ in the block length $n$, as
\begin{equation}
  \label{eq:rand_ID_capacity_def}
  \ddot{C}_{\text{ID}}(W) := \inf_{\lambda_1,\lambda_2>0} \liminf_{n\rightarrow\infty} \frac1n \log\log N(n,\lambda_1,\lambda_2).
\end{equation}
We find it convenient to use the double dot above the capacity, 
$\ddot{C}_{\text{ID}}$, to indicate the double exponential nature of its definition. This will be useful later on as even a third capacity with yet a different scaling will be defined and compared to the others.

\begin{theorem}[{Ahlswede/Dueck~\cite{AD:ID_ViaChannels}, Han/Verd{\'u}~\cite{HanVerdu:ID}}]
\label{thm:ID_capacity_A&D}
The double exponential ID capacity of a discrete memoryless channel $W$ equals Shannon's (single) exponential transmission capacity, and 
the strong converse holds: for $\lambda_1,\lambda_2>0$, $\lambda_1+\lambda_2<1$,
\[
    \ddot{C}_{\text{ID}}(W) 
     = \lim_{n\rightarrow\infty} \frac1n \log\log N(n,\lambda_1,\lambda_2)
     = C(W).
\]
\end{theorem}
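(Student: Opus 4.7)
The plan has two directions: achievability $\ddot{C}_{\text{ID}}(W) \geq C(W)$ and the strong converse $\ddot{C}_{\text{ID}}(W) \leq C(W)$.

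For achievability, I would start from a near-optimal transmission code $\{(u_m,\cD_m) : m\in[M]\}$ of rate close to $C(W)$ with maximum error $\lambda$, guaranteed by Theorem~\ref{thm:Shannon_trans_capacity}, so that $M\sim 2^{n(C(W)-\delta)}$. Following Ahlswede--Dueck, I would then construct, by a probabilistic/combinatorial argument, a family of $N\geq 2^{\lfloor\epsilon M\rfloor}/M$ subsets $\cM_j\subset[M]$, each of cardinality $\lfloor\epsilon M\rfloor$, whose pairwise intersections have cardinality at most $\lambda\lfloor\epsilon M\rfloor$. The identification encoder $P_j$ is defined as the uniform distribution on $\{u_m : m\in\cM_j\}$, and the decoder is $\cE_j:=\bigcup_{m\in\cM_j}\cD_m$. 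The first-kind bound $(P_jW^n)(\cE_j)\geq 1-\lambda$ is immediate from the transmission property; for the second-kind inequality I would split $\cM_j = (\cM_j\cap\cM_k) \sqcup (\cM_j\setminus\cM_k)$, noting that the intersection contributes at most $\lambda$ by the small-overlap bound, while its complement contributes at most $\lambda$ via the disjointness of the $\cD_{m'}$. Since $\log\log N \sim \log(\epsilon M) \sim n(C(W)-\delta)$, achievability follows.

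The weak converse is comparatively soft. Two distinct ID messages $j\neq k$ produce output distributions $P_jW^n$ and $P_kW^n$ which, by \eqref{eq:I-kind}--\eqref{eq:II-kind}, are distinguishable by a binary test with error probability at most $\lambda_1+\lambda_2 < 1$; equivalently, their total variation is at least $1-\lambda_1-\lambda_2$ on the accept sets. A ball-packing argument in the output probability simplex, controlled via the method of types and typicality for $W^n$, then yields $N\leq 2^{2^{n(C(W)+o(1))}}$ provided $\lambda_1,\lambda_2\to 0$.

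The bulk of the effort lies in the \emph{strong} converse, where $\lambda_1,\lambda_2$ are fixed positive constants with $\lambda_1+\lambda_2<1$. Here I would follow Han--Verdú's channel resolvability viewpoint: their key lemma states that every output distribution $Q = PW^n$ can be $\epsilon$-approximated in total variation by the empirical distribution of roughly $2^{n(C(W)+\gamma)}$ suitably chosen input words. Quantizing each encoder $P_j$ to such a uniform surrogate supported on a subset of a common ``universe'' of size $L\sim 2^{n(C(W)+\gamma)}$, the ID conditions survive up to a small slack, so distinct messages still correspond to distinct subsets. Counting subsets of fixed size in a universe of size $L$ then gives $N\leq \binom{L}{K}\leq 2^{2^{n(C(W)+o(1))}}$. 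The delicate point, and the main obstacle, is to make sure the approximation error does not accumulate across the doubly exponentially many pairs $(j,k)$, and that the discretized encoders retain the hypothesis-testing distinguishability encoded in \eqref{eq:I-kind}--\eqref{eq:II-kind} with only an $o(1)$ loss in the parameters.
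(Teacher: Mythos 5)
The paper does not actually prove Theorem~\ref{thm:ID_capacity_A&D}: it is quoted as a known result of Ahlswede--Dueck and Han--Verd\'u, with only the achievability idea sketched in the introduction (encoding $N\geq 2^{\lfloor\epsilon M\rfloor}/M$ messages as uniform distributions over large, pairwise almost-disjoint subsets of a capacity-achieving transmission code). Your reconstruction of that direction is exactly this standard argument and is correct: the split of $\cM_j$ into $\cM_j\cap\cM_k$ and $\cM_j\setminus\cM_k$, with the first part controlled by the overlap bound $|\cM_j\cap\cM_k|\leq\lambda\lfloor\epsilon M\rfloor$ and the second by disjointness of the $\cD_{m'}$, gives $\lambda_2\leq 2\lambda$, and $\log\log N\sim\log(\epsilon M)\sim n(C(W)-\delta)$. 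Your strong-converse sketch via channel resolvability is also the standard (Han--Verd\'u) route and is sound in outline; the one point worth correcting is that the ``delicate point'' you flag is not actually an obstacle. The quantization error does not accumulate over pairs: if $\|\tilde P_jW^n-P_jW^n\|_1\leq 2\epsilon$ for every $j$ individually, then \emph{every} acceptance probability $(\tilde P_jW^n)(\cE_k)$ differs from $(P_jW^n)(\cE_k)$ by at most $\epsilon$, uniformly over all (doubly exponentially many) pairs, so the code conditions survive with $\lambda_i'=\lambda_i+\epsilon$ and distinctness of the quantized encoders follows from $\lambda_1+\lambda_2+2\epsilon<1$; counting the possible $K$-point empirical distributions on $\cX^n$ with $K\sim 2^{n(C(W)+\gamma)}$ then yields $\log\log N\leq n(C(W)+\gamma+o(1))$. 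Your intermediate ``weak converse'' paragraph is dispensable, as it is subsumed by the strong converse, and as stated (a packing bound in the output simplex) it would require more care than the resolvability argument to make rigorous.
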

We remark here that the condition $\lambda_1+\lambda_2<1$ is there to prevent trivialities, since $\lambda_1+\lambda_2\geq 1$ can be achieved for any channel and any number of messages, by encoding them arbitrarily, and responding to all identity tests with a fixed distribution $(1-\lambda_1,\lambda_1)$.

The most general ID codes from Definition \ref{def:ID_code} utilize a randomized encoder, in the sense that if we want to send the message $j$ we make use of a probability distribution $P_j$ and therefore the particular input string that enters the channel is not deterministically defined: it could be any $x^n$ with probability $P_j(x^n)\neq 0$. 
Indeed, from the beginning, researchers were interested in what happens if we impose a deterministic encoding where, instead of probability distributions, we use code words $u_j\in\cX^n$ as in the transmission scenario. To make contact with Definition \ref{def:ID_code} for an ID code, we say that a code for identification is \emph{deterministic (DI)} if for all $j\in[N]$ we have a string $u_j\in\cX^n$ such that $P_j = \delta_{u_j}$ is the point mass concentrated on $u_j$, i.e.,
\[
  P_j(x^n) = P(x^n|j) =
  \begin{cases}
    1\quad \text{if } x^n=u_j,\\
    0\quad \text{if } x^n\neq u_j.
  \end{cases}
\]
By slight abuse of notation, we denote an $(n,N,\lambda_1,\lambda_2)$-\emph{DI code} as the family $\{(u_j,\cE_j) : j\in[N]\}$.

It was observed that this deterministic approach leads to much poorer results in terms of code size scaling in block length \cite{AD:ID_ViaChannels,AC:DI}, but without providing a complete proof. Subsequent analysis in \cite{SPBD:DI_power} finally showed that indeed DI over discrete memoryless channels (DMC) can only lead to (single) exponential scaling maximal code sizes $N(n,\lambda_1,\lambda_2)\sim 2^{nR}$ as in Shannon's paradigm, albeit with a higher rate $R$. Specifically, let $N_\text{row}(W) = |W(\cX)|$ be the number of distinct rows (output distributions) of the stochastic matrix that describes the DMC $W$. Then,
\begin{equation}
  \label{eq:Capacity_DMC}
  C_\text{DI}(W)=\log N_\text{row}(W),
\end{equation}
where $C_\text{DI}(W)=\lim_{n\rightarrow\infty}\frac{1}{n}\log N_\text{DI}(n,\lambda_1,\lambda_2)$ is the DI capacity (in the exponential scale) of a DMC. Despite this poorer performance, interest in deterministic codes has recently been renewed, as they have proven to be easier to implement and simulate \cite{DI_simpler_impl}, to explicitly construct \cite{DI_explicit_construction}, and offer reliable single-block performance \cite{AD:ID_ViaChannels}; see in particular \cite{VDTB:practical-DI} for identification codes. 

More interestingly, and directly motivating our present work, in certain channels with continuous input 
alphabets, it was found that deterministic identification codes are governed by a slightly superexponential scaling in block length. Concretely, via fast and slow fading Gaussian channels \cite{SPBD:DI_power,DI-fading} and over Poisson channels \cite{DI-poisson} optimal DI codes grow as $N(n,\lambda_1,\lambda_1) \sim 2^{Rn\log n}$. We are thus motivated to define the slightly superexponential capacity as
\begin{equation}
  \label{eq:superexp_capacity_def}
  \dot{C}_{\text{DI}}(W) 
    := \inf_{\lambda_1,\lambda_2>0} \liminf_{n\rightarrow\infty} \frac{1}{n\log n} \log N_{\text{DI}}(n,\lambda_1,\lambda_2).
\end{equation}
Indeed, the superexponential DI capacity has been bounded for fast- and slow-fading Gaussian channels $G$ \cite{SPBD:DI_power,DI-fading} (with a recently improved upper bound in \cite{VDB:DI-fading-new}) and for the Poisson channel $P$ \cite{DI-poisson,DI-poisson_mc} as follows:
\begin{equation}
  \label{eq:previousGaussPoisson}
  \frac{1}{4} \leq \dot{C}_{\text{DI}}({G})
   \leq \frac12 
   \quad \text{and} \quad 
  \frac{1}{4} \leq \dot{C}_{\text{DI}}({P}) 
   \leq \frac{3}{2}.
\end{equation}

The capacities defined until now [Equations \eqref{eq:trans_capacity_def}, \eqref{eq:rand_ID_capacity_def}, and \eqref{eq:superexp_capacity_def}], while differing in the particular scaling relevant to each case, are all given as so-called \emph{weak} and \emph{pessimistic capacities} (cf.~\cite{Ahlswede1968,Ahlswede2006} for the discussion and history of this terminology). Despite the pejorative sound of these adjectives, this is the most general and most commonly used way to define the capacity as a unique number: namely, the largest limiting rate (suitably defined) of a sequence of codes for each block length $n$, such that the error converges to zero. 
To wit, \emph{weak} (as opposed to \emph{strong}) capacity means that we take the worst (weakest) value of the rate over all positive errors (hence the infimum over $\lambda_i>0$). In other words, we make sure that the defined capacity value can be achieved with $\lambda_i\rightarrow 0$. The \emph{pessimistic} term means that the capacity is a rate realised for all sufficiently large values of $n$, i.e.~we take the lowest convergence value of the limit (the inferior limit). 
It might of course be possible to find a subsequence of block lengths $n_k \rightarrow\infty$ for which the rate converges to a higher number. This motivates the definition of an \emph{optimistic capacity}, which gives us information about the best rates that can be achieved (despite only for certain particular values of $n$) through the superior limit. In the present superexponential regime of deterministic identification this would be the following: 
\begin{equation}
  \label{eq:opt_superexp_capacity_def}
  \dot{C}_{\text{DI}}^\text{opt}(W) 
    := \inf_{\lambda_1,\lambda_2>0} \limsup_{n\rightarrow\infty} \frac{1}{n\log n} \log N_{\text{DI}}(n,\lambda_1,\lambda_2).
\end{equation}

In the present work, we study deterministic identification over a general memoryless channel, mostly with finite output but an arbitrary input alphabet. 
After reviewing preliminaries about continuous channels, typicality, distance measures, and dimension theory (Section \ref{sec:prelim}), in Section \ref{sec:hypothesis-ID} we formulate a certain \emph{Hypothesis Testing Lemma} \ref{lemma:abstract-hypo-testing} (generalising a previous insight on the intersection of typical sets \cite{Ahlswede:newmethod}) which implies that for the construction of a deterministic identification code, it is sufficient to ensure pairwise reliable distinguishability of the output distributions. 
In Section \ref{sec:sqrt-metric} we analyse the geometric implications of this observation by providing a natural metric on code words towards both the code construction and the converse. 

In Section \ref{sec:results} we prove our main results showing that slightly superexponential codes are a general feature of channels with infinite input. We will start by analysing one of the most straightforward and relevant examples, the Bernoulli channel, which given a real number $x\in[0;1]$ as input produces a binary output $y\in\{0,1\}$ according to the Bernoulli distribution $B_x$. We find the following bounds for its superexponential capacity:
\[
  \frac{1}{4} \leq\dot{C}_{\text{DI}}(B) \leq \frac12.
\]
This result yields intuitions to analyse the general case of DI through arbitrary channels with finite output. We find that the size of the code scales superexponentially with the block length $n$, and we will derive bounds for the superexponential pessimistic capacity $\dot{C}_\text{DI}(W)$ [optimistic capacity $\dot{C}_\text{DI}^\text{opt}(W)$] in terms of the lower [upper] Minkowski dimension $d$ of a certain reparametrisation of the output set $W(\cX) \subset \cP(\cY)$ (see Section \ref{sec:sqrt-metric} and Subsection \ref{subsec:main}):
\begin{equation}
  \label{eq:C_results_intr}
  \frac14 d \leq \dot{C}_{\text{DI}}(W) \leq \frac12 d.
\end{equation}
This generalises our Bernoulli example and the previous results for the Gaussian and the Poisson channels [cf.~Equation~\eqref{eq:previousGaussPoisson}], which all have $d=1$. For the Poisson channel, in Subsection \ref{subsec:poisson} we actually improve the upper bound \eqref{eq:previousGaussPoisson} on the superexponential DI capacity with a straightforward adaptation of the general method, from $3/2$ to $1/2$. 
We devote Subsection \ref{subsec:d=0} to an exploration of the slightly singular but surprisingly rich case of Minkowski dimension zero. In this case, our bounds for the capacity [Equation~\eqref{eq:C_results_intr}] tell us that we cannot achieve DI codes with $n\log n$-scaling of the message length, but do not give any further relevant information on the actual maximum code size. We overcome this issue with Theorem \ref{thm:Abstract_Generalization}, which can be seen as a further generalisation on the previous results making it possible to bound a suitably defined capacity no matter the scaling.
In Subsection \ref{subsec:superactivation}, we proceed to show that the superadditivity of the lower Minkowski dimension can be exploited to prove superactivation for DI in a classical, memoryless and i.i.d.~communication setting, which we believe to be a first. Subsection \ref{subsec:example} contains the analysis of DI over a channel with both continuous input and output, for which we can determine the slightly superexponential DI capacity exactly. 

Finally, in Section \ref{sec:q-deterministic_ID} we show that the previous classical channel results can be generalised to classical-quantum channels with finite-dimensional output quantum systems; and, in particular, to identification over quantum channels under the restriction that only tensor products are used in the encoding. We conclude in Section \ref{sec:discussion} in two parts: on the one hand an extended reflection on the occurrence of optimistic and pessimistic DI capacities in our work, and their bearing on the observed superactivation. On the other hand, with a discussion of what we have achieved in the present work, giving us occasion to highlight several open questions.

\section{Preliminaries}
\label{sec:prelim}
Consider a channel $W:\mathcal{X}\rightarrow\mathcal{Y}$, where the output alphabet $\mathcal{Y}$ is a finite set, the case of principal interest in the present paper, though most of what we do extends to countably infinite $\mathcal{Y}$, and we shall discuss continuous measure spaces at a few points. The input alphabet $\mathcal{X}$ is an arbitrary measurable space. This means really that we have a measurable map $W:\mathcal{X}\rightarrow\cP(\mathcal{Y})$ from the input space to the probability simplex over $\mathcal{Y}$, the latter equipped with the Borel $\sigma$-algebra: every input $x\in\mathcal{X}$ is mapped to a probability distribution $W_x$ on $\mathcal{Y}$, which we identify with its probability vector of $|\mathcal{Y}|$ components. 

On the face of it, the measure theory of $\mathcal{X}$ might make the channel arbitrarily complicated. However, noting that the output distributions $W_x\in\cP(\mathcal{Y})$ determine everything about the channel's capacities, 
we can straight away move to a standardised version of the channel, by identifying $\mathcal{X}$ with the image of the channel, $\widetilde{\mathcal{X}} := W(\mathcal{X}) \subset \cP(\mathcal{Y})$. This leads to an equivalent channel with input alphabet $\widetilde{\mathcal{X}}$ and acting as the embedding map $\widetilde{\mathcal{X}} \subset \cP(\mathcal{Y})$. 
For the purposes of communication and identification in the Shannon setting of non-zero errors ($\lambda_1,\lambda_2>0$), we may then w.l.o.g.~assume that $\widetilde{\mathcal{X}}$ is closed, because otherwise we can pass to the closure which has the same asymptotic rate and error characteristics.

We shall require basic tools from typicality. Let us define, for block length $n$ and a point $x^n\in\cX^n$, the \emph{(entropy) conditional typical set} in $\cY^n$ as
\begin{equation}
  \label{eq:entropy-typical-set}
  \mathcal{T}_{x^n}^\delta \! := \left\{ y^n\in\mathcal{Y}^n \!: \left| \log W_{x^n}(y^n) \!+\! H(W_{x^n}) \right| \leq \delta\sqrt{n} \right\}.
\end{equation}
For sufficiently large $n$, the conditional typical set satisfies the following properties \cite[Lemmas~I.11~and~I.12]{winter:PhDThesis}:
\begin{enumerate}
    \item \emph{Unit probability}: the set $\cT_{x^n}^\delta$ asymptotically has probability $1$. Quantitatively, 
    \begin{equation}
      \label{eq:entropy-typical-prob}
        W_{x^n}(\mathcal{T}_{x^n}^\delta) 
         \geq 1 - \frac{K(|\mathcal{Y}|)}{\delta^2},
    \end{equation}
    where $K(d) = \left(\log\max\{d,3\}\right)^2$.
    
    \item \emph{Equipartition}: the probability of all conditionally typical sequences is approximately uniform. To be precise, for all $y^n\in\mathcal{T}_{x^n}^\delta$,
    \begin{equation}\begin{split}
      \label{eq:entropy-typical-pointwise}
        W_{x^n}(y^n)&\leq 2^{-H(W_{x^n})+\delta\sqrt{n}}, \\
        W_{x^n}(y^n)&\geq2^{-H(W_{x^n})-\delta\sqrt{n}}.
   \end{split}\end{equation}
\end{enumerate}

While the error bound \eqref{eq:entropy-typical-prob} would do for our rate proofs, where we may fix $\delta$ sufficiently large or let it grow very slowly with $n$, the following lemma gives (almost-)exponentially small bounds.

\begin{lemma}
\label{lemma:entropy-typical-prob}
For an arbitrary channel $W:\mathcal{X}\rightarrow\mathcal{Y}$, arbitrary block length $n$, $0<\delta\leq \sqrt{n}\log|\cY|$, and for any $x^n\in\mathcal{X}^n$, we have
\[
  W_{x^n}(\mathcal{T}_{x^n}^\delta) 
   \geq 1-2\exp\left(-\delta^2/36 K(|\mathcal{Y}|)\right),
\]
where $K(d) = (\log\max\{d,3\})^2$ as before.
\end{lemma}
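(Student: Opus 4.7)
The plan is to rewrite the atypicality event as a large-deviation event for a sum of $n$ independent mean-zero random variables, and then apply a Bernstein/Bennett-type concentration inequality. Three ingredients are needed: an additive decomposition, a uniform variance bound, and an exponential tail estimate.

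First, observe that for $y^n$ drawn from $W_{x^n}$,
\[
  -\log W_{x^n}(y^n) - H(W_{x^n}) = \sum_{t=1}^n Z_t, \qquad Z_t := -\log W_{x_t}(y_t) - H(W_{x_t}).
\]
Under $W_{x^n}$ the coordinates $y_t$ are independent with $y_t \sim W_{x_t}$, so the $Z_t$ are independent, and each has mean zero by definition of the Shannon entropy. The event $y^n \notin \cT_{x^n}^\delta$ is precisely $\bigl|\sum_{t=1}^n Z_t\bigr| > \delta\sqrt{n}$. Next I would uniformly bound the variance of each summand: a direct computation gives
\[
  \mathrm{Var}[Z_t] = \sum_{y\in\cY} W_{x_t}(y)\bigl(\log W_{x_t}(y)\bigr)^2 - H(W_{x_t})^2 \leq K(|\cY|),
\]
the last inequality being the standard bound on the variance of the self-information in terms of the alphabet size, obtainable by a short extremization argument (reducing to distributions supported on two points). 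This yields $\mathrm{Var}[Z] \leq n\, K(|\cY|)$ for the full sum.

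The summands are however not bounded above, because $-\log W_{x_t}(y_t)$ can be arbitrarily large when $W_{x_t}(y_t)$ is small, so Hoeffding's inequality is not directly available. Instead, I would set up a Chernoff bound based on a sub-Gaussian moment generating function estimate, $\EE\bigl[e^{\theta Z_t}\bigr] \leq \exp\bigl(c\,\theta^2 K(|\cY|)\bigr)$, valid for $|\theta|$ in a neighbourhood of $0$ whose width is controlled by $|\cY|$. Such an estimate follows from the R\'enyi-type identity $\EE\bigl[e^{\theta Z_t}\bigr] = 2^{-\theta H(W_{x_t})}\sum_y W_{x_t}(y)^{1-\theta}$ by Taylor-expanding around $\theta=0$: the linear term vanishes by centering, the quadratic term is the variance (already bounded), and the higher-order remainder is controllable uniformly while $|\theta|$ stays inside the sub-Gaussian window. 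A standard Chernoff optimization then converts this into
\[
  \Pr\bigl[|Z| > \delta\sqrt{n}\bigr] \leq 2\exp\!\bigl(-\delta^2/(36\, K(|\cY|))\bigr),
\]
with the constant $36$ absorbing the various technical losses.

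The main obstacle is precisely the unbounded upper tail of $Z_t$: it prevents a direct Hoeffding estimate and is the reason why the purely variance-based Chebyshev argument underpinning \eqref{eq:entropy-typical-prob} only gives polynomial decay. The careful work sits in establishing the sub-Gaussian MGF bound on a usable window of $\theta$, and the constraint $\delta \leq (\log|\cY|)\sqrt{n}$ in the statement is exactly what keeps the optimal Chernoff parameter $\theta^{\ast} \propto \delta/(\sqrt{n}\, K(|\cY|))$ inside that sub-Gaussian regime rather than crossing into the heavier-tailed sub-exponential branch.
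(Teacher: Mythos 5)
Your proposal is correct and follows essentially the same route as the paper: the same additive decomposition into independent self-information summands, the same Bernstein/Chernoff trick, and the same moment-generating-function identity $\EE\,e^{\theta Z_t} = 2^{-\theta H(W_{x_t})}\sum_y W_{x_t}(y)^{1-\theta} = \exp\bigl(\theta(H_{1-\theta}(W_{x_t})-H(W_{x_t}))\bigr)$, i.e.\ a R\'enyi-entropy expression. The only difference is how the MGF is controlled near $\theta=0$: you propose a second-order Taylor expansion with a uniform remainder bound (which you rightly flag as the careful part), whereas the paper outsources exactly that step to a quantitative continuity bound $|H_{1-\theta}(Q)-H(Q)|\leq 4|\theta|(\log\Upsilon)^2$ valid for $|\theta|\leq\frac{\log 3}{4\log\Upsilon}$, cited from the literature, and your observation about the role of the constraint $\delta\leq(\log|\cY|)\sqrt{n}$ matches precisely how the paper keeps the optimized $\theta$ inside that window.
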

\begin{proof}
Consider the random variables $Y^n=Y_1\ldots Y_n \sim W_{x^n}$ defined through the joint distribution of the channel output, and define $L_i := -\log W_{x_i}(Y_i)$, so that $-\log W_{x^n}(Y^n) = \sum_{i=1}^n L_i$. The $L_i$ are evidently independent random variables with $\EE L_i = H(W_{x_i})$ for all $i$. Since 
\[
Y^n \in \cT_{x^n}^\delta 
   \ \text{ iff }\ 
  H(W_{x^n})-\delta\sqrt{n} \leq \sum_{i=1}^n L_i \leq H(W_{x^n})+\delta\sqrt{n},
\]
we will be done once we prove
\begin{equation*}\begin{split}
  \label{eq:large-deviation}
  \Pr\left\{ \sum_{i=1}^n L_i > \sum_{i=1}^n H(W_{x_i}) + \delta\sqrt{n} \right\} 
   &\leq \exp\left(\frac{-\delta^2}{36 K(|\mathcal{Y}|)}\right), \\
  \Pr\left\{ \sum_{i=1}^n L_i < \sum_{i=1}^n H(W_{x_i}) - \delta\sqrt{n} \right\} 
   &\leq \exp\left(\frac{-\delta^2}{36 K(|\mathcal{Y}|)}\right).  
\end{split}\end{equation*}
To this end, we use the well-known Bernstein trick (cf.~\cite{DemboZeitouni}), which relies on calculating the moment generating function: for $|\lambda| < 1$, it is given by
\[\begin{split}
\EE \exp(\lambda L_i) 
= \sum_y W_{x_i}(y) W_{x_i}(y)^{-\lambda}
&= \sum_y W_{x_i}(y)^{1-\lambda}\\
&= \exp\Bigl(\lambda H_{1-\lambda}(W_{x_i})\Bigr),
\end{split}
\]
with $H_\alpha(Q) = \frac{1}{1-\alpha} \log\left(\sum_y Q(y)^\alpha\right)$ the R\'enyi entropy of order $\alpha$. By Markov's inequality, and using the shorthand $\tau = \delta/\sqrt{n}$, this gives us 
\begin{align}\label{eq:bernstein-upper}
\forall \lambda>0\quad
 \Pr&\left\{ \sum_{i=1}^n L_i > \sum_{i=1}^n H(W_{x_i}) + n\tau \right\}\\
&\nonumber \leq \exp\left( \lambda\sum_{i=1}^n \bigl(H_{1-\lambda}(W_{x_i})-H(W_{x_i})-\tau\bigr) \right),\\
\label{eq:bernstein-lower}
\forall \lambda<0\quad
 \Pr&\left\{ \sum_{i=1}^n L_i < \sum_{i=1}^n H(W_{x_i}) - n\tau \right\}\\
&\nonumber \leq \exp\left( \lambda\sum_{i=1}^n \bigl(H_{1-\lambda}(W_{x_i})-H(W_{x_i})+\tau\bigr) \right),
\end{align}
where we have used the independence of the different $\exp(\lambda L_i)$ to evaluate the expectation of their product. Since $H_{1-\lambda}(Q) \rightarrow H(Q)$ uniformly in $Q\in\cP(\cY)$ as $\lambda\rightarrow 0$, we get $H_{1-\lambda}(Q)-H(Q) \leq \frac{\tau}{2}$ ($\lambda>0$) and $H_{1-\lambda}(Q)-H(Q) \geq -\frac{\tau}{2}$ ($\lambda<0$) for sufficiently small $|\lambda|$, with a bound only depending on $|\cY|$ and $\tau$. This already is sufficient to see that we get some form of exponential bound. 

To get the explicit form claimed above, we invoke \cite[{Lemma~8}]{TCR:fully-q-AEP}, which in the simplified form that we need states that given $\Upsilon = 1+2\sqrt{|\cY|}$, 
\begin{equation}
  \label{eq:TCR-alpha}
H(Q) \geq H_\alpha(Q) \geq H(Q) - 4(\alpha-1)(\log\Upsilon)^2,
\end{equation}
for all $1\leq\alpha\leq 1+\log 3/(4\log\Upsilon)$; and
\begin{equation}
  \label{eq:TCR-beta}
H(Q) \leq H_\beta(Q) \leq H(Q) + 4(1-\beta)(\log\Upsilon)^2 
\end{equation}
for all $\forall 1\geq\beta\geq 1-\log 3/(4\log\Upsilon)$. This allows us choose $\lambda = \pm \frac{\tau}{8(\log\Upsilon)^2}$ in Equations \eqref{eq:bernstein-upper} and \eqref{eq:bernstein-lower}, respectively, because then $|\lambda| = |\alpha-1| = |1-\beta| \leq \log 3/(4\log\Upsilon)$ by our assumption $0<\tau\leq\log|\cY|$. This means that the entropy differences $H_{1-\lambda}(W_{x_i})-H(W_{x_i})$ are bounded by $\pm\frac{\tau}{2}$ and we get the following upper bound on the exponential probability bounds: 
\[\begin{split}
  \text{RHS of \eqref{eq:bernstein-upper},\, \eqref{eq:bernstein-lower}}
  &\leq \exp\left(-n\tau^2/16(\log\Upsilon)^2\right)\\
  &\leq \exp\left(-\delta^2/36K(|\cY|)\right),
\end{split}\]
where we have recalled $\delta=\tau\sqrt{n}$, and used $\log\Upsilon \leq \log 3 + \frac12\log|\cY| \leq \frac32\log\max\{3,|\cY|\}$.
\end{proof}

For our code constructions and converses, we need the packing and covering of general sets in arbitrary dimensions. The box (or sphere) \emph{packing} problem consists of finding the maximum number of disjoint hypercubes (or hyperspheres) of given side (radius), each centered within the given set. Similarly, the \emph{covering} problem consists of finding the minimum number of hypercubes (or hyperspheres) that can cover the whole set. These packing and covering numbers are fundamental in geometry, especially in the theory of fractals, as they can be used to define the dimension of a subset in Euclidean space. The shape of the basic set used to pack or cover is not important, as long as it is bounded and has non-empty interior. 

For concreteness, for a non-empty bounded subset $F$, denote $\Gamma_\delta(F)$ the minimum number of closed balls of radius $\delta$ centered at points in $F$ such that their union contains $F$ (covering), and $\Pi_\delta(F)$ the maximum number of pairwise disjoint open balls of radius $\delta$ centered at points in $F$ (packing). Note that the centers of the balls in either case form a subset $F_0 \subset F$. For a covering, $F_0$ has to be such that for every $x\in F$ there exists $x_0\in F_0$ with $d(x,x_0)\leq\delta$, which is otherwise known as $\delta$-net. For a packing, the requirement is that for any $x_0\neq x_1\in F_0$, $d(x_0,x_1) \geq 2\delta$. A fundamental observation is that for every $\delta>0$ and $\eta>0$,
\begin{equation}
  \label{eq:covering-vs-packing}
  \Pi_{\delta+\eta}(F) 
    \leq \Gamma_\delta(F) 
    \leq \Pi_{\delta/2}(F). 
\end{equation}
To see the left-hand inequality, note that the centers $F_0$ of any $(\delta+\eta)$-packing have pairwise distance $\geq 2\delta+2\eta > 2\delta$, hence each ball of any $\delta$-covering can contain at most one element from $F_0$. To see the right-hand inequality, consider any maximal $\delta/2$-packing with centers $F_0$ (i.e.,~one that cannot be increased by adding more points); by doubling the radii we necessarily obtain a covering (of the same cardinality), for if it were not a covering, this would mean that there exists an $x\in F$ at distance larger than $2\delta$ from every $x_0\in F_0$, contradicting the maximality of the packing.

With these, we can define the \emph{Minkowski dimension} (also known as covering dimension, Kolmogorov dimension, or entropy dimension, see \cite[Chapter 3]{Falconer:fractal}) as
\begin{equation}\label{eq:Minkowski_Dimension}
d_M(F) = \lim_{\delta\rightarrow0} \frac{\log \Gamma_\delta(F)}{-\log\delta} 
  = \lim_{\delta\rightarrow0} \frac{\log \Pi_\delta(F)}{-\log\delta}.
\end{equation}

The Minkowski dimension captures geometric complexity beyond the scope of the classical topological dimension. Indeed, for smooth manifolds the topological and the Minkowski dimensions coincide. More generally, an open ball, or any set that contains an open ball in $\RR^D$, have Minkowski dimension $D$, and that is the maximum for subsets of $\RR^D$. Additionally, any smooth $d$-dimensional manifold smoothly embedded or immersed in $\RR^D$ will have Minkowski dimension $d$. However, a disparity between topological and Minkowski dimensions arises from the absence of smoothness in fractals and irregular structures. A good example is the Koch snowflake (Figure \ref{fig:snowflake}), which has topological dimension 1 (as it is a curve), but its Minkowski dimension is $\log_3 4 > 1$, reflecting the capacity of the curve to occupy space at finer scales. Also, the celebrated Weierstrass function, which is continuous but nowhere differentiable, has a graph that, despite being the continuous image of an interval on the real line, has Minkowski dimension strictly between $1$ and $2$.

\begin{figure}[ht]
    \centering
    \includegraphics[width=\linewidth]{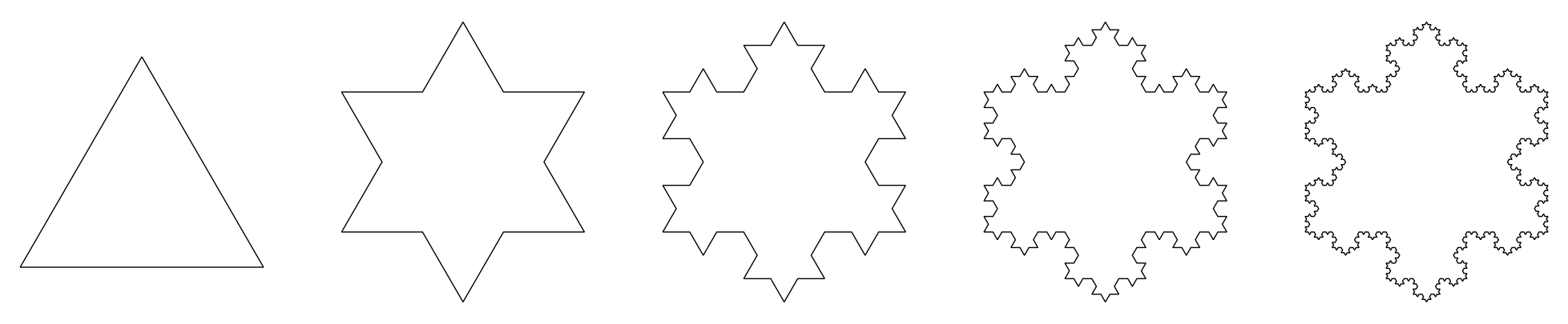}
    \caption{Iterative construction of the Koch fractal. In each iteration, a triangle bend is added to each side of the current iteration, all of which are hence polygons, though the limit is not.}
    \label{fig:snowflake}
\end{figure}

Most well known fractals and the Weierstrass example are self-similar across all different scales, which facilitates the existence of the limit in Equation \eqref{eq:Minkowski_Dimension}. However, for more general and irregular cases, it is possible (and quite common) that the above limit does not exist. In that case, we define the \emph{upper} and \emph{lower Minkowski dimensions} through the limit superior and limit inferior, respectively: 
\begin{align}
 \overline{d}_M(F) &:= \limsup_{\delta\rightarrow0}\frac{\log \Gamma_\delta(F)}{-\log\delta} 
  = \limsup_{\delta\rightarrow0} \frac{\log \Pi_\delta(F)}{-\log\delta}, \\
 \underline{d}_M(F) &:= \liminf_{\delta\rightarrow0}\frac{\log \Gamma_\delta(F)}{-\log\delta}
  = \liminf_{\delta\rightarrow0} \frac{\log \Pi_\delta(F)}{-\log\delta}.
\end{align}
That the limits are the same whether we use covering ($\Gamma$) or packing ($\Pi$) follows from the chain of inequalities \eqref{eq:covering-vs-packing}. We further remark that, as a finite covering of $F$ is automatically a covering of its closure $\overline{F}$, the (upper and lower) Minkowski dimensions remain invariant when passing from $F$ to $\overline{F}$. For further reading on dimension theory of general sets and fractal geometry, we refer the reader to the excellent textbook \cite{Falconer:fractal}, as well as the texts \cite{Robinson:dimensions,Fraser:dimensions}.

Throughout the proofs and discussion, we will also need some distance measures between probability distributions which are defined next. The \emph{total variation distance} is a statistical distance measure which coincides with half the $L^1$ norm between the probability density functions. Let $P$ and $Q$ be two probability functions defined on a finite or countably infinite measurable space $\cL$, then the total variation distance is given by 
\[
  \frac{1}{2}\|P-Q\|_1 
    = \sum_{\ell\in\cL}\frac{1}{2} \lvert{P(\ell)-Q(\ell)}\rvert.
\]
The \emph{Bhattacharyya coefficient} (in quantum information called \emph{fidelity}) is given by $F(P,Q)=\sum_{\ell\in\cL}\sqrt{P(\ell)Q(\ell)}$, and it is related to the total variation distance by the following bounds:
\begin{equation}
  \label{eq:Classical_FvdG}
  1-F(P,Q) \leq \frac12 \|P-Q\|_1 \leq \sqrt{1-F(P,Q)^2}.
\end{equation}

\section{Hypothesis testing and identification}
\label{sec:hypothesis-ID}
The outputs of reliable identification codes (deterministic or randomised) necessarily form a set of probability distributions that are pairwise well distinguishable. Indeed, in a general ID code for a memoryless channel according to Definition \ref{def:ID_code}, the concatenation of the encoding $P_j$ and the channel $W^n$ is a probability distribution $P_jW^n\in\cP(\cY^n)$. Now, the code requires the error bounds of first and second kind in Equations \eqref{eq:I-kind} and \eqref{eq:II-kind}, which directly imply that the hypothesis tests $\{\cE_j,\cY^n\setminus\cE_j\}$ defined by the decoding sets of a good ID code can distinguish the distributions on the output reliably: $\frac12\|P_jW^n-P_kW^n\|_1 \geq 1-\lambda_1-\lambda_2$. 
In other words, the output distributions $P_jW^n$ of an identification code have to form a packing in the probability simplex with respect to the total variation metric.

In the present section we analyse to which extent the converse holds: does pairwise distinguishability on the output distributions of a channel imply the existence of a good identification code?
It is not hard to come up with examples showing that in general, the answer to this question is no. In fact, Watanabe \cite{Watanabe:minmax} has given a much more complicated necessary and sufficient hypothesis testing criterion for ID codes, involving the convex hulls of all but one message. However, adding the condition that these output distributions are product distributions (as is the case for deterministic codes) and have similar entropy, turns out to be enough. 

We start by showing a general form for binary hypothesis testing of product distributions that directly applies to deterministic identification. Namely, we prove that typical sets are suitable for binary hypothesis testing. This reduces the identification problem to finding output probability distributions that are sufficiently separated in total variation distance, under the condition that the inputs that generated those distributions on the output have similar entropy. Indeed, Theorem \ref{thm:ID-from-packing} below describes a deterministic identification code that can be built directly from a sufficiently distant packing in the output, simplifying the analysis a great deal.

The following general lemma draws inspiration from \cite[Appendix]{Ahlswede:newmethod}, in particular Lemma I${}_1$, which gives a bound on the intersection of conditionally typical sets. In Ahlswede's work, only finite input alphabets $\mathcal{X}$ are considered and the typical sets are implemented by way of strong conditional typicality. We lift the discrete restriction and relax the typical sets to entropy typicality. 

\begin{lemma}
\label{lemma:abstract-hypo-testing}
Given $0<\delta\leq\sqrt{n}\log|\cY|$, consider two points $x^n,{x'}^n \in \mathcal{X}^n$ such that $1-\frac12\left\|W_{x^n}-W_{{x'}^n}\right\|_1 \leq \epsilon$. Then,
\[\begin{split}
  W_{{x'}^n}(\mathcal{T}_{x^n}^\delta) 
   &\leq 2\exp\left(-\delta^2/36K(|\mathcal{Y}|)\right)\\
   &\quad+ \epsilon\left(1 + 2^{2\delta\sqrt{n}}2^{H(W_{x^n})-H(W_{{x'}^n})}\right).
\end{split}\]
\end{lemma}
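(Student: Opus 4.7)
The plan is to decompose $\mathcal{T}_{x^n}^\delta$ so as to cleanly separate an atypical tail (handled by equipartition) from a well-controlled bulk (handled by the total-variation separation). My first move is to carve off the part of $\mathcal{T}_{x^n}^\delta$ that is atypical for the competing hypothesis, writing
\[
  W_{{x'}^n}(\mathcal{T}_{x^n}^\delta)
     = W_{{x'}^n}\bigl(\mathcal{T}_{x^n}^\delta \setminus \mathcal{T}_{{x'}^n}^\delta\bigr)
     + W_{{x'}^n}\bigl(\mathcal{T}_{x^n}^\delta \cap \mathcal{T}_{{x'}^n}^\delta\bigr).
\]
The first summand is at most $W_{{x'}^n}(\mathcal{Y}^n \setminus \mathcal{T}_{{x'}^n}^\delta) \leq 2\exp(-\delta^2/36 K(|\mathcal{Y}|))$ by Lemma~\ref{lemma:entropy-typical-prob}, producing the first contribution to the claimed bound.

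The main work is the intersection, and I plan to exploit the elementary identity $\sum_{y^n}\min\{W_{x^n}(y^n),W_{{x'}^n}(y^n)\} = 1-\tfrac12\|W_{x^n}-W_{{x'}^n}\|_1$, so that the hypothesis reads $\sum_{y^n}\min \leq \epsilon$. This invites splitting the intersection into $A=\{W_{x^n}\geq W_{{x'}^n}\}$ and $B=\{W_{x^n}<W_{{x'}^n}\}$, so that on each piece one of the distributions coincides pointwise with the minimum of the two.

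On $A$ every $W_{{x'}^n}(y^n)$ already equals the pointwise minimum, so the partial sum over $A\cap\mathcal{T}_{x^n}^\delta\cap\mathcal{T}_{{x'}^n}^\delta$ is bounded simply by $\sum_{y^n}\min \leq \epsilon$. On $B$ the minimum is $W_{x^n}$, and here I will chain two equipartition bounds: the lower bound $W_{x^n}(y^n)\geq 2^{-H(W_{x^n})-\delta\sqrt{n}}$ on $\mathcal{T}_{x^n}^\delta$ turns $\sum_{y^n\in B}W_{x^n}(y^n)\leq\epsilon$ into the cardinality estimate $|B\cap\mathcal{T}_{x^n}^\delta|\leq \epsilon\,2^{H(W_{x^n})+\delta\sqrt{n}}$, and the upper bound $W_{{x'}^n}(y^n)\leq 2^{-H(W_{{x'}^n})+\delta\sqrt{n}}$ on $\mathcal{T}_{{x'}^n}^\delta$ converts this cardinality back into a mass estimate $W_{{x'}^n}(B\cap\mathcal{T}_{x^n}^\delta\cap\mathcal{T}_{{x'}^n}^\delta)\leq \epsilon\,2^{H(W_{x^n})-H(W_{{x'}^n})+2\delta\sqrt{n}}$, yielding precisely the exponential term of the claimed bound.

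The main conceptual point is recognising that splitting by which distribution is pointwise larger is the right move: it is what converts a total-variation statement into measure-theoretic control on each of the two pieces. Beyond that, the argument is just bookkeeping; the factor of $2$ in front of $\epsilon$ in the stated bound is loose and would absorb any small loss one wishes to accommodate (for instance merging the two intersection contributions in a single union bound, or slightly loosening the equipartition). I do not foresee any genuine technical obstacle beyond tracking constants.
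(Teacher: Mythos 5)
Your proof is correct and follows essentially the same route as the paper: decompose $W_{{x'}^n}(\mathcal{T}_{x^n}^\delta)$ into (i) a part outside $\mathcal{T}_{{x'}^n}^\delta$ handled by Lemma~\ref{lemma:entropy-typical-prob}, (ii) a part inside the test set controlled by the total-variation separation, and (iii) a remainder inside both typical sets but outside the test, controlled by converting a small $W_{x^n}$-mass to a cardinality bound via lower equipartition and then back to a small $W_{{x'}^n}$-mass via upper equipartition. The only difference is cosmetic: you realize the test explicitly as the likelihood-ratio set $A=\{W_{x^n}\geq W_{{x'}^n}\}$ using the identity $\sum_{y^n}\min\{W_{x^n},W_{{x'}^n}\}=1-\tfrac12\|W_{x^n}-W_{{x'}^n}\|_1\leq\epsilon$, whereas the paper invokes the abstract existence of a test $\mathcal{S}$ with both errors $\leq 2\epsilon$; your version happens to tighten the constant ($\epsilon$ in place of $2\epsilon$), which is still dominated by the stated bound as you note.
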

\begin{proof}
The trace distance condition means that there exists a subset (test) $\mathcal{S}\subset\mathcal{Y}^n$ that well distinguishes the two distributions $W_{x^n}$ and $W_{{x'}^n}$: 
\begin{equation}
  \label{eq:optimal-hypthesis-test}
  W_{x^n}(\mathcal{S}) \geq 1-\epsilon, \quad
  W_{{x'}^n}(\mathcal{S}) \leq \epsilon.
\end{equation}
We now observe 
\[\mathcal{T}_{x^n}^\delta \subset \left((\mathcal{T}_{x^n}^\delta\cap\mathcal{T}_{{x'}^n}^\delta)\setminus\mathcal{S}\right) \cup \mathcal{S} \cup \left(\mathcal{Y}^n\setminus\mathcal{T}_{{x'}^n}^\delta\right),
\]
so via the union bound, we can upper-bound the probability in question by the sum of three terms. Indeed, by Equations \eqref{eq:optimal-hypthesis-test} and \eqref{eq:entropy-typical-prob}, or better still Lemma \ref{lemma:entropy-typical-prob}, we have 
\[
W_{{x'}^n}(\mathcal{S}) \leq \epsilon, \  \
W_{{x'}^n}\left(\mathcal{Y}^n\setminus\mathcal{T}_{{x'}^n}^\delta\right) 
\leq 2\exp\left(-\delta^2/36K(|\mathcal{Y}|)\right)
\]
for the second and third probability, respectively. For the first term, which is the nontrivial one, we take from Equation \eqref{eq:optimal-hypthesis-test} 
\[
  W_{x^n}(\mathcal{T}_{x^n}^\delta\setminus\mathcal{S}) 
   \leq W_{x^n}(\mathcal{Y}^n\setminus\mathcal{S}) 
   \leq \epsilon,
\]
hence using the lower estimate in Equation \eqref{eq:entropy-typical-pointwise} we arrive at the cardinality bound
\[
  |\mathcal{T}_{x^n}^\delta\setminus\mathcal{S}|
   \leq \epsilon\cdot 2^{H(W_{x^n})+\delta\sqrt{n}}.
\]
Finally, applying \eqref{eq:entropy-typical-pointwise} once again, but for $W_{{x'}^n}$ and using the upper estimate, we conclude
\[
  W_{{x'}^n}\left((\mathcal{T}_{x^n}^\delta\cap\mathcal{T}_{{x'}^n}^\delta)\setminus\mathcal{S}\right)
   \leq \epsilon\cdot 2^{2\delta\sqrt{n}} 2^{H(W_{x^n})-H(W_{{x'}^n})}.
\]
Putting the three terms together finishes the proof.
\end{proof}

In general, for an identification code with errors $\lambda_1$ and $\lambda_2$, the output distributions of two code words $u,v\in\mathcal{X}^n$ necessarily have to have ``large'' total variation distance: $\frac12\|W_u-W_v\|_1 \geq 1-\lambda_1-\lambda_2$. What this means is that the $W_u$ of the code have to form a good packing in the probability simplex with respect to the total variation metric.
However, on its own, that is not sufficient, because a large pairwise total variation distance just means that there is a test well distinguishing any given pair, in other words, $\mathcal{S}$ depends on two code words rather than one. The above result means that if the outputs $W_u$ in addition have roughly equal entropy, then the entropy-typical set for each code word is a decent surrogate of the optimal test. Crucially, it therefore only depends on the one code word and is hence universal to test against all other possible code words. As a corollary, we obtain the following construction method for DI codes for $W$:

\begin{theorem}
\label{thm:ID-from-packing}
Let a memoryless channel $W:\mathcal{X}\rightarrow\mathcal{Y}$, block length $n$ and $0<\delta\leq\sqrt{n}\log|\cY|$ be given, and assume $N$ points $u_j\in\mathcal{X}^n$ ($j\in[N]$) with the property
\(
  1-\frac12\|W_{u_j}-W_{u_k}\|_1 \leq 2^{-3\delta\sqrt{n}}
\), 
for all $j\neq k$, i.e.~the $W_{u_j}$ form a packing in $\cP(\cY^n)$. Then there is a subset $\mathcal{C}\subset[N]$ of cardinality $|\mathcal{C}| \geq N/\left\lceil n\log|\mathcal{Y}|\right\rceil$ such that the collection
\(
  \left\{\left(u_j,\cE_j=\mathcal{T}_{u_j}^\delta\right) : j\in\mathcal{C}\right\}
\)
is an $(n,|\cC|,\lambda_1,\lambda_2)$-DI code, with 
\[\begin{split}
  \lambda_1 &= 2\exp\left(-\delta^2/36K(|\mathcal{Y}|)\right),
  \quad\\
  \lambda_2 &= 2\exp\left(-\delta^2/36K(|\mathcal{Y}|)\right) + 3\exp\left(-\delta\sqrt{n}\right).
\end{split}\]
\end{theorem}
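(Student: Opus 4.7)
The plan is to handle the two error kinds separately using the tools already assembled, and to pass from the full index set $[N]$ to a subcode $\mathcal{C}$ by a single pigeonhole step on the entropy of the output distributions $W_{u_j}$.

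First, the first-kind error $W_{u_j}(\mathcal{Y}^n\setminus\mathcal{T}_{u_j}^\delta)\leq\lambda_1$ with $\lambda_1 = 2\exp(-\delta^2/36K(|\mathcal{Y}|))$ is immediate from Lemma~\ref{lemma:entropy-typical-prob} applied at block length $n$ with parameter $\delta$, and this bound requires no selection of a subset: it holds for every $j\in[N]$.

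For the second-kind error, I apply Lemma~\ref{lemma:abstract-hypo-testing} to each pair $j\neq k$ with $x^n = u_k$ and ${x'}^n = u_j$, plugging in $\epsilon = 2^{-3\delta\sqrt{n}}$ which is exactly the packing hypothesis. This yields
\[
  W_{u_j}(\mathcal{T}_{u_k}^\delta)
   \leq 2\exp\!\bigl(-\delta^2/36K(|\mathcal{Y}|)\bigr) + 2\cdot 2^{-3\delta\sqrt{n}}\bigl(1 + 2^{2\delta\sqrt{n}}\cdot 2^{H(W_{u_k})-H(W_{u_j})}\bigr).
\]
The only quantity not under direct control is the entropy gap $H(W_{u_k})-H(W_{u_j})$. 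Once it is bounded by $1$, the bracketed expression is at most $1+2\cdot 2^{2\delta\sqrt{n}}$, and the second summand collapses to $2\cdot 2^{-3\delta\sqrt{n}} + 4\cdot 2^{-\delta\sqrt{n}} \leq 6\exp(-\delta\sqrt{n})$, giving exactly the announced $\lambda_2$.

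The only real obstacle is enforcing this entropy-closeness uniformly over the chosen subcode. I remove it by a pigeonhole bin decomposition: since $0 \leq H(W_{u_j}) \leq n\log|\mathcal{Y}|$ for every $j\in[N]$, I partition $[N]$ according to $\lfloor H(W_{u_j})\rfloor$, which produces at most $\lceil n\log|\mathcal{Y}|\rceil$ nonempty bins. Taking $\mathcal{C}$ to be a largest bin, one has $|\mathcal{C}| \geq N/\lceil n\log|\mathcal{Y}|\rceil$, and within $\mathcal{C}$ any two entropies $H(W_{u_j}), H(W_{u_k})$ lie in a common interval of length $1$. The computation above then applies uniformly to all $j\neq k\in\mathcal{C}$, certifying that $\{(u_j,\mathcal{T}_{u_j}^\delta):j\in\mathcal{C}\}$ is a deterministic $(n,|\mathcal{C}|,\lambda_1,\lambda_2)$ ID code with the claimed parameters.
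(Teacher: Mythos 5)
Your proof is correct and follows exactly the same strategy as the paper's: pigeonhole on $\lfloor H(W_{u_j})\rfloor$ to get an entropy-balanced subcode of size $\geq N/\lceil n\log|\mathcal{Y}|\rceil$, then Lemma~\ref{lemma:entropy-typical-prob} for the first-kind error and Lemma~\ref{lemma:abstract-hypo-testing} (with $\epsilon=2^{-3\delta\sqrt{n}}$ and entropy gap $\leq 1$) for the second-kind error. In fact you spell out the arithmetic $2\cdot 2^{-3\delta\sqrt{n}}(1+2\cdot 2^{2\delta\sqrt{n}}) = 2\cdot 2^{-3\delta\sqrt{n}}+4\cdot 2^{-\delta\sqrt{n}}\leq 6\cdot 2^{-\delta\sqrt{n}}$ more explicitly than the paper does, which simply cites the two lemmas.
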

\begin{proof}
We start by dividing $[N]$ into $S=\left\lceil n\log|\mathcal{Y}|\right\rceil$ parts $\mathcal{C}_s$ ($s=1,\ldots,S$) in such a way that all $j\in\mathcal{C}_s$ have $H(W_{u_j})\in[s-1;s]$. This is possible since the entropies of the distributions $W_{x^n}$ are in the interval $[0;n\log|\mathcal{Y}|]$.

Then, choose $\mathcal{C}$ as the largest $\mathcal{C}_s$, which satisfies the cardinality lower bound by the pigeonhole principle. Finally, we note that within $\mathcal{C}$, the entropies $H(W_{u_j})$ differ by at most $1$ from each other, so we can apply Lemmas \ref{lemma:entropy-typical-prob} and \ref{lemma:abstract-hypo-testing} to bound the error probabilities of first and second kind as claimed.
\end{proof}

Since we need to achieve a packing with pairwise total variation distance being almost exponentially close to $1$, and because of 
\begin{equation*}
  \label{eq:fidelity-bound}
  1-\frac12\|W_{x^n}-W_{{x'}^n}\|_1
   \leq F\left(W_{x^n},W_{{x'}^n}\right)
   =    \prod_{i=1}^n F\left(W_{x_i},W_{x_i'}\right),
\end{equation*}
where we have invoked the classical Fuchs-van de Graaf relations \eqref{eq:Classical_FvdG} first and the multiplicativity of the fidelity under tensor products after, we are motivated to study the negative logarithm of this difference: 
\begin{equation}\begin{split}
  \label{eq:log-fidelity-bound}
  -\ln\!\left(\! 1-\frac12\|W_{x^n}-W_{{x'}^n}\|_1 \!\right) 
   &\geq \sum_{i=1}^n \!-\ln F\left(W_{x_i},W_{x_i'}\right) \\
   &= \frac12 \sum_{i=1}^n -\ln F\left(W_{x_i},W_{x_i'}\right)^2 \\
   &\geq \frac12 \sum_{i=1}^n \!\left[1-F\left(W_{x_i},W_{x_i'}\right)^2\right]\!,
\end{split}
\end{equation}
and using again the classical Fuchs-van de Graaf relations we get
\[
  -\ln\!\left(\! 1-\frac12\|W_{x^n}-W_{{x'}^n}\|_1 \!\right) \geq \frac12 \sum_{i=1}^n \left(\frac12\|W_{x_i}-W_{x_i'}\|_1\!\right)^2.
\]
Notice now that this equals the square of a norm on $\left(\mathbb{R}^{\cY}\right)^n$, defined as 
\[
  \frac12 \left\| \bigoplus_{i=1}^n W_{x_i} - \bigoplus_{i=1}^n W_{x_i'} \right\|_{1,2} 
   := \sqrt{\sum_{i=1}^n \left(\frac12\|W_{x_i}-W_{x_i'}\|_1\right)^2},
\]
which is an instance of a \emph{mixed $(p,q)$-norm}, in this case a $2$-norm of a vector of (Schatten) $1$-norms.
Thus, for our purposes it will be enough that we find a $\sqrt{6\delta}\sqrt[4]{n}$-packing of points $\bigoplus_{i=1}^n W_{x_i}$ in $\mathcal{P}(\mathcal{Y})^n$, with respect to the metric induced by this norm. 

Actually, any norm defined on the single-letter output probability simplex would work equally as all norms on $\RR^{\cY}$ are equivalent, so changing the trace norm for another norm would only result in a universal constant prefactor that does not depend on the block length. This means in particular that we could perfectly well use an $f\sqrt{\delta}\sqrt[4]{n}$-packing in the metric induced by the overall Euclidean norm on the points $\bigoplus_{i=1}^n W_{x_i}$, with $f$ some constant depending on the initial norm.

\section[{Spherisation and metric on product distributions}]{Spherisation and the natural metric\protect\\ on product distributions}
\label{sec:sqrt-metric}
It is worth pausing at this juncture to get a deeper appreciation of the geometric requirements, both necessary and sufficient, of a DI code. We have early on reflected on the fact that any general ID code necessarily is a packing in $\cP(\mathcal{Y}^n)$ with respect to the total variation distance, and at the end of the previous section we have introduced a metric on $\cP(\mathcal{Y})^n$ to formulate a sufficient criterion for a packing to yield a DI code. Let us introduce some more concepts to connect these two strands into a unified and natural metric on product distributions. 

For a finite or countably infinite set $\cL$, and two probability distributions $P,Q \in \cP(\cL)$, define the \emph{purified distance} $p(P,Q) := \sqrt{1-F(P,Q)^2}$, where $F(P,Q)$ the fidelity (Bhattacharyya coefficient) introduced earlier. This is indeed a metric, its name coming from the more general version for quantum states \cite{TCR:duality}. 

For a probability distribution $P\in\cP(\cL)$, define the unit vector $\sqrt{P}:=(\sqrt{P(\ell)}:\ell\in\cL) \in \RR^{\cL}$. The image of $\cP(\cL)$ under the square root map is the non-negative orthant sector of the unit hypersphere in $\RR^{\cL}$, which we shall denote $S_+(\cL,1)$. For a channel $W:\cX \rightarrow \cY$, the image of $\widetilde{\cX}$ we denote 
\[
  \sqrt{\!\widetilde{\cX}} = \{\sqrt{W_x} : x\in\cX \} 
  \subset S_+(\cY,1).
\]
Note that the square root map is a homeomorphism between $\cP(\cY)$ and $S_+(\cY,1)$, but with respect to the $L^1$-norm on the former and the $L^2$-norm on the latter it is not Lipschitz continuous (though its inverse is Lipschitz), which means that the Minkowski dimension of $\sqrt{\!\widetilde{\cX}}$ is larger than or equal to that of $\widetilde{\cX}$. However, the map is $\frac12$-H\"older, i.e., the distance of image points is bounded by the square root of the distance between the pre-images, implying that in general
\begin{equation}
  \label{eq:dimension-comparison}
  d_M(\widetilde{\cX}) 
   \leq d_M\!\left(\!\sqrt{\!\widetilde{\cX}}\right) 
   \leq 2d_M(\widetilde{\cX}),
\end{equation}
where $d_M$ can also be replaced in all three terms by the lower and upper Minkowski dimensions $\underline{d}_M$ and $\overline{d}_M$, respectively \cite{Robinson:dimensions,Fraser:dimensions}. 
The upper bound is attained by certain sets, but if $\widetilde{\cX}$ is a finite union of smooth submanifolds, then so is $\sqrt{\!\widetilde{\cX}}$, and their Minkowski dimensions coincide with the maximum manifold dimension of the pieces. 

The following lemma says that the metric space $\cP(\cY)$ with the purified distance is equivalent to $S_+(\cY,1)$ with the $L^2$-norm metric: 
\begin{lemma}
\label{lemma:purified-HS}
For two probability distributions $P,Q\in\cP(\cL)$ and their purified distance $p(P,Q) = \sqrt{1-F(P,Q)^2}$, it holds
\[
  p(P,Q) \leq \left| \sqrt{P}-\sqrt{Q} \right|_2 \leq \sqrt{2}p(P,Q).
\]
\end{lemma}
\begin{proof}
By direct calculation, we get
\[\begin{split}
   \left| \sqrt{P}-\sqrt{Q} \right|_2^2
    &= \sum_{\ell\in\cL} \left(\sqrt{P(\ell)}-\sqrt{Q(\ell)}\right)^2 \\
    &= \sum_{\ell\in\cL} \left(P(\ell)+Q(\ell) -2\sqrt{P(\ell)}\sqrt{Q(\ell)}\right) \\
    &= 2(1-F(P,Q)).
\end{split}\]
Now, since $1\leq 1+F(P,Q) \leq 2$, we obtain the lower bound $1-F(P,Q)^2$ and the upper bound $2\bigl(1-F(P,Q)^2\bigr)$ for the latter expression. 
\end{proof}

With these concepts, let us return to Equation \eqref{eq:log-fidelity-bound}, where in the third line we recognize the purified distance. Hence, with Lemma \ref{lemma:purified-HS}, 
\begin{equation}\begin{split}
  \label{eq:trace-distance-lower-spherization}
  -\ln\!\left(\! 1-\frac12\|W_{x^n}-W_{{x'}^n}\|_1 \!\right) 
   \!\geq& \frac12 \sum_{i=1}^n p\left(W_{x_i},W_{x_i'}\right)^2 \\
   \geq& \frac14 \sum_{i=1}^n \left| \sqrt{W_{x_i}}-\sqrt{W_{x_i'}}\right|_2^2 \\
   =&    \frac14 \left| \bigoplus_{i=1}^n \sqrt{W_{x_i}} - \bigoplus_{i=1}^n \sqrt{W_{x_i'}} \right|_2^2\!\!\!.
\end{split}\end{equation}
In other words, in the finite and countably infinite case we can lower-bound how (exponentially) close to $1$ the total variation distance between two product distributions is in terms of the direct sum of the spherisations of the tensor factors. 

On the other hand, we have observed that in an $(n,N,\lambda_1,\lambda_2)$\,-\,DI code, $\frac12\|W_{u_j}-W_{u_k}\|_1 \geq 1-\lambda_1-\lambda_2 =: \delta >0$ for all $j\neq k$. Hence, with $u_j=x^n$, $u_k={x'}^n$, we can upper-bound the total variation distance as follows: start by using the Fuchs-van de Graaf relation \eqref{eq:Classical_FvdG},
\[\begin{split}
\delta^2 
\leq \left( \frac12\|W_{x^n}-W_{{x'}^n}\|_1 \right)^2 
&\leq 1-F\left(W_{x^n},W_{{x'}^n}\right)^2\\
&=1 - \prod_{i=1}^n F(W_{x_i},W_{x_i'})^2,
\end{split}\]
and the identity $1-\prod_{i=1}^n a_i = \sum_{i=1}^n (1-a_i)\prod_{j=i+1}^n a_j$ with the convention that an empty product is $1$ and using it with $a_i = F(W_{x_i},W_{x_i'})^2 \in [0;1]$ (this identity is proved easily by induction or by expanding the right hand side terms and telescoping):
\begin{equation}\begin{split}
  \label{eq:trace-distance-upper-spherization}
  \delta^2 
   &\leq    1 - \prod_{i=1}^n F(W_{x_i},W_{x_i'})^2 \\
   &\leq \sum_{i=1}^n \left( 1-F(W_{x_i},W_{x_i'})^2 \right) \\
   &=    \sum_{i=1}^n p(W_{x_i},W_{x_i'})^2 \\
   &\leq \sum_{i=1}^n \left| \sqrt{W_{x_i}} - \sqrt{W_{x_i'}} \right|_2^2 \\
   &=    \left| \bigoplus_{i=1}^n \sqrt{W_{x_i}} - \bigoplus_{i=1}^n \sqrt{W_{x_i'}} \right|_2^2,
\end{split}\end{equation}
where the last inequality follows from Lemma \ref{lemma:purified-HS}. 

The important take-away from Equations \eqref{eq:trace-distance-lower-spherization} and \eqref{eq:trace-distance-upper-spherization} is that both the lower and upper bounds on the total variation distance are expressed in terms of the square of the same metric distance, which conveniently is the Euclidean distance on $\bigoplus_{i=1}^n \sqrt{W_{x_i}} \in S_+(\cY,1)^n \subset \RR^{n\cY}$. 

We remark here that also 
\begin{equation}
  \label{eq:d_p-definition}
  d_p\left( \bigoplus_{i=1}^n W_{x_i}, \bigoplus_{i=1}^n W_{x_i'} \right)
    := \sqrt{\sum_{i=1}^n p(W_{x_i},W_{x_i'})^2}
\end{equation}
defines a metric, this time on $\bigoplus_{i=1}^n W_{x_i} \in \cP(\cY)^n \subset \RR^{n\cY}$, which gives slightly tighter (but ultimately equivalent) bounds on the total variation distance. 
When $\cY$ is uncountable and the $W_x$ are all absolutely continuous with respect to an underlying measure $\lambda$, this metric still makes good sense. Namely, in such a case, the fidelity is still well-defined and given by 
\[
  F(W_x,W_{x'}) = \int_{\cY} \lambda({\rm d}y) \sqrt{\frac{{\rm d}W_x}{{\rm d}\lambda}(y)} \sqrt{\frac{{\rm d}W_{x'}}{{\rm d}\lambda}(y)}.
\]
By the way, also the above correspondence with a spherisation as well as Lemma \ref{lemma:purified-HS} can then be saved, by defining the functions $\sqrt{\frac{{\rm d}W_x}{{\rm d}\lambda}(y)}$, which are pointwise non-negative unit vectors in the Hilbert space $L^2(\cY,\lambda)$ of square-integrable functions on $\cY$ with respect to the measure $\lambda$.

\section[{Slightly superlinear DI capacity}]{Slightly superlinear DI capacity\protect\\ for channels with general input}
\label{sec:results}
In the present section we prove the main results of the paper. In the first two subsections we analyse two particular scenarios which can be solved in an easy manner using volume arguments on the output probability set. 
The first one studies the Bernoulli channel $B$ which has some additional interest as it can be simulated by any continuous channel $W$, meaning that the lower bound performance that we prove for $B$ can automatically be achieved by all other such channels $W$. The second studies the Poisson channel proving by very simple arguments that its superexponential capacity is upper bounded by $\dot{D}_\text{DI}(P)\leq\frac12$, improving the previously known bound of $\frac32$. They both serve as a prelude to the general case, which could seem unnecessarily abstract otherwise.

We move to the general case in Subsection \ref{subsec:main} using a more abstract construction with covering and packing arguments on a modified output probability space. We find bounds for the pessimistic and optimistic DI capacity in terms of the lower and upper Minkowski dimensions of the modified probability set respectively. Subsection \ref{subsec:d=0} analyses the surprisingly rich case where the Minkowski dimension is zero which allows us to further generalise the previous results into the very strong Theorem \ref{thm:Abstract_Generalization}. We also describe a scenario for superactivation in communication over memoryless and identically distributed classical channels in Subsection \ref{subsec:superactivation}. Finally, in Subsection \ref{subsec:example}, we study DI over a particular channel $A$ with continuous input and output which attains the upper bound for the capacity.

\subsection[{Bernoulli channel}]{Deterministic identification over the Bernoulli channel}
\label{subsec:bernoulli}
\begin{definition}
The \emph{Bernoulli channel} $B:[0;1] \rightarrow \{0,1\}$ on input $x\in[0;1]$ (a real number) outputs a binary variable according to Bernoulli distribution $B_x$ with parameter $x$:
\begin{equation}\label{eq:BernoulliChannel}
     B(y|x) = B_x(y) = xy + (1-x)(1-y)
      = \begin{cases}
          x   & \text{for } y=1, \\
          1-x & \text{for } y=0.
        \end{cases}
\end{equation}
\end{definition}
On block length $n$ we have a continuous set of inputs $x^n = x_1\dots x_n \in[0;1]^n$ which are points in the unit hypercube of dimension $n$. On the other hand, we have a finite set of $2^n$ possible outputs $y^n\in\{0,1\}^n$. 


\begin{theorem}
\label{thm:Bernoulli}
The superexponential deterministic identification capacity of the Bernoulli channel $B$ is bounded, for every $\lambda_1,\lambda_2>0$ with $\lambda_1+\lambda_2<1$, as
\[
  \frac14\leq\dot{C}_{\text{DI}}(B) 
    \leq \limsup_{n\rightarrow\infty} \frac{1}{n\log n} \log N_{\text{DI}}(n,\lambda_1,\lambda_2)\leq \frac12.
\]
\end{theorem}
\begin{proof}
We start with the lower bound (achievability), remembering Theorem \ref{thm:ID-from-packing} where we have seen that finding a $\sqrt{6\delta}\sqrt[4]{n}$-packing of points $\bigoplus_{i=1}^n B_{x_i}$ in the output probability set $\mathcal{P}(\mathcal{Y})^n$ given some $\delta\geq0$, with respect to the metric induced by the mixed $(1,2)$-norm is enough to prove the existence of a good DI code. We can simplify this condition thanks to the structure of the Bernoulli channel which has total variation distance between output distributions $B_x$ and $B_{x'}$ given by
\[\begin{split}
  \frac12\| B_{x}-B_{x'} \|_1 
    &= \sum_{y=0}^1 \frac12|B_x(y) - B_{x'}(y)|\\
    &= \frac12|1-x-1+x'|+\frac12|x-x'| = |x-x'|.
\end{split}\]
Thus, the mixed $(1,2)$-norm between Bernoulli distributions $B_{x^n}$ and $B_{{x'}^n}$ is equal to the euclidean norm between the corresponding input probability distributions $x^n$ and ${x'}^n$:
\begin{equation*}
\begin{split}
  \frac12 \left\| \bigoplus_{i=1}^n B_{x_i} - \bigoplus_{i=1}^n B_{x_i'} \right\|_{1,2} 
  &= \sqrt{\sum_{i=1}^n \left(\frac12\| B_{x_i}-B_{x_i'} \|_1\right)^2}\\
  &= \sqrt{\sum_{i=1}^n \lvert{x_i-x_i'}\rvert^2}\\ 
  &= \left\|x^n-{x'}^n\right\|_2.  
\end{split}
\end{equation*}
Now we just need an Euclidean $\sqrt{6\delta}\sqrt[4]{n}$-packing on the input sequence set $\cX^n=[0;1]^n$ to construct a good DI code. Inspired by \cite{DI-poisson}, we can count the elements of such a packing using volume arguments similar to the Minkowski-Hlawka theorem \cite{CS:Packings_lattices, Cohn:Order_Packing}. The packing can be studied as an arrangement $\{ S_{u_i}(n,r) \}$ of $N$ (large) disjoint open $n$-balls of radius $r=\sqrt{3\delta/2}\sqrt[4]{n}$ in the unit $n$-hypercube.

Specifically, consider a maximal packing arrangement $\mathcal{V}$ in $[0;1]^n$ with $N$ balls of radius $r$. Thus, no more balls with centers in $[0;1]^n$ can be added without overlap, meaning that no point in the hypercube is at a distance greater than $2r$ from all the sphere centers (or the packing would not be saturated, see Figure \ref{fig:extended_packing}). Therefore, by doubling the radius of the balls (which multiplies their total volume by $2^n$), we cover all the points of the hypercube. This means that the density $\Delta_n(\mathcal{V})$ of the original packing is at least $2^{-n}$:
\[
  \Delta_n(\mathcal{V}) 
  = \frac{\text{Vol}\left[C_0(n,1)\cap\bigcup_{i=1}^N S_{u_i}(n,r)\right]}{\text{Vol}\left[C_0(n,1)\right]}\geq2^{-n},
\]
where $\text{Vol}\left[C_0(n,s)\right]=s^n$ is the volume of the $n$-dimensional hypercube of side $s$. In our case ($s=1$), $\text{Vol}\left[C_0(n,1)\right]=1$. 
Also, each $n$-ball of radius $r$ and centered at ${u_i}\in\mathcal{X}^n$ is an $n$-dimensional hypersphere and therefore its volume is given by
\begin{equation}\label{eq:n-SphereVolume}
\text{Vol}\left[S_{u_i}(n,r)\right]=\frac{\pi^{n/2}r^n}{\Gamma\left(\frac{n}{2}+1\right)}.
\end{equation}
The number of spheres in the packing can thus be lower-bounded
\[\begin{split}
N&=\frac{\text{Vol}\left[\bigcup_{i=1}^N S_{u_i}(n,r)\right]}{\text{Vol}\left[S_{u_1}(n,r)\right]}\\
&\geq\frac{\text{Vol}\left[C_0(n,1)\cap\bigcup_{i=1}^N S_{u_i}(n,r)\right]}{\text{Vol}\left[S_{u_1}(n,r)\right]}\\
&=\Delta_n(\mathcal{V})\frac{\text{Vol}\left[C_0(n,1)\right]}{\text{Vol}\left[S_{u_1}(n,r)\right]}.
\end{split}\]
Inserting the known values of the density and the volumes, and taking the logarithm on both sides of the inequality, we get
\begin{equation}\begin{split}
  \label{eq:quarter-lower-bound}
  \log N 
   &\geq \log2^{-n}\frac{\Gamma\left(\frac{n}{2}+1\right)}{\left(2r\sqrt{\pi}\right)^n} \\
   &= \log \Gamma\left(\frac{n}{2}+1\right) -n\left[\log\left(2r\sqrt{\pi}\right)-1\right] \\
   &= \frac{n}{2}\log n - n\log\left(r\right) - O(n)\\
   &= \frac{n}{2}\log n - n\log\left(n^\frac14\right) - O(n)\\
   &=\frac{n}{4}\log n-O(n).
\end{split}\end{equation}
We find that the number of spheres in the packing grows superexponentially in $n$, the exponent scaling as $\frac14 n\log n$ to leading order. This sphere packing on the input guarantees that the packing conditions on the output needed to apply Theorem \ref{thm:ID-from-packing} are fulfilled, and therefore we can construct a good DI code with this superexponential size. 


For the converse part we start with an arbitrary $(n,N,\lambda_1,\lambda_2)$-DI code $\cC:=\{(u_j,\cE_j) : j\in[N]\}$, which by construction has to satisfy $\frac12\|B_{u_j}-B_{u_k}\|_1 \geq 1-\lambda_1-\lambda_2 =: 2\sqrt{2}r$ for any pair $j\neq k$. Let us express the code words as the sequences $u_j:=x^n=x_1\dots x_n$ and $u_k:=x'^n=x'_1\dots x'_n$. From this, invoking the bound \eqref{eq:trace-distance-upper-spherization} at the end of Section \ref{sec:sqrt-metric}, we get 
\begin{equation}
\label{eq:Bern_norm_packing}
  8r^2 \leq \sum_{i=1}^n \left| \sqrt{B_{x_i}} - \sqrt{B_{x'_i}} \right|_2^2,
\end{equation}
Now, for the Bernoulli channel the different $\sqrt{B_x}=(\sqrt{x},\sqrt{1-x})$ are unit vectors in the positive quadrant of the two-dimensional Euclidean plane. Let us further define the projection $\widetilde{B}_x=(\tilde{x},\sqrt{2}-\tilde{x})$ of each $\sqrt{B_x}$ from the origin onto the straight line connecting $(0,\sqrt{2})$ and $(\sqrt{2},0)$, see Figure \ref{fig:Bern_reparametrization}. Clearly $|\sqrt{B_x}-\sqrt{B_{x'}}|_2\leq|\widetilde{B}_x-\widetilde{B}_{x'}|_2=\sqrt{2}|\tilde{x}-\tilde{x'}|$. Notice also that $\tilde{x}$ is nothing but yet another reparametrisation of the input letter $x$. Indeed, it is clear that
\[
  \frac{\sqrt{1-x^2}}{x}
    = \frac{\sqrt{2}-\tilde{x}}{\tilde{x}},
    \text{ i.e., }
    \tilde{x}=\frac{\sqrt{2}x}{\sqrt{1-x^2}+x}.
\]

\begin{figure*}[ht]
\centering
\begin{minipage}{0.488\textwidth}
  \centering
    \includegraphics[width=0.768\linewidth]{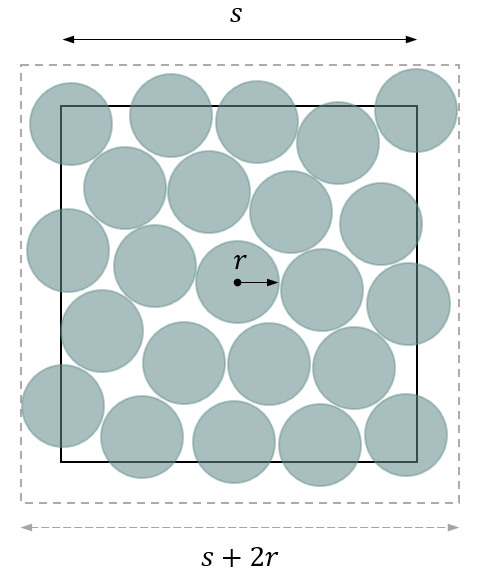}
    \caption{Packing in a cube of side $s$ ($s=1$ in the direct part), and an extension containing all balls. In the converse, $s=\sqrt{2}$.}
    \label{fig:extended_packing}
\end{minipage}%
\hspace{0.0162\textwidth}
\begin{minipage}{0.488\textwidth}
 \centering
 \vspace{0.53cm}
    \includegraphics[width=0.84\linewidth]{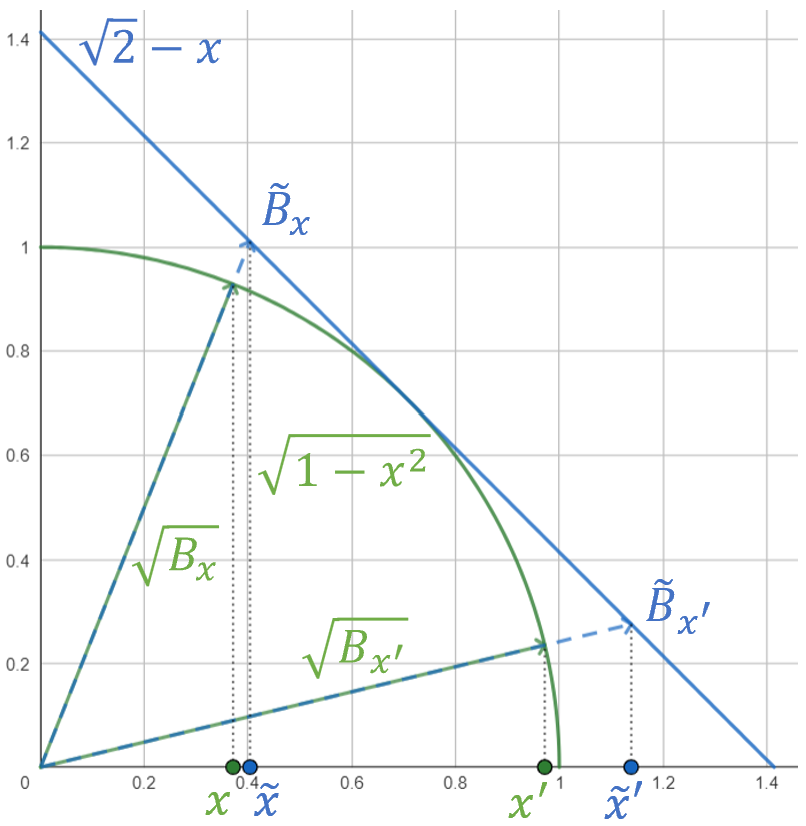}
    \caption{Projection of two unit vectors $\sqrt{B_x}$ and $\sqrt{B_{x'}}$ onto the tangent line $\sqrt{2}-x$, 
    and the extended vectors $\widetilde{B}_x$ and $\widetilde{B}_{x'}$.
    }
    \label{fig:Bern_reparametrization}
\end{minipage}
\end{figure*}
This allows us to lower-bound the Euclidean distance separating any two different reparametrised code words $\tilde{x}^n\neq\tilde{x}'^n\in(0;\sqrt{2})^n$ of the identification code. Continuing from Equation \eqref{eq:Bern_norm_packing}, we obtain: 
\begin{equation*}
\begin{split}
    2\sqrt{2}r
    \leq \sqrt{\sum_{i=1}^n\left|\sqrt{B_{x_i}}-\sqrt{B_{x'_i}}\right|_2^2}
    &\leq \sqrt{\sum_{i=1}^n2\left|\tilde{x_i}-\tilde{x'_i}\right|^2}\\
    &= \sqrt{2}\left| \tilde{x}^n-\tilde{x'}^n \right|_2.
\end{split}
\end{equation*}
We can immediately define an arrangement of disjoint balls $S_{u_j}(n,r)$ of radius $r$ around each code word $u_j\in\cC$ and use volume arguments to upper bound the number $N$ of balls (and therefore, the number of code words) that can be packed into the hypercube $C(n,\sqrt{2})$ (the reparametrised input space). To do that, we define the extended hypercube $C(n,\sqrt{2}+2r)$ such that it contains all possible balls centered inside the input space (see Figure \ref{fig:extended_packing} below). Then, $\text{Vol}[C(n,\sqrt{2}+2r)]\geq\bigcup_{u_i=1}^{N} \text{Vol}[S_{u_i}(n,r)]$. As all the balls have the same radius, and therefore the same volume, we can deduce
\[N\leq\frac{\text{Vol}[C(n,\sqrt{2}+2r)]}{\text{Vol}[S_{u_i}(n,r)]}=\left(\frac{\sqrt{2}+2r}{r\sqrt{\pi}}\right)^n\Gamma\left(\frac{n}{2}+1\right).\]
We conclude by taking the logarithm,
\(
\log N\leq\frac{n}{2}\log n+n\log\frac{\sqrt{2}+2r}{r\sqrt{\pi}}=\frac{n}{2}\log n+O(n).
\)
\end{proof}

The Bernoulli channel is a very significant example, not only in itself but also when restricting the input to a subset $\cX \subset [0;1]$, i.e. $B\vert_{\cX} : \cX \rightarrow \{0,1\}$ (see Subsection \ref{subsec:d=0} below). In particular, the case $\cX=[a;b]\subset[0;1]$ with $0\leq a < b \leq 1$ is important, as we shall discuss now. 

\begin{corollary}
\label{cor:Bernoulli-restricted}
Just as in Theorem \ref{thm:Bernoulli}, 
\[
  \frac14\leq\dot{C}_{\text{DI}}(B\vert_{[a;b]}) 
    \leq \limsup_{n\rightarrow\infty} \frac{1}{n\log n} \log N_{\text{DI}}(n,\lambda_1,\lambda_2)\leq \frac12.
\]
\end{corollary}
\begin{proof}
The upper (converse) bound is clear, since every DI code for the restricted Bernoulli channel $B\vert_{[a;b]}$ is one for the unrestricted Bernoulli channel. 
For the direct part, notice that we can simply repeat the packing argument from the proof of Theorem \ref{thm:Bernoulli}, only with the smaller hypercube $[a;b]^n \subset [0;1]^n$. This will affect the $O(n)$ term of the code size bound \eqref{eq:quarter-lower-bound}, but not the leading order term $\frac{1}{4}n\log n$.
\end{proof}

To understand how relevant this is, observe that any channel with a truly continuous input, in the sense that $\widetilde{\cX} = W(\cX) \subset \cP(\cY)$ contains a continuous curve, can simulate the Bernoulli channel restricted to a subinterval $[a;b]\subset[0;1]$, via suitable classical pre- and post-processing. Namely, consider such a curve $\mathfrak{C} = \{W_{x_t} : 0 \leq t \leq 1\} \subset \widetilde{\cX}$ (the image of the unit interval under the continuous mapping $t \mapsto W_{x_t}$) with its end points $P=W_{x_0}$ and $Q=W_{x_1}$. As these points are distinct, there exists a binary test $\cY = \cY_0 \,\stackrel{.}{\cup}\, \cY_1$ (with $\stackrel{.}{\cup}$ the disjoint union) distinguishing them statistically, w.l.o.g. $a := P(\cY_1) < Q(\cY_1) =: b$. This test can be written as a deterministic channel $T:\cY \rightarrow \{0,1\}$ acting as $T(y) = z$ iff $y\in\cY_z$. The images of $P$ and $Q$ under this channel are $T(P)=B_a$, $T(Q)=B_b$. By continuity and the intermediate-value theorem, for every $s\in[a;b]$ there is a $t\in[0;1]$ such that $W_{x_t}(\cY_1) = s$, i.e. $T(W_{x_t}) = B_s$. In other words, $T(W(\cX))$ contains the image of $B\vert_{[a;b]}$, and hence $\dot{C}_{\text{DI}}(W) \geq \dot{C}_{\text{DI}}(T\circ W) \geq \dot{C}_{\text{DI}}(B\vert_{[a;b]}) \geq \frac14$. Thus we have proved the following: 

\begin{corollary}
\label{thm:continuous}
Consider any channel $W:\cX\rightarrow\cY$ such that $\widetilde{\cX} = W(\cX) \subset \cP(\cY)$ contains a continuous curve. Then, its superexponential deterministic identification capacity is lower-bounded as 
\(
  \dot{C}_{\text{DI}}(W) \geq \frac14. 
\)
\qed
\end{corollary}

It is noteworthy that this corollary makes no assumptions on the alphabet $\cY$. On the other hand, for discrete $\cY$ we can drop the continuity assumption and will get a much deeper understanding and improvement of the lower bound, in Subsection \ref{subsec:main} below. 
However, for Gaussian channels (AWGN or general fading) as well as Poisson channels (see the next Subsection), both with peak as well as average power constraints, the same reasoning as in Corollary \ref{cor:Bernoulli-restricted} applies. To realise the above argument, it is enough to fix a line segment in the input parameter space that satisfies the power constraints in its entirety. By reduction to the input-restricted Bernoulli channel we thus reproduce the prior achievability results $\dot{C}_{\text{DI}}(G) \geq \frac14$ \cite{SPBD:DI_power,DI-fading} and $\dot{C}_{\text{DI}}(P) \geq \frac14$ \cite{DI-poisson,DI-poisson_mc}.

\subsection[{Poisson channel}]{Deterministic identification over the Poisson channel}
\label{subsec:poisson}
\begin{definition}
\label{def:Poisson-channel}
The \emph{discrete time Poisson channel} $P:\RR_{\geq 0} \rightarrow \NN_0$ on a positive real input $x \in \cX = \RR_{\geq0}$ outputs a non-negative integer $y \in \cY = \NN_0$ with Poisson distributed probability
\begin{equation}
  P_x(y) = P(y|x) = \frac{e^{-x}x^y}{y!}.
\end{equation}
\end{definition}
The channel is memoryless for $n\in\NN$ uses of the channel, with continuous inputs $x^n=x_1\dots x_n\in\RR_{\geq 0}^n$ and outputs $y^n=y_1\dots y_n\in\NN_0^n$. 
The Poisson channel arises as a key channel model for practical 6G networks in the context of molecular \cite{DI-poisson_mc,GMN:molecularComs} and optical communications \cite{Cao:PhDThesis}. Deterministic identification over the discrete time Poisson channel $P$ was studied in \cite{DI-poisson}, and the following superexponential DI capacity bounds under peak power ($x^n$ such that for all $i$, $x_i\leq E_{\max}$) and average power ($x^n$ such that $\sum_i x_i \leq nE$) constraints were found:
\[
  \frac14 \leq \dot{C}_\text{DI}(P) \leq \frac32.
\]
Inspired by the tools in the previous Subsection \ref{subsec:bernoulli}, we present a simplified converse proof that brings us to an improvement on the upper bound. 

The lower bound from the first DI analysis of the Poisson channel \cite{DI-poisson} works when imposing both average and peak power constraints. On the other hand, our new improved converse proof can be shown without the need of a peak power constraint. So we relax the hypothesis by imposing only an average power constraint: given a positive energy $E > 0$, we demand that $\sum_{i=1}^nx_i\leq nE$ for every code word $u_i=x^n$.

\begin{theorem}
\label{thm:Poisson-improved}
The superexponential deterministic identification capacity over the Poisson channel $P$ is bounded, for every $\lambda_1,\lambda_2>0$ with $\lambda_1+\lambda_2<1$, as
\[
    \frac{1}{4}\leq \dot{C}_\text{DI}(P)\leq\limsup_{n\rightarrow\infty}\frac{1}{n\log n}\log N_\text{DI}(n,\lambda_1,\lambda_2)\leq\frac12.
\]
\end{theorem}
\begin{proof}
Only the converse part is needed. We follow the first steps from the previous section, starting with an $(n,N,\lambda_1,\lambda_2)$-DI code for the Poisson channel. By definition, the output distributions of distinct code words $x^n$ and $x'^n$ must have $\frac12\|P_{x^n}-P_{x'^n}\|_1\geq1-\lambda_1-\lambda_2=:\delta$. We use again the distance relations in Equation~\eqref{eq:Classical_FvdG}, to rewrite this in terms of fidelity between code word letters:
\[\begin{split}
    (2r)^2:=-\ln(1-\delta^2)&\leq-\ln[F(P_{x^n},P_{x'^n})^2]\\
    &=\sum_{i=1}^n-\ln[F(P_{x_i},P_{x'_i})^2].
 \end{split}\]
Now notice that the fidelity between two Poisson distributions is given by
\begin{equation*}
\begin{split}
F(P_{x_i},P_{x'_i})^2
  &=\left[\sum_{y\in\NN_0}\sqrt{\frac{e^{-x_i} x_{i}^{y}}{y!}\frac{e^{-x'_{i}} x'_{i}{}^{y}}{y!}}\right]^2\\
  &=\left[e^{-\frac{x_i+x'_i}{2}}\sum_{y\in\NN_0}\frac{\sqrt{x_ix'_i}^y}{y!}\right]^2\\
  &=\exp\left[-x_i-x'_i+2\sqrt{x_ix'_i}\right]\\
  &=\exp\left[-\left(\sqrt{x_i}-\sqrt{x'_i}\right)^2\right],
\end{split}
\end{equation*}
which means that 
\(
  (2r)^2 
  \leq \sum_{i=1}^n \left(\sqrt{x_i}-\sqrt{x'_i}\right)^2.
\)
Let us now reparameterise each letter defining $s_i:=\sqrt{x_i}$. Then we have a minimum Euclidean distance between the code words $s^n$ and $s'^n$ (with $s^n=s_i\dots s_n$):
\begin{equation}
  \label{eq:PoissonDistance}
  2r\leq\sqrt{\sum_{i=1}^n|s_i-s'_i|^2} 
      = \|{s^n-s'^n}\|_2.
\end{equation}
\begin{figure}[ht]
    \centering
    \includegraphics[scale=0.26]{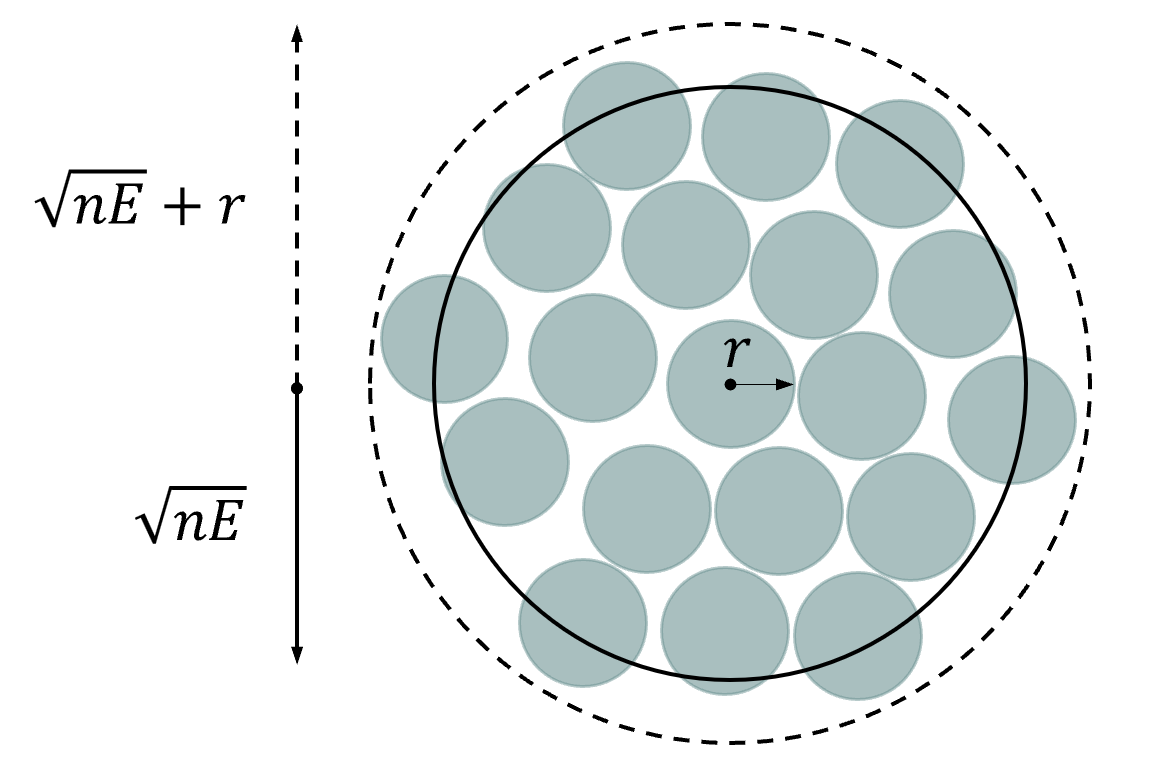}
    \caption{An $r$-ball packing with centers in the big sphere $S(n,\sqrt{nE})$ imposed by the power constraint, and the extended sphere $S(n,\sqrt{nE}+r)$ that contains all.}
    \label{fig:poisson}
\end{figure}

Also, the power constraint becomes $\sum_{i=1}^ns_i^2\leq nE$, which imposes that all code words lie inside an $n$-hypersphere $S_0(n,\sqrt{nE})$ of radius $\sqrt{nE}$ around the origin. 
We can create a set of non-intersecting balls $S_{u_i}(n,r)$ of radius $r$ centered at each code word (a ball packing inside the hypersphere that contains all code words under the power constraint, see Figure \ref{fig:poisson}). We use volume arguments to calculate the number of elements $N$ in such a packing. Clearly 
\(
\text{Vol}[S(n,\sqrt{nE}+r)]\geq \text{Vol}\left[\bigcup_{u_i=1}^{N}S_{u_i}(n,r)\right],
\)
and we know the volume of the hyperspheres [Equation~\eqref{eq:n-SphereVolume}], so
\[
N\leq\frac{\text{Vol}[S(n,\sqrt{nE}+r)]}{\text{Vol}[S_{u_i}(n,r)]}=\frac{(\sqrt{nE}+r)^n}{r^n}\leq\frac{(2\sqrt{nE})^n}{r^n}.
\]
We complete the proof by taking the logarithm and bounding
\[
\log N \leq\frac{n}{2}\log n+\frac{n}{2}\log E+n-n\log r=\frac{n}{2}\log n + O(n).
\]
That is, we arrive at the upper bound 
\(
\dot{C}_\text{DI}(P)\leq\frac12,
\)
tightening the previous converse \cite{DI-poisson,DI-poisson_mc}.
\end{proof}

\subsection[{General case: pessimistic/optimistic capacities}]{Pessimistic and optimistic DI capacities in the general case}
\label{subsec:main}
The previous particular cases can be solved using volume arguments on the input set thanks to the particular structure of the channels, which allowed us to move from some distance between output probability distributions (which define the DI code) to an Euclidean distance between the corresponding input sequences (or a reparametrisation of those), which is required to define volumes on the input. 
However, this transformation is not available for most channels, meaning that, in general, we do not have access to volume arguments on the level of the code words to bound the number of elements in a code. We show that we can use instead Euclidean covering and packing arguments on the single-letter level, which lead to general bounds for the pessimistic and the optimistic capacities in terms of the lower and upper Minkowski dimensions, respectively. These bounds generalise all superexponential deterministic identification results to date. 

\begin{theorem}
\label{thm:pessimistic_final}
The superexponential deterministic identification capacity $\dot{C}_{\text{DI}}(W)$ for the channel $W:\cX\rightarrow\cY$ is bounded in terms of the lower Minkowski dimension of the set $\sqrt{\!\widetilde{\cX}}$ as follows (for any $\lambda_1,\lambda_2>0$ with $\lambda_1+\lambda_2<1$):
\[\begin{split}
\frac14\underline{d}_M\!\left(\!\sqrt{\!\widetilde{\cX}}\right)
    \leq \dot{C}_{\text{DI}}(W)
    &\leq \liminf_{n\rightarrow\infty}\frac{1}{n\log n}\log N_{\text{DI}}(n,\lambda_1,\lambda_2)\\
    &\leq \frac12\underline{d}_M\!\left(\!\sqrt{\!\widetilde{\cX}}\right).
\end{split}\]
\end{theorem}
\begin{proof}
We begin analyzing the strong converse with an arbitrary $(n,N,\lambda_1,\lambda_2)$-DI code $\cC=\{(u_j,\cE_j) : j\in[N]\}$, which by construction has $\frac12\|{W_{u_j}-W_{u_k}}\|_1 \geq 1-\lambda_1-\lambda_2 =: 3r$ for any two different code words $u_j=x_1\dots x_n$ and $u_k = x'_1\dots x'_n$ in $\cX^n$. We invoke now the inequality \eqref{eq:trace-distance-upper-spherization} to deduce
\[
  (3r)^2 \leq \frac12\|{W_{u_j}-W_{u_k}}\|_1^2
   \leq \sum_{i=1}^n \left|\sqrt{W_{x_i}}-\sqrt{W_{x'_i}}\right|_2^2.
\]
In other words, we find that the $\bigoplus_{i=1}^n \sqrt{W_{x_i}}$ form a packing with constant separation $3r$. For a general set $\widetilde{\cX}$, we do not have access to the volume arguments of the Bernoulli case, so we choose an $\frac{r}{\sqrt{n}}$-covering (with respect to the Euclidean distance) of $\sqrt{\!\widetilde{\cX}}$, i.e.~a subset $\cX_{r\!/\!\sqrt{n}}\subset\cX$ such that for every $x\in\cX$ there is a $\xi\in\cX_{r\!/\!\sqrt{n}}$ with $\left| \sqrt{W_{x}}-\sqrt{W_{\xi}} \right|_2 \leq \frac{r}{\sqrt{n}}$. Thus, for the code word $u=x^n=x_1\ldots x_n$ we find another $u'=\xi^n=\xi_1\ldots\xi_n\in\cX_{r\!/\!\sqrt{n}}^n$ such that 
\begin{equation}
  \label{eq:last:common_converse_step}
  \sqrt{ \sum_{i=1}^n \left| \sqrt{W_{x_i}}-\sqrt{W_{\xi_i}} \right|_2^2 } \leq r.
\end{equation}
We can now define a modified code that uses the elements in the covering as the new code words $\{(u'_j,\cE_j) : j\in[N]\}$ which will have first and second kind of errors given by $\lambda'_1\leq\lambda_1+r$ and $\lambda'_2\leq\lambda_2+r$. Therefore, 
\[
\lambda'_1+\lambda'_2\leq\lambda_1+\lambda_2+2r=1-r<1,
\]
implying that different messages $j\neq k$ must have different encodings $u'_j\neq u'_k\in\cX_{r\!/\!\sqrt{n}}^n$, so the mapping from $\cC \ni u_j \mapsto u_j' \in \cX_{r\!/\!\sqrt{n}}^n$ is one-to-one, and hence $N\leq|\cX_{r\!/\!\sqrt{n}}|^n$. The only thing remaining is to upper-bound the size of the covering in terms of the lower Minkowski dimension. 
To do that, let us choose a sequence of $\delta_k>0$ converging to $0$ in such a way that 
\[
  \lim_{k\rightarrow\infty} \frac{\log\Gamma_{\delta_k}\!\left(\!\sqrt{\!\widetilde{\cX}}\right)}{-\log\delta_k} 
  = \underline{d}_M\!\left(\!\sqrt{\!\widetilde{\cX}}\right) =: d.
\]
Now let $n_k=\lfloor r^2/\delta_k^2\rfloor$ be the corresponding block length, so that the $\delta_k$-covering is also a $\frac{r}{\sqrt{n_k}}$-covering $\cX_{r\!/\!\sqrt{n_k}}$. 
By definition of the covering dimension, for every $\epsilon>0$ there is a $K>0$ such that we can find coverings with 
\[
  |\cX_{r\!/\!\sqrt{n_k}}| 
  \leq \left(\frac{K}{\delta_k}\right)^{d+\epsilon}
  \leq \left(\frac{K\sqrt{2n_k}}{r}\right)^{d+\epsilon}.
\]
Subsequently, a code of block length $n_k$ can have at most $N\leq|\cX_{r\!/\!\sqrt{n_k}}|^{n_k}$ elements, and that being so, we obtain 
$\log N \leq \frac12(d+\epsilon)n_k\log n_k+O(n_k)$. Finally, as $\epsilon$ can be made arbitrarily small and $(n_k)$ is a subsequence of the natural numbers converging to infinity, we find
\[\begin{split}
  \dot{C}_{\text{DI}}(W) 
   &= \liminf_{n\rightarrow\infty}\frac{\log N_{\text{DI}}(n,\lambda_1,\lambda_2)}{n \log n}\\
   &\leq \liminf_{k\rightarrow\infty} \frac{\log N_{\text{DI}}(n_k,\lambda_1,\lambda_2)}{n_k \log n_k}
   \leq \frac12 d,
\end{split}\]
completing the converse proof.

For the direct part, at block length $n$, we start with an Euclidean packing of $\sqrt{\!\widetilde{\cX}}$, i.e.~a subset $\cX_{n^{-\alpha}}\subset\cX$ of the input space such that for each pair $x\neq x'\in\cX_{n^{-\alpha}}$,
\[
  \lvert{\sqrt{W_x}-\sqrt{W_{x'}}}\rvert_2\geq n^{-\alpha}.
\]
By definition of the lower Minkowski dimension, that we now redefine as $d=\underline{d}_M\!\left(\!\sqrt{\!\widetilde{\cX}}\right)$, for every $\epsilon > 0$ there is a positive constant $K>0$ such that we can find such a packing with $\lvert{\cX_{n^{-\alpha}}}\rvert \geq (Kn^{\alpha})^{d-\epsilon}$ elements. Then, for $x^n,{x'}^n\in\cX_{n^{-\alpha}}^n$, we have, following Equation \eqref{eq:trace-distance-lower-spherization},
\begin{equation}
\label{eq:Main_achievability_Eucl}
\begin{split}
  -\ln\!\left(\! 1-\frac12\|W_{x^n}-W_{{x'}^n}\|_1 \!\right) 
   &\geq \sum_{i=1}^n \!-\ln F\left(W_{x_i},W_{x_i'}\right) \\
   &= \frac12 \sum_{i=1}^n -\ln F\left(W_{x_i},W_{x_i'}\right)^2 \\
   &\geq \frac12 \sum_{i=1}^n \!\left[1-F\left(W_{x_i},W_{x_i'}\right)^2\right]\\
   &\geq \frac14 \sum_{i=1}^n \left|\sqrt{W_{x_i}} -\sqrt{W_{x_i'}}\right|_2^2.
\end{split}
\end{equation}
By construction, each term in the latter sum is either $0$ or at least $n^{-2\alpha}$. Let us now choose a code $\cC_t\subset\cX_{n^{-\alpha}}^n$ with a minimum Hamming distance between its elements:
\[
\forall x^n\neq {x'}^n\in\cC_t,\quad d_H(x^n,{x'}^n) \geq tn,
\]
where $t>0$ is an arbitrarily small constant. We immediately see that for code words in $\cC_t$, the last sum in Equation~\eqref{eq:Main_achievability_Eucl} will have non-zero elements in at least $tn$ positions, and for each one of those we can lower-bound the norm squared by $n^{-2\alpha}$, so
\begin{equation}
\label{eq:Main_achiev_exponent}
-\ln\left(1-\frac12\|W_{x^n}-W_{{x'}^n}\|_1\right)\geq\frac14 \left(n^{-2\alpha}\right)tn=\frac{t}{4}n^{1-2\alpha}.
\end{equation}
Now, we want to apply Theorem \ref{thm:ID-from-packing} which constructs an identification code from a packing on the output probability set. To satisfy the conditions of said theorem we require a scaling of at least $\sqrt{n}$ in the exponent of Equation~\eqref{eq:Main_achiev_exponent}. So, we need $1-2\alpha\geq\frac12$, i.e.~$\alpha\leq\frac14$ and then, by virtue of Theorem \ref{thm:ID-from-packing}, it is possible to find asymptotically good deterministic identification codes having $N = \left\lfloor \lvert{\cC_t}\rvert/(n\log\lvert{\cY}\rvert) \right\rfloor$ messages.

It is only left to bound the size of the code $\cC_t$. Indeed, by elementary combinatorics, the Hamming ball around any point in $\cX_{n^{-\alpha}}^n$ of radius $tn$ has 
$\leq \binom{n}{tn} |\cX_{n^{-\alpha}}|^{tn}$ elements. Hence, any maximal code of Hamming distance $tn$ has a number of code words at least the ratio between the total number of elements and the size of the Hamming ball (otherwise the code could be extended):
\[
  \lvert{\cC_t}\rvert \geq 
  \frac{\lvert{\cX_{n^{-\alpha}}}\rvert^n}{\binom{n}{tn}\lvert{\cX_{n^{-\alpha}}}\rvert^{tn}}
  \geq    2^{-n} \lvert{\cX_{n^{-\alpha}}}\rvert^{n(1-t)}.
\]
The Gilbert-Varshamov bound \cite{Gilbert:combinatorics,Var:combinatorics} guarantees almost the same (and asymptotically equivalent) performance, by a linear code over the prime field $\mathbb{F}_p$, after we trim down $\cX_{n^{-\alpha}}$ to the nearest smaller prime cardinality $p\geq \frac12|\cX_{n^{-\alpha}}|$ (Bertrand's postulate).
By direct substitution of the bound $|\cX_{n^{-\alpha}}| \geq \left({Kn^\alpha}\right)^{d-\epsilon}$ we obtain
\begin{equation}\begin{split}
\label{eq:general_code_size}
  N \geq \frac{\lvert{\cC_t}\rvert}{n\log|\cY|}
   &\geq 2^{-n-\log n-\log\log|\cY|} \left[\left(Kn^\alpha\right)^{d-\epsilon}\right]^{n(1-t)}\\
   &\geq 2^{\alpha(1-t)(d-\epsilon)n\log n -O(n)}.
\end{split}\end{equation}
Thus, also in the general case, we can achieve superexponential ($n\log n$) scaling in the block length, as long as the lower Minkowski dimension is positive. Above we have inferred that any $\alpha<\frac14$ is suitable to guarantee asymptotically vanishing errors, and $\epsilon,t>0$ are arbitrarily small, showing that any $n\log n$-rate below $\frac14 d$ is attainable. Putting everything together, we get 
\begin{equation}
\label{eq:main_achiev_capacity}
\begin{split}
    \dot{C}_\text{DI}(W)
    &= \inf_{\lambda_1,\lambda_2>0} \liminf_{n\rightarrow\infty}\frac{\log N_\text{DI}(n,\lambda_1,\lambda_2)}{n\log n}\\
    &\geq \liminf_{n\rightarrow\infty}\frac{N}{n\log n}
    \geq \frac14 \underline{d}_M\!\left(\!\sqrt{\!\widetilde{\cX}}\right),
\end{split}
\end{equation}
concluding the proof. 
\end{proof}

\begin{remark}
\label{Remark:direct_part}
Observe that if all we wanted was a packing of the $W_{u_j}$ with respect to the trace distance, we could obtain a lower bound of $\log N \geq \frac12 \underline{d}_M\!\left(\!\sqrt{\!\widetilde{\cX}}\right) n\log n -O(n)$. This is because in the direct part above, we could choose $\alpha$ arbitrarily close to $\frac12$, meaning that the converse in Theorem \ref{thm:pessimistic_final} above would be optimal.
The smaller prefactor of $\frac14$ (corresponding to $\alpha<\frac14$) for a DI code is due to the packing having to have distance $\Omega(\sqrt{n})$, so as to be able to apply Theorem \ref{thm:ID-from-packing}. Perhaps a more direct approach to the code construction could deal with that.
\end{remark}

This result generalises the Bernoulli and Poisson examples from Sections \ref{subsec:bernoulli} and \ref{subsec:poisson} respectively (which both have dimension $d=1$), albeit only for finite output alphabet, and gives us general bounds for the pessimistic identification capacity of a general channel in terms of the lower Minkowski dimension of its square root output probability distributions. We can similarly bound the optimistic identification capacity of a channel $W$ with discrete output in the superexponential scale $\dot{C}_{\text{DI}}^{\text{opt}}(W)$ (defined with the $\limsup$ instead of the $\liminf$ for $n\rightarrow\infty$) in terms of the upper Minkowski dimension of the square root map of the input image $\widetilde{\cX}$:

\begin{theorem}
\label{thm:optimistic_final}
The superexponential optimistic identification capacity $\dot{C}_{\text{DI}}^{\text{opt}}(W)$ of a channel $W:\cX\rightarrow\cY$ is bounded in terms of terms of the upper Minkowski dimension of $\sqrt{\!\widetilde{\cX}}$ as follows (for any $\lambda_1,\lambda_2>0$ with $\lambda_1+\lambda_2<1$):
\[\begin{split}
  \frac14\overline{d}_M\!\left(\!\sqrt{\!\widetilde{\cX}}\right)
    \leq \dot{C}_{\text{DI}}^{\text{opt}}(W) 
    &\leq \limsup_{n\rightarrow\infty}\frac{1}{n\log n}\log N_{\text{DI}}(n,\lambda_1,\lambda_2)\\
    &\leq \frac12\overline{d}_M\!\left(\!\sqrt{\!\widetilde{\cX}}\right).
\end{split}\]
\end{theorem}
\begin{proof}
For the converse part we retrace the same steps as in the proof of Theorem \ref{thm:pessimistic_final}, creating the same $\frac{r}{\sqrt{n}}$-covering of $\sqrt{\!\widetilde{\cX}}$ that yields the new code $\{(u'_j,\cE_j),j\in[N]\}$ with code words $u'_j\in\cX_{r\!/\!\sqrt{n}}^n$, meaning that $N\leq|\cX_{r\!/\!\sqrt{n}}|^n$. Now, we need to upper-bound $|\cX_{r\!/\!\sqrt{n}}|$ in terms of the upper (rather than the lower) Minkowski dimension, which actually simplifies the analysis. 
For the upper covering dimension (defined through the $\limsup$) we have for every $\epsilon>0$ a positive constant $K>0$ such that a covering with $\lvert{\cX_{r\!/\!\sqrt{n}}}\rvert\leq\left(\frac{K\sqrt{n}}{r}\right)^{d+\epsilon}$ exists, where $d:=\overline{d}_M(\sqrt{\!\widetilde{\cX}})$. Then $\log N\leq n\log|\cX_{r\!/\!\sqrt{n}}|\leq\frac12(d+\epsilon)n\log n+O(n)$. As $\epsilon$ can be made arbitrarily small, we find
\[
 \dot{C}_{\text{DI}}^\text{opt}(W) 
      \leq \limsup_{n\rightarrow\infty} \frac{\log N_{\text{DI}}(n,\lambda_1,\lambda_2) }{n\log n}
      \leq \frac12 \overline{d}_M\!\left(\!\sqrt{\!\widetilde{\cX}}\right).
\]

For the direct part, we want to find the upper Minkowski dimension as a lower bound. We use the same strategy as in the direct part proof in Theorem \ref{thm:pessimistic_final}, but now we need to lower bound through the $\limsup$. As in the converse proof of the pessimistic capacity (Theorem \ref{thm:pessimistic_final}), we again start with a suitable sequence of $\delta_k > 0$ converging to $0$, such that 
\[
  \lim_{k\rightarrow\infty} \frac{\log\Pi_{\delta_k}\!\left(\!\sqrt{\!\widetilde{\cX}}\right)}{-\log\delta_k} = \overline{d}_M\!\left(\!\sqrt{\!\widetilde{\cX}}\right) =: d.
\]
Now, choosing $n_k=\lceil\delta_k^{-1/\alpha}\rceil$ and following the direct part of Theorem \ref{thm:pessimistic_final}, we can create good identification codes of block length $n_k$ (whenever $\alpha < \frac14$ and $\epsilon>0$) of message length $\log N \geq \alpha (d-\epsilon) n_k\log n_k - O(n_k)$. 
Thus, 
\[\begin{split}
  \dot{C}_{\text{DI}}^{\text{opt}}(W)
    &=\limsup_{n\rightarrow\infty} 
      \frac{\log N_{\text{DI}}(n,\lambda_1,\lambda_2)}{n\log n}\\
    &\geq \limsup_{k\rightarrow\infty} \frac{\log N_{\text{DI}}(n,\lambda_1,\lambda_2)}{n_k\log n_k}
    \geq \frac14 \overline{d}_M\!\left(\!\sqrt{\!\widetilde{\cX}}\right),
\end{split}\]
concluding the proof. 
\end{proof}

\begin{corollary}
\label{cor:Bernoulli_new}
Given the channel $W:\cX\rightarrow\cY$, if $\widetilde{\cX}=W(\cX)\subset\cP(\cY)$ is a union of finitely many smooth submanifolds of maximum dimension $d$, then 
\[\hspace{1.7cm}
  \frac14 d\leq\dot{C}_{\text{DI}}(W) 
    \leq \dot{C}_{\text{DI}}^\text{opt}(W)
    \leq {\frac12}d. \hspace{1.7cm}\qed
\]
\end{corollary}

\subsection{Subtleties in dimension zero}
\label{subsec:d=0}
Theorems \ref{thm:pessimistic_final} and \ref{thm:optimistic_final} are valid even when the lower (for the pessimistic case) or the upper Minkowski dimensions (for the optimistic) of $\widetilde{\cX}$ are zero. In that case, the achievability bound is trivial, while the converse says that whatever the code size $N$, its $(n\log n)$-rate converges to zero. However, these statements clearly do not capture what is actually going on. Excluding the trivial case of a constant channel $W$, we always have \emph{some} distinct output distributions, $|\widetilde{\cX}| \geq 2$. By restricting the input to a suitable finite set, $W$ is transformed into a DMC as studied in \cite{SPBD:DI_power}, with two or more distinct rows, hence the code size scales at least exponentially, $N_{\text{DI}}(n,\lambda_1,\lambda_2) \geq 2^{cn}$, and in fact \cite[Corollary~5]{SPBD:DI_power} (cf.~\cite{AD:ID_ViaChannels,AC:DI}) shows that the linear-order DI-capacity $C_{\text{DI}}(W)$ and its strong converse for finite channel output $\widetilde{\cX}$ are given by
\[
  C_{\text{DI}}(W)=\lim_{n\rightarrow\infty} \frac1n \log N_{\text{DI}}(n,\lambda_1,\lambda_2) 
  = \log\,\bigl|\widetilde{\cX}\bigr|.
\]

However, for infinite $\widetilde{\cX}$, on which case we will focus from now on, we learn that for every $R>0$ there is an $n(R)$ such that for all $n\geq n(R)$, $N(n,\lambda_1,\lambda_2) \geq 2^{nR}$. On the other hand, there are infinite sets $\widetilde{\cX}\subset\cP(\cY)$ with upper (and therefore also lower) Minkowski dimension zero, and for those examples the considerations and theorems in the previous Section \ref{subsec:main} tell us
\begin{equation}
  \omega(n) 
  \leq \log N_{\text{DI}}(n,\lambda_1,\lambda_2) 
  \leq o(n\log n), 
\end{equation}
where we have used little-$o$ and little-$\omega$ notation to denote the asymptotic separation. 

Arguably the simplest examples, useful to understand what is happening here, are countable sets $\widetilde{\cX}$, without loss of generality closed. Since $\cP(\cY)$ is compact, such a set will have at least one accumulation point (an element $x\in\widetilde{\cX}$ that is in the closure of its relative complement $\widetilde{\cX}\setminus\{x\}$), and so again the simplest class consists of those with a unique accumulation point. Such a set is conveniently described by a convergent sequence $(x_t)_{t\in\NN}$ and its limit $x_* = \lim_{t\rightarrow\infty} x_t$: 
\[
  \widetilde{\cX} = \{ x_1,x_2,\ldots,x_t,\ldots \} \stackrel{.}{\cup} \{ x_* \}.
\]
For binary output, $\cP(\{0,1\}) \simeq [0;1]$, consider thus monotonically decreasing sequences of positive numbers $x_t\in(0;1]$ converging to $x_*=0$. This corresponds to restricting the Bernoulli channel $B$ to inputs from a subset of the unit interval. It turns out that the Minkowski dimension is related to the speed or slowness of this convergence, as we can illustrate by the following cases,
\begin{align*}
  L   &:= \{0\} \cup \left\{ (\log t)^{-1} : t\geq 2 \right\}, \\
  P_s &:= \{0\} \cup \left\{ t^{-s} : t\geq 1 \right\} \quad (\text{for } s>0), \\
  E_c &:= \{0\} \cup \left\{ c^{-t} : t\geq 0 \right\} \quad (\text{for } c>1),
\end{align*}
corresponding to logarithmic, polynomial (with power $s$), and exponential convergence. It is not difficult to show that $d_M(L) = d_M(\sqrt{L})  =1$, $d_M(P_s) = \frac{1}{s+1}$ and $d_M(\sqrt{P_s}) = \frac{2}{s+2}$, and $d_M(E_c) = d_M(\sqrt{E_c}) = 0$. 

Now, in a set with positive Minkowski dimension $d>0$, the optimal packings and coverings with distance $\epsilon$ scale as $\left({1}/{\epsilon}\right)^d$. This is relevant both in the converse and the direct part of our main Theorems \ref{thm:pessimistic_final} and \ref{thm:optimistic_final}, as $\epsilon=O\left(n^{-\frac12}\right)$ and $\epsilon=O\left(n^{-\alpha}\right)$, respectively, in terms of the block length $n$. Hence, optimal packings and coverings grow as a power of $n$, translating into the exponent $\Theta(n\log n)$ of $N(n,\lambda_1,\lambda_2)$ we have seen. 
Clearly, both $\Gamma_\epsilon(F)$ and $\Pi_\epsilon(F)$ always grow as a function  of $\frac{1}{\epsilon}$, however if the dimension is zero, it will be according to a different law slower than any power. For instance, for the exponential sequence set $E_c$ above, note that $\sqrt
{E_c} = E_{\sqrt{c}}$, and it is not hard to see that 
\[
  \log_c\frac{c-1}{3\epsilon} 
    \leq \Gamma_{\epsilon}(E_c) 
    \leq \log_c\frac{c^2}{2\epsilon}.
\]
The upper bound results from placing an interval of length $2\epsilon$ at $[0;2\epsilon]$, which covers all $t$ such that $c^{-t}\leq 2\epsilon$, plus a separate interval for each smaller $t$. The lower bound comes from the realization that if the gap between two consecutive points in $E_c$ is bigger than $2\epsilon$, then they cannot be covered by the same interval. Say, $2\epsilon < 3\epsilon \leq c^{1-t}-c^{-t} = (c-1)c^{-t}$, which is true for $t\leq \log\frac{c-1}{3\epsilon}$. This means that each such $c^{-t}$ needs its own interval. 

Going through the proof of Theorem \ref{thm:pessimistic_final} with these tight bounds for covering 
numbers as the cardinalities of the covering sets, shows for this channel $B\vert_{E_c}$ that
\begin{equation}
  \label{eq:DI-for-E-channel}
  2^{n\log\log n - O(n)}
   \leq N_{\text{DI}}(n,\lambda_1,\lambda_2) 
   \leq 2^{n\log\log n + O(n)}.  
\end{equation}
In other words, here we have a channel where the maximum message length $\log N_{\text{DI}}(n,\lambda_1,\lambda_2)$ has the scale $n\log\log n$, and the suitably scaled DI capacity is $1$, irrespective of $c>1$.

We can rip off the proof of Theorem \ref{thm:pessimistic_final} with the general but abstract terms $\Pi_\epsilon(\widetilde{\cX})$ and $\Gamma_\delta(\widetilde{\cX})$ for the packing and covering numbers. This gives the following. 
\begin{theorem}
\label{thm:Abstract_Generalization}
Given a channel $W:\cX\rightarrow\cY$, $\lambda_1,\lambda_2 > 0$ with $\lambda_1+\lambda_2 < 1$ and $\alpha<\frac14$, there exist $\delta>0$ and $K$ such that
\begin{align*}
\log N_{\text{DI}}(n,\lambda_1,\lambda_2)
    &\leq n\log\Pi_{n^{-\alpha}} \!\left(\!\sqrt{\!\widetilde{\cX}}\right) 
\,\,\,\forall n\text{ sufficiently large}, \\
\log N_{\text{DI}}(n,\lambda_1,\lambda_2)
    &\geq n\log \Gamma_{\frac{\delta}{\sqrt{n}}} \!\left(\!\sqrt{\!\widetilde{\cX}}\right)+Kn
\,\,\, \forall n.
    \hspace{1.45cm}\qedsymbol
\end{align*}
\end{theorem}
This theorem meaningfully generalises all the previous capacity results in deterministic identification over channels with finite output. Notice that if we have a positive Minkowski dimension we can immediately recover the optimistic and pessimistic superexponential capacity bounds in Theorems \ref{thm:pessimistic_final} and \ref{thm:optimistic_final}. For the DMC, the covering and packing numbers are equal to the number of distinct rows of the stochastic matrix that describes the channel. Therefore, defining an exponential capacity we recover the result in Equation~\eqref{eq:Capacity_DMC}. We can also find capacity bounds for the special examples presented in this section just by defining a capacity in the correct scaling, as we have done in Equation~\eqref{eq:DI-for-E-channel}. Studying these abstract covering and packing numbers given a channel we can find meaningful bounds to suitably defined capacities, in the sense that we have information in the rates no matter the scaling.

\subsection{Superactivation of the DI capacity}
\label{subsec:superactivation}

Superactivation of channels -- a phenomenon where two zero-capacity channels combine to create a channel with positive capacity -- has been studied in quantum information theory since the highly celebrated discovery of superactivation for entanglement transmission over quantum channels \cite{SY:0CapacityChannels,SSY:0capacityGaussian}. This had been preceded by the discovery of superactivation of distillable entanglement from bipartite quantum states assuming the existence of NPT bound entangled states \cite{VollbrechtWolf:PPT-NPT} and the unconditional such demonstration for multipartite states \cite{SST:superactivation}. No classical analogue of this effect had been known before and for another decade after, so that some even expressed the belief that it was a unique feature of quantum information and could not appear in classical systems (as for instance in Renato Renner's 2012 ICMP address).
Indeed, superactivation cannot occur for the Shannon transmission capacity and a number of other classical channel capacities, but it was eventually shown to appear in classical settings for secrecy capacities. Note however that due to the additivity of the Csisz\'ar-K\"orner private capacity formula, this does not occur for the stationary and memoryless wiretap channels \cite{CK:wiretap}. The first example to be found was the secure transmission of messages through arbitrary varying wiretap channels (AVWCs) \cite{BS:superact_wiretap,SBP:SecureComsAdversial}, which are highly non-memoryless (see \cite{NWB:AVwiretap} for a full characterisation). After that, superactivation was also proven for secure identification over the AVWC \cite{BD:secureIDwiretap}. Zero-error transmission \cite{Shannon:zero-error}, too, exhibits superadditivity, but it cannot achieve superactivation, only superadditivity of the capacity between channels with positive capacities \cite{Haemers:theta,Alon:zero-error}. 

Since then, it has remained an open problem to find a classical communication task over classical channels such that superactivation occurs for the corresponding communication capacity -- in an i.i.d.~setting and not for a secrecy capacity. Here we address this question by showing that DI over restricted Bernoulli channels can easily exhibit superactivation. We hold that deterministic identification is indeed a relevant communication task, and crucially the channels involved are not only stationary and memoryless but also very simple (compared, say, to the quantum channels exhibiting superactivation). 

Everything follows from the general results in the preceding subsections and the study of the lower Minkowski dimension of families of compact sets. The following property is observed:
\begin{proposition}[{cf.~\cite[Prop.~3.4]{Robinson:dimensions}}]
\label{prop:dimension}
Let $F$ and $G$ be two compact sets with lower Minkowski dimensions $\underline{d}_M(F)$ and $\underline{d}_M(G)$, respectively. Then the lower Minkowski dimension of their Cartesian product is superadditive: $\underline{d}_M(F\times G) \geq \underline{d}_M(F)+\underline{d}_M(G)$, and the inequality can be strict in general. 
\end{proposition}

We present a particular example for the latter part of the statement, taken from \cite[Thm.~5.11]{Falconer85:fractal2} and \cite{Edgar:book} (see also \cite{Robinson:dimensions,RobinsonSharples}),
where $\underline{d}_M(F) = \underline{d}_M(G) = 0$ and $\underline{d}_M(F\times G)\geq 1$. 
Let $0=m_0<m_1<m_2<\dots$ be a sufficiently rapidly increasing sequence of integers. We define $F$ as the subset of real numbers in the unit interval $[0;1]$ whose decimal expansion has a $0$ in the $r$-th place for all $r$ with $m_k < r \leq m_ {k+1}$ and $k$ an even number. $G$ is defined similarly as the subset of real numbers in $[0;1]$ whose decimal expansion has a $0$ in the $r$-th place for all $r$ with $m_k < r \leq m_{k+1}$, but with $k$ an odd number.

For $k$ even, if we look at the first $m_{k+1}$ decimal places, we see that we can cover $F$ by $10^{j_k}$ intervals of length $\epsilon_k=10^{-m_{k+1}}$, where
\[
  j_k=(m_2-m_1)+(m_4-m_3)+\dots+(m_k-m_{k-1}).
\]
Then, the lower Minkowski dimension is given by
\[
  \underline{d}_M(F) 
  \leq \liminf_{k\rightarrow\infty} \frac{\log 10^{j_k}}{-\log 10^{-m_{k+1}}} = 0,
\]
if the sequence $m_k$ increases rapidly enough (once numbers up to $m_k$ are chosen, $j_k$ is defined, so we just need to choose $m_{k+1}$ large enough such that the covering rate is smaller than, say, $1/k$). With similar considerations for odd $k$ values we also find $\underline{d}_M(G)=0$. Now, we notice that by construction the sum of the sets is $F+G=[0;1]$, the whole unit interval, and as the addition function on the real numbers $\RR\times\RR\rightarrow\RR$ is Lipschitz, we get
\[
  \underline{d}_M(F\times G)
    \geq \underline{d}_M(F + G) 
    =    \underline{d}_M([0;1]) =1,
\]
because Lipschitz functions cannot increase the dimension.

\begin{theorem}
\label{thm:superactivation}
The superexponential deterministic identification capacity $\dot{C}_{\text{DI}}(W_1)$ and $\dot{C}_{\text{DI}}(W_2)$ of two channels $W_1$ and $W_2$, respectively, is superadditive:
\[
  \dot{C}_{\text{DI}}(W_1\times W_2)
    \geq\dot{C}_{\text{DI}}(W_1)+\dot{C}_{\text{DI}}(W_2).
\]
The inequality can be strict, indeed it can be so even if $\dot{C}_{\text{DI}}(W_1)=\dot{C}_{\text{DI}}(W_2)=0$, meaning that the channels can superactivate each other: $\dot{C}_{\text{DI}}(W_1\times W_2)>0$.
\end{theorem}
\begin{proof}
The first part is immediate, and follows from the definition of the DI capacity. Note that it is important here that we are dealing with the pessimistic capacity, requiring guaranteed rates for every sufficiently large block length, so that for the product channel we can consider the Cartesian product of codes for all sufficiently large block lengths. 

To prove the superactivation part we take the two sets $F,G \subset [0;1]$ defined above, with $\underline{d}_M(F) = \underline{d}_M(G) = 0$ and $\underline{d}_M(F\times G) = 1$, and define $W_1 := B\vert_{F}$, $W_2 := B\vert_{G}$, the Bernoulli channel restricted to the respective subsets. By Theorem \ref{thm:pessimistic_final}, $\dot{C}_{\text{DI}}(W_1) \leq \frac12 \underline{d}_M(F) = 0$ and likewise $\dot{C}_{\text{DI}}(W_2) \leq \frac12 \underline{d}_M(G) = 0$. On the other hand, $W_1 \times W_2 = (B\times B)\vert_{F\times G}$ is the restriction of $B\times B$ to $F\times G$. Again by Theorem \ref{thm:pessimistic_final}, we have $\dot{C}_{\text{DI}}(W_1 \times W_2) \geq \frac14 \underline{d}_M(F\times G) = \frac14 > 0$.
\end{proof}

\subsection[{A continuous and compact example}]{A continuous and compact example with additive noise}
\label{subsec:example}
Moving outside the universe of finite output channels, let us look at the following example of a continuous-input continuous-output channel (or rather family of channels), where the alphabets are at least compact manifolds and the transmission capacity is finite without the need for power constraints. All channels in the family have $\cX = \cY = \RR/\ZZ$, the additive group of real numbers modulo integers. It is naturally isomorphic to $U(1) = \{ z=e^{2\pi i x} : x\in[0;1]\}$, the multiplicative group of complex units. Let $0<\theta<1$ 
and define $A = A^{(\theta)}:\cX\rightarrow\cY$ by
\begin{equation}
  \label{eq:additive-arc-channel}
  A : x \mapsto A_x = U_{[x;x+\theta]\!\!\!\!\mod\ZZ},
\end{equation}
where the interval $[x;x+\theta]$ is wrapped around modulo integers, and $U_{[x;x+\theta]\!\!\!\!\mod\ZZ}$ is the uniform distribution on that set. This is an additive noise channel since one can describe the joint distribution of input $X$ and output $Y$ by stating $Y = X + N$, where $N$ is a random variable independent of $X$ with uniform distribution on $[0;\theta]$ (modulo $\ZZ$).

It is straightforward to see that the ordinary (single exponential) Shannon capacity of this channel is 
\[
  C(A^{(\theta)}) = -\log\theta,
\] 
the maximum mutual information, due to the covariance of the channel, achieved by the uniform input distribution (the Lebesgue measure $\lambda$ on $\RR/\ZZ \simeq [0;1]$). Note already here that all channel output distributions $A_x$ are absolutely continuous with respect to $\lambda$, indeed
\begin{equation}
  \label{eq:density-A_x}
  \frac{{\rm d}A_x}{{\rm d}\lambda}(y) 
   = \begin{cases}
       \frac{1}{\theta} & \text{ for } y\in [x;x+\theta]\!\!\!\!\mod\ZZ, \\
       0                & \text{ otherwise.}
     \end{cases}
\end{equation}

On the other hand, we can use the insights from the preceding subsections to shed light on the DI capacity of $A=A^{(\theta)}$. 
To start with, the set $\widetilde{\cX} = A(\cX) \subset \cP(\cY)$ is quite clearly one-dimensional, the map $x\mapsto A_x$ being Lipschitz continuous between the unit interval modulo integers (with the usual metric) and the probability distributions (with the total variation distance): 
\begin{equation}
  \label{eq:d_A}
  \myfrac{\frac12\|A_x-A_{x'}\|_1}
  {:=d_A(x,x')}
    = \begin{cases}
        \frac{\delta}{\theta} 
        & \text{for } 
        0\leq \delta \leq\min\{\theta,1-\theta\}, \\
        1 
        & \text{for } 
         \phantom{1-;}\,
         \theta\leq\delta\leq\frac12
         \quad (\theta\leq\frac12), \\
        \frac{1}{\theta}-1 
        & \text{for } 
         1-\theta\leq\delta\leq\frac12
         \quad (\theta\geq\frac12),
      \end{cases}
\end{equation}
where $\delta=\min\{|x-x'|,1-|x-x'|\}$ (and where we have silently chosen to represent each equivalence class in $\RR/\ZZ$ by a number $x,x' \in [0;1)$).

We shall show that the superexponential DI capacity of every channel $A^{(\theta)}$ is $1$, more precisely we have the following.

\begin{theorem}
For any $0<\theta<1$, and for all $\lambda_1,\lambda_2>0$ with $\lambda_1+\lambda_2<1$, 
\begin{equation*}
  \dot{C}_{\text{DI}}\left(A^{(\theta)}\right) = \lim_{n\rightarrow\infty} \frac{1}{n\log n}\log N_{\text{DI}}(n,\lambda_1,\lambda_2) = 1. 
\end{equation*}
\end{theorem}

\begin{proof}
For the strong converse we imitate the converse argument from Theorem \ref{thm:pessimistic_final} (note that as $\cY$ is not discrete, we cannot simply invoke that theorem). 
For an $(n,N,\lambda_1,\lambda_2)$-DI code $\{(u_j,\cE_j) : j=1,\ldots,N\}$, as we have already observed repeatedly, 
any two distinct code words $u_j=x^n$ and $u_k = {x'}^n$ ($j\neq k$) are separated: 
\[\begin{split}
  0 < \eta := 1-\lambda_1-\lambda_2 
     &\leq \frac12 \left\| A_{x^n} - A_{{x'}^n} \right\|_1\\
     &\leq \sum_{i=1}^n \frac12 \left\| A_{x_i} - A_{x_i'} \right\|_1. 
\end{split}\]
Now choose a set $\cX_0 \subset \RR/\ZZ$ of $K = \lceil \frac{2n}{\eta\theta} \rceil$ equispaced points, thus forming an $\frac{\eta\theta}{4n}$-net in the natural metric of $\RR/\ZZ$, and an $\frac{\eta}{4n}$-net with respect to $d_A$ [Equation \eqref{eq:d_A}]. If we replace each code word $x^n$ by the (letterwise) nearest net point $\xi^n$, by the triangle inequality the $\xi^n$ and ${\xi'}^n$ of distinct messages are still separated  (and hence pairwise distinct points on $\cX_0^n$): 
\[
  \sum_{i=1}^n \frac12 \left\| A_{\xi_i} - A_{\xi_i'} \right\|_1 \geq \eta/2 > 0.
\]
Indeed, these code words with the same decoding sets form an $(n,N,\lambda_1+\eta/4,\lambda_2+\eta/4)$-DI code.
Thus, $\log N \leq n\log K \leq n\log n +O(n)$. 

For the direct part, we cannot use Theorem \ref{thm:pessimistic_final} either, nor indeed Lemma \ref{lemma:abstract-hypo-testing} on which it rests, as both rely on finite output alphabet due to typicality. But we do not have to, since for a sequence $x^n\in\cX^n$, the support of $A_{x^n}$ does just fine. Namely, 
\[
  \cS_{x^n} := \operatorname{supp} A_{x^n} 
   = \prod_{i=1}^n \bigl([x_i;x_i+\theta]\!\!\!\!\mod\ZZ\bigr)
\]
is already the ideal hypothesis test to discriminate between $A_{x^n}$ and any $A_{{x'}^n}$:
\[
  1-\frac12\|A_{x^n}-A_{{x'}^n}\|_1 
    = A_{{x'}^n}(\cS_{x^n})
    = \prod_{i=1}^n \bigl(1-d_A(x_i,x_i')\bigr).
\]
We continue now imitating the direct part of the proof of Theorem \ref{thm:pessimistic_final}: given a small constant $t>0$ and any block length $n$, choose $k=\left\lfloor nt^2\right\rfloor$ and let $\cX_0 = \left\{\frac{x}{k}+\ZZ : x=0,1,\ldots,k-1\right\}$. We also pick a maximum code $\cC_t\subset\cX_0^n$ of minimum distance $>tn$. Then for $n$ sufficiently big so that $\frac1k \leq \min\{\theta,1-\theta\}$, any two distinct code words $u_j=x^n \neq u_k={x'}^n \in \cC_t$ differ in more than $tn$ positions, and in any one of them, say $i\in[n]$, $1-d_A(x_i,x_i') \leq 1-\frac{1}{k\theta} \leq 1-\frac{1}{nt^2\theta}$. By the above formulas, 
\[
  A_{u_k}(\cS_{u_j}) 
   \leq \left(1-\frac{1}{nt^2\theta}\right)^{nt} 
   \leq e^{-{1}/{t\theta}},
\]
which is the error probability of second kind $\lambda_2$, 
while obviously $A_{u_j}(\cS_{u_j})=1$, making the error probability of first kind $\lambda_1=0$. 

Clearly, we can make $\lambda_2$ arbitrarily small by choosing $t>0$ small enough, and the code size, by the same Gilbert-Varshamov argument as before, is as large as
\[
  N = |\cC_t| 
    \geq 2^{-n} k^{n(1-t)} 
    \geq \exp\bigl((1-t)n\log n - O(n)\bigr), 
\]
for sufficiently large $n$. As $t>0$ can be made arbitrarily small, this concludes the proof.
\end{proof}

This is the first example of a superexponential DI capacity with a precisely determined value, rather than being bounded from below and above. 
The example is a bit singular, in that the output distributions $A_x$ have non-full support, so we avoid talking about typicality by taking the support $\cS_{x^n}$ of $A_{x^n}$ as a natural decoding set. 
At first sight it is a little mysterious why the factor $\frac12$ from the upper bound of Theorem \ref{thm:pessimistic_final} does not show up here. But there is an explanation of this due to the infinite dimension of the Hilbert space $L^2(\cY,\lambda)$ in which the functions $\alpha_x(y) = \sqrt{\frac{{\rm d}A_x}{{\rm d}\lambda}(y)}$ from Equation \eqref{eq:density-A_x} live. Namely, a direct calculation shows that 
\[
  F(A_x,A_{x'}) 
    = \int_0^1 \lambda({\rm d}y) \alpha_x(y)\alpha_{x'}(y)
    = 1 - \frac12\|A_x-A_{x'}\|_1, 
\]
implying via Equations \eqref{eq:trace-distance-lower-spherization} and \eqref{eq:trace-distance-upper-spherization} that 
\[
  \frac12\|A_x-A_{x'}\|_1 
    \leq \left| \alpha_x - \alpha_{x'} \right|_2^2
    \leq 2\|A_x-A_{x'}\|_1.  
\]
So, as we expected from our experience with the spherisations in finite dimension, the $O(1/n)$-net required on the distributions $A_x$ translates into an $O(1/\sqrt{n})$-net on the $\alpha_x$. But while the geometry of the $A_x$ is essentially that of the circle of the parameters $x \in \RR/\ZZ$, the curve of the $\alpha_x \in L^2(\cY,\lambda)$ meanders through this infinite dimensional Hilbert space, which is why we cannot leverage the more favourable power of $n$ in the packing. 

However more importantly, once the candidate decoding set is fixed, we can directly bound the error probability of second kind, rather than going through the trace distance first and then losing performance in having to go back to $\cS_{x^n}$, suggesting possible strategies to improve the lower bound for other channels, or even in the general case.

\section[{DI via quantum channels}]{Deterministic identification\protect\\ via quantum channels}
\label{sec:q-deterministic_ID}
Identification via quantum channels was first introduced by L\"ober \cite{loeber:PhDThesis}, following Ahlswede and Dueck's model of randomized encoders \cite{AD:ID_ViaChannels} described in the introduction. This line of study has been developed significantly since then \cite{AW:StrongConverse,Winter:QCidentification,Winter:QC-ID-randomness,Winter:Review,HaydenWinter:QC-ID,TSW:soft-covering,CDBW:ID-0-sim:ICC,CDBW:ID-0-sim}. 
A major theme in these prior works has been the distinction between \emph{simultaneous} and \emph{non-simultaneous} decoders, and more recently between pure state and general mixed state encodings, as an attempt to introduce deterministic identification through quantum channels \cite{CDBW:ID-0-sim:ICC,CDBW:ID-0-sim}. However, all these models achieve doubly exponential code sizes, albeit with different associated capacities, so they do not seem to capture the essence of deterministic identification, in particular not the exponential or slightly superexponential scaling of the code size observed before.

\subsection{Quantum information preliminaries}
Before discussing the quantum generalisation of deterministic identification codes, we briefly review the quantum statistical formalism needed to do information theory, cf.~\cite{holevo:stat-structure-book,Wilde-book,QCN}. Quantum systems are described by complex Hilbert spaces, denoted $A$, $B$, etc, which we shall always assume to be of finite dimension $|A|$, $|B|$, etc. (the notation reflecting the cardinality of a basis of the Hilbert space, which coincides with the number of quantum degrees of freedom associated with it). Composite systems are formed by tensor products $A\ox B$, etc. 
Quantum states are described by density matrices $\rho$, which are positive semidefinite, $\rho\geq 0$, and of unit trace, $\Tr\rho=1$. The set of states on a given system $A$ is denoted $\cS(A)$. For a state $\rho^{AB} \in\cS(A\ox B)$, the marginals are $\rho^A = \Tr_B\rho^{AB} \in \cS(A)$ and $\rho^B = \Tr_A\rho^{AB} \in \cS(B)$. 

Information about quantum states is accessed through measurements, which are generally given by positive operator-valued measures (POVMs): a POVM is a family $(M_u:u\in\cU)$ of positive semidefinite operators $M_u\geq 0$ such that $\sum_u M_u = \1$. The fundamental relation between states and measurements is expressed in Born's rule,
\[
  \Pr\{u|\rho\} = \Tr\rho M_u, 
\]
predicting the probability of the POVM finding the element $M_u$ corresponding to $u$ when the system is prepared in state $\rho$.
Probability distributions and classical statistics are embedded into this formalism by way of diagonal matrices: given a distinguished (``classical'') orthonormal basis $\{\ket{x}:x\in\cX\}$ of a Hilbert space $X$, a probability distribution $p=(p_x)_x\in\cP(\cX)$ can be represented by the density matrix $\sum_x p_x \proj{x}$, and a POVM $(M_u)_u$ consisting of matrices diagonal in the distinguished basis, $M_u = \sum_x M(u|x)\proj{x}$, can be interpreted as classical response functions defining a random observation $U$.

Quantum channels are completely positive trace-preserving linear maps denoted by $\cN:A\rightarrow B$, even though the map is from $\cL(A)$ to $\cL(B)$, the set of matrices in $A$ and $B$, respectively. To be a physical map, $\cN$ has to be positive and trace-preserving, meaning that it maps quantum states in $A$ denoted by $\cS(A)$ to $\cS(B)$, and the same has to hold for any channel extension $\cN\ox\id_C$. 

A particularly important class of cptp maps is that of \emph{classical-quantum (cq-)channels} $\cN:X\rightarrow B$, with a distinguished orthonormal basis $\{\ket{x}:x\in\cX\}$. It has the general form $\cN(\ket{x}\bra{x'}) = \delta_{xx'} W_x$, with states $W_x\in\cS(B)$. Operationally, this can be described as a channel that first measures in the computational basis (the POVM consisting of the projectors $\proj{x}$) and then prepares the state $W_x$. Therefore, a cq-channel (both mathematically and physically) is given by the map 
\[\begin{split}
   W: \cX&\rightarrow\cS(B)\\
        x&\mapsto W_x.
\end{split}\]
While here we will consider cptp maps only between finite-dimensional Hilbert spaces, we allow the extension of this latter definition to all measurable spaces $\cX$, along with the cq-channel $W:\cX\rightarrow\cS(B)$ a measurable map with respect to the Borel sigma-algebra on $\cS(B)$. The $n$-extension $W^n:\cX^n\rightarrow\cS(B^n)$ maps input words $x^n\in\cX^n$ to $W_{x^n}=W_{x_1}\otimes\cdots\otimes W_{x_n}$.

Analogously to Definition \ref{def:transmission_code}, an $(n,M,\lambda)$-transmission code for the memoryless channel $\cN$, i.e. over $\cN^{\ox n}:A^n=A^{\ox n}\rightarrow B^n=B^{\ox n}$, is a collection $\{(\pi_m,D_m):m\in[M]\}$ of states $\pi_m\in\cS(A^n)$ and positive semidefinite operators $D_m\geq 0$ on $B^n$ such that $\sum_m D_m \leq \1$ (forming a sub-POVM), and such that for all $m\in[M]$, $\Tr\cN^{\ox n}(\pi_m) D_m \geq 1-\lambda$. The maximum $M$ is denoted $M(n,\lambda)$ as before, but if all the $\pi_m$ are tensor product states (or more generally separable states) across the $n$ systems $A^n$, we denote the maximum $M_{\text{sep}}(n,\lambda)$. 

For a cq-channel $W:\cX\rightarrow\cS(B)$, we consider a code to consist of code words $u_m\in\cX^n$ rather than quantum states, but otherwise unchanged. Note that in this case, the encodings are by definition separable. The Holevo-Schumacher-Westmoreland (HSW) theorem and subsequent refinements describe the capacity of a quantum channel. 

\begin{theorem}[\cite{holevo:capacity,SW:capacity,Winter:Strong_converse,ON:Strong_converse}]
\label{thm:HSW}
The following formula gives the transmission capacity of a memoryless cq-channel $W$, and the strong converse holds, namely for all $\lambda\in (0;1)$, 
\[
  C(W) = \lim_{n\rightarrow\infty}\frac{1}{n}\log M(n,\lambda)
       = \max_{P\in\cP(\cX)} I(P;W).
\]
Here, $I(P;W)=H(PW)-H(W|P)$ is the Holevo information with the von Neumann entropy $H(\rho)=-\Tr\rho\log\rho$, and we are using the notation
$PW =\displaystyle{\int_\cX} P({\rm d}x)W_x \in \cS(B)$, and the conditional entropy $H(W|P)= \displaystyle{\int_\cX} P({\rm d}x) H(W(\cdot|x))$.

More generally, for a quantum channel $\cN:A\rightarrow B$ and separable encodings, 
\[\begin{split}
  C_{\text{sep}}(\cN) 
   :&= \inf_{\lambda>0}\liminf_{n\rightarrow\infty}
    \frac1n \log M_{\text{sep}}(n,\lambda) \\
    &= \lim_{n\rightarrow\infty}
    \frac1n \log M_{\text{sep}}(n,\lambda) 
    = \chi(\cN),
\end{split}\]
with the \emph{Holevo capacity}
\[
  \chi(\cN) = \max_{\{p_x,\pi_x\}} \left[H\left(\sum_x p_x\cN(\pi_x)\right) - \sum_x p_x H\left(\cN(\pi_x)\right)\right]. 
\]

Instead, for general (entangled) encodings, the ultimate classical capacity is
\[\begin{split}
  C(\cN) 
   :&= \inf_{\lambda>0}\liminf_{n\rightarrow\infty}
    \frac1n \log M(n,\lambda) \\
    &= \inf_{\lambda>0}\limsup_{n\rightarrow\infty}
    \frac1n \log M(n,\lambda) 
    = \sup_n \frac1n \chi(\cN^{\ox n}).
\end{split}
\]
\end{theorem}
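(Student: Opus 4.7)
The plan is to establish the cq-channel formula first, since both quantum channel claims will then follow by a natural reduction. For the direct part of the cq-channel statement, fix $P\in\cP(\cX)$ and a rate $R<I(P;W)$, and draw $M=\lceil 2^{nR}\rceil$ codewords $U_1,\ldots,U_M$ i.i.d.\ from $P^{\ox n}$. Let $\Pi$ be the typical subspace projector of the average output state $(PW)^{\ox n}$ and let $\Pi_{u}$ be the conditionally typical projector of $W_{u}=W_{u_1}\ox\cdots\ox W_{u_n}$, and set $\Lambda_m=\Pi\Pi_{U_m}\Pi$. Take as decoder the pretty-good POVM $D_m=S^{-1/2}\Lambda_m S^{-1/2}$ with $S=\sum_k\Lambda_k$. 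The Hayashi-Nagaoka operator inequality gives $\1-D_m\leq 2(\1-\Lambda_m)+4\sum_{k\neq m}\Lambda_k$, so the expected error probability for message $m$ splits into a ``self'' term $\Tr W_{U_m}(\1-\Lambda_m)$, small by the gentle measurement lemma combined with the typicality estimates of Lemma~\ref{lemma:entropy-typical-prob}, and a ``cross'' term whose expectation is bounded by
\[
  M\cdot 2^{n(H(W|P)+\delta)}\,2^{-n(H(PW)-\delta)} = M\cdot 2^{-n(I(P;W)-2\delta)},
\]
using $\Tr\Pi_{u}\leq 2^{n(H(W|P)+\delta)}$ and $\Pi(PW)^{\ox n}\Pi\leq 2^{-n(H(PW)-\delta)}\Pi$. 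Expurgating the worst half of the random codewords then yields a maximum-error code at rate $R$.

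For the weak converse, apply the Holevo bound to the effective classical-quantum channel $m\mapsto W_{u_m}$ followed by the decoding POVM, giving $I(M;\widehat M)\leq\chi$ of the effective output ensemble; $\chi$ is additive on product states of a cq-channel, so this reduces to $n\cdot\max_P I(P;W)$, and Fano's inequality completes the bound. The strong converse is where I expect the main technical obstacle: I would follow Ogawa-Nagaoka and bound the sum of success probabilities of any $M$-outcome POVM on $n$-fold product states by $M\cdot\exp\bigl(-n\widetilde D_\alpha(W_{x^n}\|\sigma^{\ox n})\bigr)$ for $\alpha>1$ via the data processing of the sandwiched R\'enyi divergence, then minimise over $\sigma$ using Sibson's identity to obtain a single-letter exponent $\widetilde I_\alpha(P;W)$, and finally let $\alpha\searrow 1$ to match the Holevo quantity. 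The awkward point is handling codewords of mixed empirical type in a uniform way; one either wrings types \`a la Winter, or absorbs them into the Sibson minimisation directly.

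For the quantum channel statement with separable encodings, observe that the error probability is affine-linear in the input states $\pi_m\in\cS(A^n)$, so by convexity one may assume each encoding is a pure product state $\pi_{x_1}\ox\cdots\ox\pi_{x_n}$ with $x_i\in\widetilde\cX=\cS(A)$. The map $\widetilde W:\widetilde\cX\rightarrow\cS(B)$, $\pi\mapsto\cN(\pi)$, is then a cq-channel with measurable (compact) input alphabet, and the cq-channel result of the first two paragraphs applies after a routine measure-theoretic upgrade of the random-coding integral from sums to expectations over the Borel $\sigma$-algebra of $\widetilde\cX$; maximisation of $I(P;\widetilde W)$ over $P\in\cP(\widetilde\cX)$ is exactly $\chi(\cN)$.

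Finally, for arbitrary entangled encodings, apply the separable result to $\cN^{\ox k}$ viewed as a single channel $A^{\ox k}\rightarrow B^{\ox k}$: its separable-encoding capacity is $\chi(\cN^{\ox k})$, which translates into a per-use rate of $\tfrac1k\chi(\cN^{\ox k})$ for $\cN$. Super-additivity of $k\mapsto\chi(\cN^{\ox k})$ together with Fekete's lemma turn $\sup_k\tfrac1k\chi(\cN^{\ox k})$ into $\lim_k\tfrac1k\chi(\cN^{\ox k})$. The matching converse follows by applying the strong converse of the second paragraph at block length $nk$, viewing the $n$ uses of $\cN^{\ox k}$ as a single cq-channel after an optimal input state is chosen. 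The genuine conceptual limitation is not in any of these steps but in the non-additivity of $\chi$ itself (Hastings), which is exactly why the regularised expression cannot be simplified further.
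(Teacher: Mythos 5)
The paper does not prove this theorem --- it is stated as a citation of the classical literature \cite{holevo:capacity,SW:capacity,Winter:Strong_converse,ON:Strong_converse}, with only the remark that the separable-encoding and cq-channel formulations are equivalent via $W_x=\cN(x)$, $x\in\cS(A)$ --- so there is no internal proof to compare against. Your sketch of the achievability (pretty-good measurement and the Hayashi--Nagaoka operator inequality with conditional and average typical projectors), the weak converse (Holevo bound plus Fano), and the strong converse is the standard route and is essentially correct, modulo a conflation of Ogawa--Nagaoka's original argument with the later sandwiched-R\'enyi approach of Wilde--Winter--Yang and Mosonyi--Ogawa; the paper in fact cites both Winter's type-based strong converse and Ogawa--Nagaoka's, and either suffices. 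The reduction of the separable-encoding case to the cq-channel case via $W_x=\cN(x)$ is also exactly what the paper remarks.

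The one genuine gap is your converse for the regularised capacity $C(\cN)$. You claim it ``follows by applying the strong converse of the second paragraph at block length $nk$, viewing the $n$ uses of $\cN^{\ox k}$ as a single cq-channel.'' That argument only bounds codes whose encoding states are separable \emph{across the $n$ blocks of size $k$}; a genuinely entangled code on $A^{\ox nk}$ need not factor this way, so the cq-channel strong converse you invoke does not constrain $M(nk,\lambda)$ at all. Note that the theorem correspondingly does \emph{not} claim a strong converse for $C(\cN)$: the displayed statement is only
\[
  \inf_{\lambda>0}\liminf_{n}\frac1n\log M(n,\lambda)
   = \inf_{\lambda>0}\limsup_{n}\frac1n\log M(n,\lambda)
   = \sup_k\frac1k\chi\bigl(\cN^{\ox k}\bigr),
\]
a weak-converse-type equality. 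This is exactly what the Holevo--Fano argument you already used for the cq-channel weak converse delivers directly: for any $(n,M,\lambda)$-code with arbitrary entangled inputs, $(1-\lambda)\log M - 1 \leq \chi(\cN^{\ox n}) \leq n\sup_k\frac1k\chi(\cN^{\ox k})$, and taking $\limsup_n$ followed by $\inf_\lambda$ closes the converse, with no need for (and indeed no known availability of) a strong converse in the entangled-encoding setting.
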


Note that the part about separable encodings is actually a consequence of the case of cq-channels, because for given $\cN:A\rightarrow B$ we can define a cq-channel $W:\cX\rightarrow\cS(B)$ with $W_x=\cN(x)$ for $x\in\cX:=\cS(A)$, and every code for $\cN^{\ox n}$ with separable encodings is essentially equivalent to a code for $W^n$.

The statistical distance measures discussed for probability distributions generalise to quantum states as the \emph{trace distance} between quantum states (density matrices) and the \emph{fidelity}. The trace distance of two density matrices $\rho$ and $\sigma$ is 
\[
  \frac{1}{2}\|\rho-\sigma\|_1 
    :=\frac12\Tr\left|\rho-\sigma\right|
    = \frac12\Tr\sqrt{(\rho-\sigma)^2}.
\]
In particular, the trace distance between two diagonal density matrices (which correspond to classical probability distributions) is the total variation distance between the distributions.
The fidelity of quantum states on the other hand is 
\[\begin{split}
  F(\rho,\sigma)
   := \left\|\sqrt{\rho}\sqrt{\sigma}\right\|_1
   &= \Tr\sqrt{\sqrt{\rho}\sigma\sqrt{\rho}}\\
   &= \min_{(F_i)} \sum_i\sqrt{(\Tr\rho F_i)(\Tr\sigma F_i)},
\end{split}\]
where $(F_i)$ is an arbitrary POVM, i.e.~$F_i\geq 0$ and $\sum_i F_i=\1$. The fidelity and the trace distance of quantum states are related to each other through the Fuchs-van de Graaf inequalities \cite{FVdG:ineq}:
\begin{equation}
  \label{eq:FvdG}
  1-F(\rho,\sigma)
    \leq \frac{1}{2}\|\rho-\sigma\|_1
    \leq\sqrt{1-F(\rho,\sigma)^2} =: p(\rho,\sigma),
\end{equation}
the latter function known as the \emph{purified distance}.

Generalising Lemma \ref{lemma:purified-HS}, we can find a relation between the Hilbert-Schmidt and the trace distances which will be useful for the main theorem in this section (Theorem \ref{thm:cq-DI}). 
\begin{lemma}
\label{lemma:HS-Trace}
Let $\rho$ and $\sigma$ be two quantum states, then 
\[\begin{split}
  \frac12\|\rho-\sigma\|_1 
    \leq p(\rho,\sigma) 
    \leq \left\|\sqrt{\rho}-\sqrt{\sigma}\right\|_2
    &\leq \sqrt{2}\,p(\rho,\sigma)\\
    &\leq \sqrt{2}\sqrt{\|\rho-\sigma\|_1}.
\end{split}
\]
\end{lemma}
\begin{proof}
The leftmost and the rightmost inequalities are nothing but the Fuchs-van de Graaf relations \eqref{eq:FvdG}, so we may focus on the middle two inequalties. 
We start with the squared Hilbert-Schmidt distance:
\[\begin{split}
    \|{\sqrt{\rho}-\sqrt{\sigma}}\|_2^2
      &= \Tr \left(\sqrt{\rho}-\sqrt{\sigma}\right)^2\\
      &= \Tr \rho + \Tr \sigma - 2\Tr \sqrt{\rho}\sqrt{\sigma}\\
      &= 2\left(1-\Tr\sqrt{\rho}\sqrt{\sigma}\right).
\end{split}\]
To continue, we invoke the fidelity property $F(\rho,\sigma) \geq \Tr\sqrt{\rho}\sqrt{\sigma} \geq F(\rho,\sigma)^2$, proved easily from the definition of the fidelity and optical purifications according to Uhlmann's theorem, cf.~\cite{Holevo:Fidelity,Wilde:Fidelity}. 
Then, from the above we get on the one hand
\[\begin{split}
    \|{\sqrt{\rho}-\sqrt{\sigma}}\|_2^2
      &=    2\left(1-\Tr\sqrt{\rho}\sqrt{\sigma}\right) \\
      &\geq 2\left(1-F(\rho,\sigma)\right) \\
      &\geq 1-F(\rho,\sigma)^2 
       =    p(\rho,\sigma)^2. 
\end{split}\]
On the other hand, 
\[\begin{split}
    \|{\sqrt{\rho}-\sqrt{\sigma}}\|_2^2
      &=    2\left(1-\Tr\sqrt{\rho}\sqrt{\sigma}\right) \\
      &\leq 2\left(1-F(\rho,\sigma)^2\right) \\
      &=    2 p(\rho,\sigma)^2,
\end{split}\]
which concludes the proof.
\end{proof}

\subsection[{ID and DI via quantum channels}]{From randomised to deterministic identification via quantum channels}
To generalise (deterministic) identification to the quantum setting, we start with cq-channels $W:\cX\rightarrow\cS(B)$, with a finite-dimensional complex Hilbert space $B$ and a measurable space $\cX$. 
Since the input is classical, the following definitions follow directly the example of Definition \ref{def:ID_code} and its subsequent discussion: 

\begin{definition}[L\"ober \cite{loeber:PhDThesis}, Ahlswede/Winter \cite{AW:StrongConverse}]
\label{def:qID}
An $(n,N,\lambda_1,\lambda_2)$-\emph{identification code} for $W$ is a collection of $\{(P_j,E_j):j\in[N]\}$ of probability distributions $P_j$ on $\cX^n$ and POVM elements $0\leq E_j\leq\1$ on $B^n$, such that for all $j\neq k\in[N]$
\[
  \Tr(P_jW^n)E_j \geq 1-\lambda_1,
  \quad 
  \Tr(P_jW^n)E_k \leq \lambda_2,
\]
where $P_jW^n =\displaystyle{\int_{{\cX^n}}} P_j({\rm d}^n x^n)W_{x^n} \in \cS(B^n)$. 

If all the POVMs $(E_j,\1-E_j)$ are coexistent, we call the code \emph{simultaneous} \cite{loeber:PhDThesis}. This means that there exists a POVM $(D_m:m\in\cM)$ such that all $E_j$ are obtained by coarse-graining from it. In other words, there are subsets $\cM_j\subset\cM$ such that $E_j = \sum_{m\in\cM_j} D_m$. 
\end{definition}

The largest $N$ in this definition such that a good ID code (simultaneous ID code) exists is denoted $N(n,\lambda_1,\lambda_2)$ [$N^{\text{sim}}(n,\lambda_1,\lambda_2)$], and the corresponding double exponential capacities are defined as in the classical case: 
\begin{equation}\begin{split}
  \ddot{C}_\text{ID}(W)
  &:=\inf_{\lambda_1,\lambda_2>0} \liminf_{n\rightarrow\infty} \frac{1}{n} \log\log N(n,\lambda_1,\lambda_2), \\
  \ddot{C}_\text{ID}^{\text{sim}}(W)
  &:=\inf_{\lambda_1,\lambda_2>0} \liminf_{n\rightarrow\infty} \frac{1}{n} \log\log N^{\text{sim}}(n,\lambda_1,\lambda_2).
\end{split}\end{equation}
For cq-channels, these capacities have been determined to be equal, and to equal the (single exponential) transmission capacity of $W$: 

\begin{theorem}[L\"ober \cite{loeber:PhDThesis}, Ahlswede/Winter \cite{AW:StrongConverse}]
\label{thm:cq-ID}
For a cq-channel $W:\cX\rightarrow\cS(B)$, we have
\(
  \ddot{C}_\text{ID}(W) 
   = \ddot{C}_\text{ID}^\text{sim}(W) 
   = C(W),
\)
and indeed the strong converse holds for any $\lambda_1,\lambda_2>0$ with $\lambda_1+\lambda_2<1$:
\[\begin{split}
C(W)&=\lim_{n\rightarrow\infty} \frac1n \log\log N(n,\lambda_1,\lambda_2)\\
  &= \lim_{n\rightarrow\infty} \frac1n \log\log N^\text{sim}(n,\lambda_1,\lambda_2).
  \end{split}
  \]
\end{theorem}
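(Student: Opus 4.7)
The plan is to establish the chain
\[
  C(W) \leq \ddot{C}_{\text{ID}}^{\text{sim}}(W) \leq \ddot{C}_{\text{ID}}(W) \leq C(W),
\]
with the last inequality in its strong-converse form, namely for every fixed $\lambda_1,\lambda_2>0$ with $\lambda_1+\lambda_2<1$, one has $\limsup_{n\to\infty}\frac{1}{n}\log\log N(n,\lambda_1,\lambda_2)\leq C(W)$.

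For the achievability $C(W)\leq \ddot{C}_{\text{ID}}^{\text{sim}}(W)$, I would mirror the Ahlswede-Dueck reduction adapted to cq-channels. Fix $\eta>0$ and take an $(n,M,\mu)$-transmission code $\{(u_m,D_m):m\in[M]\}$ with $M\geq 2^{n(C(W)-\eta)}$, guaranteed by the HSW Theorem \ref{thm:HSW}. Then apply the Ahlswede-Dueck combinatorial construction (as described in the introduction), producing $N\geq 2^{\lfloor\epsilon M\rfloor}/M$ subsets $\cM_j\subseteq[M]$ of size $\lfloor\epsilon M\rfloor$ with pairwise intersection at most $\lambda\lfloor\epsilon M\rfloor$, for chosen small $\lambda,\epsilon>0$. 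Define the ID code $(P_j,E_j)$ by letting $P_j$ be uniform on $\{u_m:m\in\cM_j\}$ and $E_j:=\sum_{m\in\cM_j}D_m$. The first-kind error is at most $\mu$ (inherited from the transmission error, averaged over $\cM_j$), and the second-kind error is at most $\lambda+\mu$ (overlap fraction plus transmission leakage, using that $\sum_m D_m\leq\1$). Crucially, since every $E_j$ is a coarse-graining of the single POVM $(D_m)_{m\in[M]}$, the resulting code is automatically \emph{simultaneous}. A direct computation yields $\frac{1}{n}\log\log N\to C(W)-\eta$, and sending $\eta\downarrow 0$ gives the lower bound.

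The inequality $\ddot{C}_{\text{ID}}^{\text{sim}}(W)\leq \ddot{C}_{\text{ID}}(W)$ is trivial by definition. For the strong converse $\ddot{C}_{\text{ID}}(W)\leq C(W)$, I would follow the operator Chernoff approach of Ahlswede-Winter. Given any $(n,N,\lambda_1,\lambda_2)$-ID code with $\lambda_1+\lambda_2<1$, the output states $\sigma_j=P_jW^n\in\cS(B^n)$ are pairwise $(1-\lambda_1-\lambda_2)$-distinguishable by $(E_j,\1-E_j)$. Let $\bar\sigma=\frac{1}{N}\sum_j\sigma_j$ and use the operator Chernoff inequality on i.i.d.\ projections onto typical subspaces to exhibit a projection $\Pi$ of rank $\leq 2^{n(C(W)+\eta)}$ with $\Tr\bar\sigma\,\Pi$ close to $1$. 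A random-relabelling argument then shows that the number of states that are pairwise $(1-\lambda_1-\lambda_2)$-distinguishable and whose averages are essentially supported in the range of $\Pi$ is at most $2^{2^{n(C(W)+\eta)}}$ up to lower-order corrections. Taking $\log\log$ and letting $\eta\downarrow 0$ finishes the argument.

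The hard part will be the strong converse, specifically constructing the typical projection $\Pi$ via the operator Chernoff bound when the input alphabet $\cX$ is an arbitrary measurable space, and carefully controlling the trade-off between subspace dimension and concentration under the sole constraint $\lambda_1+\lambda_2<1$ (rather than small errors). The achievability direction is, by contrast, a fairly mechanical transfer of the Ahlswede-Dueck argument, with the pleasant bonus that coarse-grainings of the HSW transmission POVM furnish simultaneous decoders for free, thereby collapsing the two capacities $\ddot{C}_{\text{ID}}^{\text{sim}}(W)$ and $\ddot{C}_{\text{ID}}(W)$ to a single value.
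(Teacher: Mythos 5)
The paper does not supply its own proof of Theorem~\ref{thm:cq-ID}: it is stated as a known result of L\"ober and Ahlswede--Winter, cited as background and used black-box later. There is therefore no internal proof to compare your attempt against; the relevant question is whether your sketch is a correct reconstruction of the arguments in \cite{loeber:PhDThesis,AW:StrongConverse}.

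Your achievability argument is correct and is essentially L\"ober's: take an $(n,M,\mu)$ HSW transmission code $\{(u_m,D_m)\}$, apply the Ahlswede--Dueck set system to get $N\geq 2^{\lfloor\epsilon M\rfloor}/M$ subsets with small pairwise overlap, and let $P_j$ be uniform on $\{u_m:m\in\cM_j\}$ with $E_j=\sum_{m\in\cM_j}D_m$. The error analysis you sketch is right --- for $m\in\cM_j\setminus\cM_k$ one uses $\Tr W_{u_m}^n E_k\leq \Tr W_{u_m}^n(\1-D_m)\leq\mu$, and the $m\in\cM_j\cap\cM_k$ contribute at most $\lambda$, giving $\lambda_2\leq\lambda+\mu$. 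Your remark that the decoders $E_j$ are coarse-grainings of a single POVM, hence the code is automatically simultaneous, is exactly the observation that collapses $\ddot{C}_{\text{ID}}^{\text{sim}}$ and $\ddot{C}_{\text{ID}}$ from below.

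The converse sketch is where the real difficulty lies, and as written it is too vague in the one place that matters. The operator-Chernoff/typical-projection step is fine in spirit (this is the Ahlswede--Winter route), but what you describe afterwards as a ``random-relabelling argument'' does not correspond to any recognizable step. The actual argument is a covering (net) plus pigeonhole argument: one first shows each $\sigma_j=P_jW^n$ is close in trace norm to a state $\hat\sigma_j$ supported on the range of the typical projection $\Pi$ with $\operatorname{rank}\Pi\leq 2^{n(C(W)+\eta)}$; one then fixes an $\epsilon$-net of states on $\operatorname{ran}\Pi$, which has cardinality $\leq (c/\epsilon)^{2\operatorname{rank}(\Pi)^2}$; finally, since the $\sigma_j$ are pairwise $(1-\lambda_1-\lambda_2)$-distinguishable and the approximation/net errors can be pushed below that threshold, distinct $j$'s land on distinct net points, so $N$ is bounded by the net size, giving $\frac1n\log\log N\leq C(W)+\eta+o(1)$. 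To make this work in strong-converse form under the weak hypothesis $\lambda_1+\lambda_2<1$, you must choose the net scale $\epsilon$ (and the typicality threshold $\eta$) as a function of the fixed gap $1-\lambda_1-\lambda_2$, which is precisely the delicate bookkeeping that you have elided. Also note that the projection $\Pi$ is built from the maximizing input distribution (or a near-optimal type) via the Holevo quantity, not from $\bar\sigma=\frac1N\sum_j\sigma_j$ directly --- the latter need not concentrate on a subspace of rank $2^{nC(W)}$ without first invoking the channel structure. So the converse half of your proposal needs to be replaced by the explicit net-and-pigeonhole argument, with the parameter trade-offs tied to the assumption $\lambda_1+\lambda_2<1$.
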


For general quantum channels (where now to each message is associated an encoding state $\rho_j\in\cS(A^n)$, in general mixed), simultaneous and non-simultaneous ID capacities turn out to be different, although both have a double exponential capacity and $\ddot{C}_{\text{ID}}(\cN) \geq \ddot{C}_{\text{ID}}^{\text{sim}}(\cN) \geq C(\cN)$ \cite{loeber:PhDThesis}. 
For example, for the noiseless qubit channel $\id_2$, $\ddot{C}_{\text{ID}}(\id_2) = 2$ \cite{Winter:QC-ID-randomness}, whereas it was shown recently \cite{TSW:soft-covering,CDBW:ID-0-sim:ICC,CDBW:ID-0-sim} that $\ddot{C}_{\text{ID}}^{\text{sim}}(\id_2) = 1$, demonstrating that the first inequality is generally strict. It remains unknown whether the second is an equality or strict. 

Note that as in the transmission problem, we can also here consider the restriction that the encodings $\rho_j\in\cS(A^n)$ be separable states across the $n$ systems, giving rise to the maximum code sizes $N^{\text{sep}}(n,\lambda_1,\lambda_2)$ and $N^{\text{sep,sim}}(n,\lambda_1,\lambda_2)$, and the double exponential capacities $\ddot{C}_{\text{ID}}^{\text{sep}}(\cN)$ and $\ddot{C}_{\text{ID}}^{\text{sep,sim}}(\cN)$, respectively. Then the same reduction to the associated cq-channel $W_x=\cN(X)$ for $x\in\cX:=\cS(A)$ shows that in general, $\ddot{C}_{\text{ID}}^{\text{sep}}(\cN) = \ddot{C}_{\text{ID}}^{\text{sep,sim}}(\cN) = \chi(\cN)$. 

Following now the classical development, we call an ID code for the memoryless cq-channel $W$ \emph{deterministic} if all the distributions $P_j$ are point masses $\delta_{u_j}$ with $u_j\in\cX^n$. As before, $N_\text{DI}(n,\lambda_1,\lambda_2)$ and $N_\text{DI}^\text{sim}(n,\lambda_1,\lambda_2)$ are defined as the largest $N$ such that a deterministic $(n,N,\lambda_1,\lambda_2)$-ID code and a simultaneous and deterministic code exist respectively. This gives rise to the following DI capacities, defined at the slightly superexponential scale:
\begin{equation}\begin{split}
  \dot{C}_\text{DI}(W)
  &:=\inf_{\lambda_1,\lambda_2>0} \liminf_{n\rightarrow\infty} \frac{1}{n\log n} \log N_{\text{DI}}(n,\lambda_1,\lambda_2), \\
  \dot{C}_\text{DI}^{\text{sim}}(W)
  &:=\inf_{\lambda_1,\lambda_2>0} \liminf_{n\rightarrow\infty} \frac{1}{n\log n} \log N_{\text{DI}}^{\text{sim}}(n,\lambda_1,\lambda_2).
\end{split}\end{equation}

Now we can prove the following analogue, and indeed generalisation of Theorem \ref{thm:pessimistic_final}. In particular, we find that the maximum code size for DI codes is still of the order $2^{Rn\log n}$ if the Minkowski dimension of the output of the channel is positive. To state it, we need to fix an informationally complete POVM $T=(T_y:y\in\cY)$ on $B$ with $\infty > |\cY|\geq |B|^2$. Then we can form the concatenated channel $\overline{W}=\cT\circ W$ of $W$ followed by the quantum-classical (qc-) channel $\cT(\sigma) = \sum_y (\Tr\sigma T_y) \proj{y}$, so that we can identify $\overline{W}$ with the classical channel $\overline{W}(y|x)=\Tr W_xT_y$. Observe that $\widehat{\cX} = \overline{W}(\cX)$ is a linear one-to-one image of $\widetilde{\cX} = W(\cX)$, hence they share their upper and lower Minkowski dimensions.

\begin{theorem}
\label{thm:cq-DI}
The slightly superexponential simultaneous and general DI capacities of a cq-channel $W:\cX\rightarrow\cS(B)$ are bounded as follows (for any $\lambda_1,\lambda_2>0$ with $\lambda_1+\lambda_2<1$):
\[
\begin{split}
\frac14{\underline{d}_M\!\left(\sqrt{\!\widehat{\cX}}\right)}
   \leq \dot{C}_{\text{DI}}^\text{sim}(W) 
   &\leq \liminf_{n\rightarrow\infty} \frac{1}{n\log n} \log N_{\text{DI}}(n,\lambda_1,\lambda_2)\\
   &\leq\frac12 \underline{d}_M\!\left(\!\sqrt{\!\widetilde{\cX}}\right),
\end{split}
\]
where $\widehat{\cX}=\overline{W}(\cX)\subset\cP(\cY)$, $\widetilde{\cX}=W(\cX)\subset\cS(B)$ and $\sqrt{\!\widetilde{\cX}} = \left\{ \sqrt{\sigma} : \sigma\in\widetilde{\cX} \right\}$. 
\end{theorem}

\begin{proof}
We follow closely the classical case. For the strong converse part, we start with deterministic $(n,N,\lambda_1,\lambda_2)$-ID codes $\{(u_j,E_j) : j\in[N] \}$. We know that the trace distance between code words is lower bounded $\frac12\|W_{u_j}-W_{u_k}\|_1 \geq 1-\lambda_1,\lambda_2 =: 3r > 0$, hence by Lemma \ref{lemma:HS-Trace},
\[
  \|{\sqrt{W_{u_j}}-\sqrt{W_{u_k}}}\|_2
    \geq \frac12 \|{W_{u_j}-W_{u_k}}\|_1
    \geq 3r.
\]
Pick a sequence of $\delta_k > 0$ converging to $0$ such that 
\[
  \lim_{k\rightarrow\infty} \frac{\Gamma_{\delta_k}\!\left(\!\sqrt{\!\widetilde{\cX}}\right)}{-\log\delta_k} 
  = \underline{d}_M\!\left(\!\sqrt{\!\widetilde{\cX}}\right) =: d, 
\]
and for each $k$ choose a minimal $\delta_k$-covering of $\sqrt{\!\widetilde{\cX}}$. Defining the block lengths $n_k=\lfloor r^2/\delta_k^2\rfloor$, we automatically have $\frac{r}{\sqrt{n_k}}$-coverings with Hilbert-Schmidt balls centered at points from a subset $\cX_{r\!/\!\sqrt{n_k}}\subset\cX$ such that for every $x\in\cX$ there is a $\zeta\in\cX_{r\!/\!\sqrt{n_k}}$ with $\|{\sqrt{W_x}-\sqrt{W_\zeta}}\|_2\leq r/\sqrt{n_k}$. By definition of the Minkowski dimension, for every $\epsilon > 0$ we can find a $K > 0$ such that $\lvert{\cX_{r\!/\!\sqrt{n_k}}}\rvert \leq \left(K\sqrt{n_k}/r\right)^{d+\epsilon}$. Now, by replacing the code words $u_j = x^{n_k}$ in our code by $u_j' = \zeta^{n_k}$, defined as the letterwise nearest word in $\cX_{r\!/\!\sqrt{n_k}}^{n_k}$, we find another $(n,N,\lambda_1+r,\lambda_2+r)$-DI code, which thus necessarily has $N\leq\lvert{\cX_{r\!/\!\sqrt{n_k}}}\rvert^{n_k}$, so taking the logarithm we notice 
\(\log N\leq\frac12(d+\epsilon)n_k\log n_k+O(n_k)\), and therefore:
\[
\begin{split}
\dot{C}_{\text{DI}}(W)
  &\leq \liminf_{n\rightarrow\infty} \frac{\log N_{\text{DI}}(n,\lambda_1,\lambda_2)}{n \log n}\\
  &\leq \liminf_{k\rightarrow\infty} \frac{\log N_{\text{DI}}(n_k,\lambda_1,\lambda_2)}{n_k \log n_k}\leq\frac{1}{2}\underline{d}_M\!\left(\!\sqrt{\!\widetilde{\cX}}\right).
\end{split}
\]

For the direct part, we reduce the statement directly to Theorem \ref{thm:pessimistic_final}, since $\overline{W}$ is a classical channel and $\widehat{\cX} \subset \cP(\cY)$ is its image of output distributions. Note that the resulting code for $W$ is automatically simultaneous because we measure the output quantum states with the fixed POVM $T^{\ox n}$, followed by classical post-processing. 
\end{proof}

Finally, we can return to the question of how to approach deterministic identification over general quantum channels $\cN:A\rightarrow B$. We have hinted already at the beginning of this section that if we allow arbitrary pure state encodings, motivated by the idea that pure states are the least randomised quantum states, this still results in doubly exponential code sizes, with our without simultaneity of the decoder imposed \cite{CDBW:ID-0-sim:ICC,CDBW:ID-0-sim}. 
Another, more stringent characteristic of classical deterministic identification is of course that the encodings are words over the input alphabet; in quantum language that would mean that they are tensor products, 
$\rho_j = \rho_j^{(1)}\ox\cdots\ox\rho_j^{(n)}$ with $\rho_j^{(i)}\in\cS(A)$. 

By fixing additionally a subset $\cX\subset\cS(A)$, we propose to define an \emph{$(n,N,\lambda_1,\lambda_2)$-deterministic identification (DI) $\cX$-code} for the pair $(\cN,\cX)$ as an $(n,N,\lambda_1,\lambda_2)$-ID code 
\[
  \left\{ (\rho_j= \rho_j^{(1)}\ox\cdots\ox\rho_j^{(n)},E_j) : j\in[N] \right\} \quad \text{with} \quad \rho_j^{(i)}\in\cX.
\]
For the purpose of the following result, the maximum number of messages in a code is denoted $N_{\text{DI},\cX}(n,\lambda_1,\lambda_2)$ and $N_{\text{DI},\cX}^{\text{sim}}(n,\lambda_1,\lambda_2)$ in the general and simultaneous case respectively. The capacities are defined as usual in the superexponential scale, and we may denote them $\dot{C}_{\text{DI}}(\cN\vert_{\cX})$ and $\dot{C}_{\text{DI}}^\text{sim}(\cN\vert_{\cX})$. 

Thanks to the reduction, already repeated twice, of the quantum channel $\cN$ with input restriction $\cX\subset\cS(A)$ to a cq-channel $W:\cX\rightarrow\cS(B)$, $W_x=\cN(x)$, which maps deterministic identification $\cX$-codes for $\cN$ into deterministic identification codes for $W$ and vice versa, we immediately get the following corollary.

\begin{corollary}
\label{cor:q-DI}
For a quantum channel $\cN:A\rightarrow B$ and a product state restriction to encodings in $\cX\subset\cS(A)$, the slightly superexponential simultaneous and general DI capacities are bounded as follows:
\[
\begin{split}
\frac14{\underline{d}_M\!\left(\!\sqrt{\!\widehat{\cX}}\right)}
   \leq \dot{C}_{\text{DI}}^\text{sim}(\cN\vert_{\cX}) 
   &\leq \liminf_{n\rightarrow\infty} \frac{\log N_{\text{DI},\cX}(n,\lambda_1,\lambda_2)}{n\log n} \\
   &\leq \frac12\underline{d}_M\!\left(\!\sqrt{\!\widetilde{\cX}}\right),
\end{split}  
\]
where $\widetilde{\cX} = \cN(\cX) \subset\cS(B)$ and $\widehat{\cX} = \overline{W}(\cX)$ as before. 
\qed
\end{corollary}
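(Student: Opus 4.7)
The plan is to derive the corollary directly from Theorem \ref{thm:cq-DI} by means of the reduction mentioned in the paragraph immediately preceding the corollary statement. Concretely, given the quantum channel $\cN:A\rightarrow B$ together with the input-state restriction $\cX\subset\cS(A)$, I would define the associated cq-channel $W:\cX\rightarrow\cS(B)$ by $W_x=\cN(x)$, where the input space $\cX$ is equipped with the Borel $\sigma$-algebra inherited from $\cS(A)$ (so that $W$ is measurable, since $\cN$ is continuous and hence measurable as a map on states).

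The key step is to establish the bijection between the two notions of code. A DI $\cX$-code for $(\cN,\cX)$ is by definition a family $\{(\rho_j^{(1)}\ox\cdots\ox\rho_j^{(n)},E_j):j\in[N]\}$ with each $\rho_j^{(i)}\in\cX$, and writing $u_j=(\rho_j^{(1)},\ldots,\rho_j^{(n)})\in\cX^n$, we have
\[
  \cN^{\ox n}\!\left(\rho_j^{(1)}\ox\cdots\ox\rho_j^{(n)}\right)
   = \cN(\rho_j^{(1)})\ox\cdots\ox\cN(\rho_j^{(n)})
   = W_{u_j}.
\]
Hence the code is an $(n,N,\lambda_1,\lambda_2)$-DI $\cX$-code for $\cN$ if and only if $\{(u_j,E_j):j\in[N]\}$ is an $(n,N,\lambda_1,\lambda_2)$-deterministic ID code for the cq-channel $W$, with identical error probabilities; simultaneity is preserved as well, since the decoding POVMs $(E_j,\1-E_j)$ on $B^n$ are unchanged by the reduction. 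Therefore
\[
  N_{\text{DI},\cX}(n,\lambda_1,\lambda_2) = N_{\text{DI}}(n,\lambda_1,\lambda_2)
  \quad\text{and}\quad
  N_{\text{DI},\cX}^{\text{sim}}(n,\lambda_1,\lambda_2) = N_{\text{DI}}^{\text{sim}}(n,\lambda_1,\lambda_2),
\]
where on the right-hand sides the code sizes refer to $W$.

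Finally, observe that the output set of the cq-channel is $W(\cX)=\cN(\cX)=\widetilde{\cX}\subset\cS(B)$, so the upper and lower Minkowski dimensions appearing in the bounds are literally the same quantities on both sides. Applying Theorem \ref{thm:cq-DI} to $W$ yields the claimed chain of inequalities for $\dot{C}_{\text{DI},\cX}^{\text{sim}}(\cN)$ and $\dot{C}_{\text{DI},\cX}(\cN)$. There is no real obstacle here, since all the technical work (the Hypothesis Testing Lemma, the Gilbert--Varshamov packing argument, and the informationally complete POVM reduction to the classical setting) is already carried out in Theorem \ref{thm:cq-DI}; the only thing to verify carefully is that the product-state restriction on the quantum side corresponds exactly to the word-encoding restriction on the cq side, which is immediate from the tensor-product structure of $\cN^{\ox n}$.
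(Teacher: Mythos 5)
Your proof is correct and takes exactly the paper's intended route: reduce $(\cN,\cX)$ to the cq-channel $W_x=\cN(x)$ via the tensor-product structure of $\cN^{\ox n}$, observe the bijection between DI $\cX$-codes and deterministic ID codes for $W$ (with simultaneity preserved, since the decoder side is untouched), and invoke Theorem~\ref{thm:cq-DI}. The paper leaves this reduction implicit (hence the bare \qed); you merely spell it out, which is fine.
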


Evidently, we can also emulate the analysis of the optimistic deterministic identification capacity (simultaneous or general) for cq-channels $W:\cX\rightarrow \cS(B)$. By adapting the proof of Theorem \ref{thm:optimistic_final} along the lines of the the proof of Theorem \ref{thm:cq-DI}, we obtain the following: 

\begin{theorem}
\label{thm:cq-DI-optimistic}
The slightly superexponential simultaneous and general optimistic DI capacities of a cq-channel $W:\cX\rightarrow\cS(B)$ are bounded as follows (for any $\lambda_1,\lambda_2>0$ with $\lambda_1+\lambda_2<1$):
\[
\begin{split}
  \frac14{\overline{d}_M\! \left(\!\sqrt{\!\widehat{\cX}}\right)}
   \leq \dot{C}_{\text{DI}}^\text{sim,opt}(W) 
   &\leq \limsup_{n\rightarrow\infty} \frac{\log N_{\text{DI}}(n,\lambda_1,\lambda_2)}{n\log n}\\ 
   &\leq\frac12 \overline{d}_M\!\left(\!\sqrt{\!\widetilde{\cX}}\right).
  \hspace{2.4cm}\qedsymbol
\end{split}
\]
\end{theorem}

\section{Discussion}
\label{sec:discussion}

\subsection{A tale of two capacities}
In 1968, Ahlswede showed that communication over non-stationary (yet memoryless) channels can easily have pessimistic capacity $\underline{C}=0$ and optimistic capacity $\overline{C}>0$ \cite{Ahlswede1968}. Through such cases he argued that, despite the optimistic and pessimistic capacity numbers containing important information about the channel, they are far from giving the complete picture. Ahlswede instead suggested to use the entire sequence of rates $R(n) = \frac1n \log N$ for every block length $n$, for which he showed a coding theorem and strong converse. In 2006, he returned to the issue in \cite{Ahlswede2006}, addressing the ``lack of precision in terminology in Information Theory''. There, the concepts of pessimistic and optimistic capacities appeared again formally defined (cf.~\cite{CK:book2011}), along with different examples where either the optimistic or pessimistic capacities clearly give an incomplete picture of the particular communication problem at hand. He qualified the optimistic view as a ``dream world'' for multi-way and compound channels, and the pessimistic view as ``an Information Theoretic Perpetuum Mobile'', indeed pointing out that saying a channel has no capacity because $\underline{C}=0$ is not reasonable. He concluded from this that there is no ``true'' capacity of a channel, and highlighted the importance of correctly choosing and clearly defining the capacity we use in our problems.

The distinction between these different capacity definitions is normally hidden behind the fact that in most simple cases (for stationary DMCs and average DMCs, for example) the optimistic and pessimistic capacities coincide, and a gap between these two only appeared in much more complex (non-stationary) models. However, in the present paper we show a separation in an i.i.d.~setting for deterministic identification capacities, prompting us to revisit Ahlswede's discussion.

The gap between the optimistic and pessimistic capacities is also highly relevant to understand the present superactivation results for the DI capacity. Notice that in all the scenarios where we observe superactivation in classical settings, both the secure transmission and identification over the non-memoryless AVWC and the particular DI scenarios discussed in this paper (which are i.i.d.), we are exploiting the gap between optimistic and pessimistic capacities. Indeed, let us define $W_1$ and $W_2$ as two channel instances, and $\underline{C}$ and $\overline{C}$ as the pessimistic and optimistic capacities respectively on any of the previous cases. Superactivation amounts to $\underline{C}(W_1)=\underline{C}(W_2)=0$ and $\underline{C}(W_1\times W_2)>0$. In all those known classical cases mentioned, we can observe that additionally $\overline{C}(W_1)>0$, $\overline{C}(W_2)>0$, and $\overline{C}(W_1) + \overline{C}(W_2)= \overline{C}(W_1\times W_2) \geq \underline{C}(W_1\times W_2)$. In other words, superactivation of the pessimistic capacity is accompanied by a gap between the pessimistic and optimistic capacity, and the latter even turns out to be additive. 

In the present paper, when comparing Theorems \ref{thm:pessimistic_final} (pessimistic DI capacity $\dot{C}_{\text{DI}}$) and \ref{thm:optimistic_final} (optimistic DI capacity $\dot{C}_{\text{DI}}^{\text{opt}}$) in the light of the superactivation example for the former in Theorem \ref{thm:superactivation}, we can come to similar conclusions using known relations between the lower and upper Minkowski dimensions of sets and their Cartesian products. The following results is a refinement of Proposition \ref{prop:dimension}: 

\begin{proposition}[{Robinson/Sharples~\cite{RobinsonSharples}}]
\label{prop:dimension-improved}
For any two subsets $F$ and $G$ in a metric space, 
\[\begin{split}
  \underline{d}_M(F)+\underline{d}_M(G) 
    &\leq \underline{d}_M(F\times G) \\
    &\leq \left\{ \myfrac{\underline{d}_M(F)+\overline{d}_M(G)}{\overline{d}_M(F)+\underline{d}_M(G)} \right\}\\ 
    &\leq \overline{d}_M(F\times G)
    \leq \overline{d}_M(F)+\overline{d}_M(G),
\end{split}\]
and all of the inequalities can be strict. 
\end{proposition}

In any example of superactivation of the DI capacity, i.e.~when $\dot{C}_{\text{DI}}(W_1) = \dot{C}_{\text{DI}}(W_2) = 0$ but $\dot{C}_{\text{DI}}(W_1\times W_2) > 0$, our results show that this is equivalent to $\underline{d}_M(F) = \underline{d}_M(G) = 0$ for $F=\sqrt{\!\widetilde{\cX}_1}$, $G=\sqrt{\!\widetilde{\cX}_2}$, and $\underline{d}_M(F\times G) > 0$. Now the second inequality of Proposition \ref{prop:dimension-improved} implies that then necessarily $\overline{d}_M(F)>0$ and $\overline{d}_M(G)>0$, meaning that $\dot{C}_{\text{DI}}^{\text{opt}}(W_1),\, \dot{C}_{\text{DI}}^{\text{opt}}(W_2) > 0$. 
What this means is that even though both $W_1$ and $W_2$ have infinitely many block lengths where each one performs poorly, superactivation is only possible if both of them equally have infinitely many block lengths where each one achieves a positive (slightly superexponential) rate. Indeed, this is how the example of Theorem \ref{thm:superactivation} is constructed: for a long stretch of block lengths, $W_1$ is bad but $W_2$ makes up for it by being good at those, followed by a long stretch of block lengths where  $W_2$ is bad but $W_1$ makes up for it by being good, and so on alternatingly. 

Ahlswede \cite{Ahlswede1968,Ahlswede2006} derived his conclusions from memoryless channels with strong time variation, but should we heed his cautionary tale also in the present case of DI capacity over i.i.d.~channels? Here, the variability comes from the fact that increasing block lengths correspond to the successive zooming in to the fractal geometry of $\widetilde{\cX}$ and that can produce a certain nonuniformity. It remains the common feature that superactivation goes along with positive optimistic capacity, which could be interpreted as a certain ``potential'' for capacity (cf.~\cite{SSW:symmetric,YangWinter:potential}). Indeed, this leads to new questions: first, regarding the converse of the above observation, does a channel with zero lower Minkowski dimension but positive upper Minkowski dimension of $\sqrt{\!\widetilde{\cX}}$ activate or even superactivate some other channel with regards to their DI capacity? Secondly, if on the other hand the Minkowski dimension of $\sqrt{\!\widetilde{\cX}}$ is well-defined (i.e.,~lower and upper variant coincide), are then the pessimistic and the optimistic capacity equal? More to the point, is the DI capacity additive for two such channels?

\subsection{Conclusions and further open questions}
By considering general finite-output but arbitrary-input memoryless channels, we have shown that the superexponential scaling with an exponent of order $n\log n$ in the block length is a general feature of deterministic identification codes via noisy channels. Furthermore, the optimal (redefined) rate is related, through upper and lower bounds, to the Minkowski dimension of the set of output probability distribution inside the probability simplex over the output alphabet. This is surprising since capacities are more commonly related to metric aspects of this output set. For instance, Shannon's communication capacity of the same channel is given by the divergence radius. In contrast, our DI capacity formulas (or rather bounds) are intrinsically scale invariant. 
We furthermore provide the very general Theorem \ref{thm:Abstract_Generalization} that upper- and lower-bounds the maximum size of a DI code for general sets and block lengths, hence providing meaningful bounds for suitably defined capacities even when the Minkowski dimension is $0$.

We used these results to show that superactivation is possible in the communication setting of deterministic identification over classical memoryless channels, a first in the classical world, and have analysed the effect in terms of the behaviour of the lower and upper Minkowski dimension under Cartesian products, relating it to Ahlswede's analysis of the concepts of optimistic and pessimistic capacity and their insufficiency to fully capture the communication capabilities of a channel. 

Because of the insensitivity to metric, our results carry over almost unchanged to classical-quantum channels with finite-dimensional output system and to general quantum channels with a product state restriction on the encoding: all of these channels are essentially given by a subset of quantum state space at the output, one-half the lower (upper) Minkowski dimension of its square root set bounding from above the pessimistic (optimistic) deterministic identification capacity. In the quantum case, however, the lower bound does not match the upper bound as beautifully as in the classical case: we first have to map the output states through a fixed measurement to output probability distributions, so that one-quarter the Minkowski dimension of its square root set is a lower bound on the simultaneous DI capacity. Of course, this disparity might be due to the lower and upper bounds concerning different DI capacities, namely the simultaneous and general flavour, which we do not know to be equal or different (note, though, that with the restriction to tensor product inputs, the simultaneous and non-simultaneous randomized ID capacities coincide). However, the lower bound is dissatisfying also because it depends on the informationally complete measurement $T$ chosen in Theorem \ref{thm:cq-DI}; we would like to know how to express the result of optimising over all POVMs. Furthermore, is there a lower bound on the general DI capacity of a cq-channel that refers directly to the quantum states? The bottleneck towards a lower bound matching the converse seems to be a quantum state analogue of the Hypothesis Testing Lemma \ref{lemma:abstract-hypo-testing} for a suitably defined entropy typical projector, as all other elements for quantum are in place to carry out the quantum generalisation. 

We believe that our results go some way towards explaining the previous findings 
regarding Gaussian and Poisson channels. Indeed, the analysis of the Bernoulli channel implies the previous lower bounds on the DI capacities of those channels. 
The biggest open question, as in these prior works, is the determination of the exact superexponential capacity. Indeed, our results might be taken to suggest that $\dot{C}_{\text{DI}}(W) = \gamma\, \underline{d}_M\bigl(W(\cX)\bigr)$ with a universal constant $\gamma \in \left[\frac14;\frac12\right]$. Furthermore, the continuous example in Subsection \ref{subsec:example} where upper and lower bound coincide, and the apparent tightness of our converses (especially our improvement of the Poisson channel upper bound in Subsection \ref{subsec:poisson}, see also Remark \ref{Remark:direct_part}) lead us to speculate that the lower bounds could be the ones amenable to improvement, and even that the upper bound might be attainable in the general case ($\gamma=\frac12$). 
But to determine this remains for future investigation.

\section*{Acknowledgments}
The authors thank Alan Sheretz and Adam Beckenbaugh for invaluable insights into erroneous message identification under various constraints, dating back to several discussions with the last author in Benson, AZ.

\renewcommand*{\bibfont}{\footnotesize}
\printbibliography

@ARTICLE{AD:ID_ViaChannels,
  author={Ahlswede, Rudolf and Dueck, Gunter},
  journal={IEEE Transactions on Information Theory}, 
  title={Identification via channels}, 
  year={1989},
  volume={35},
  number={1},
  pages={15-29},
  doi={10.1109/18.42172}
}

@book{Robinson:dimensions, 
place={Cambridge}, 
title={Dimensions, Embeddings, and Attractors}, 
publisher={Cambridge University Press}, 
author={Robinson, James C.}, 
year={2010}, 
doi={10.1017/CBO9780511933912}
}

@book{Fraser:dimensions, 
place={Cambridge}, 
title={Assouad Dimension and Fractal Geometry}, 
publisher={Cambridge University Press}, 
author={Fraser, Jonathan M.}, 
year={2020}, 
doi={10.1017/9781108778459}
}

@inproceedings{Yao:complexity,
author = {Yao, Andrew Chi-Chih},
title = {Some complexity questions related to distributive computing},
year = {1979},
publisher = {ACM},
address = {New York, NY, USA},
doi = {10.1145/800135.804414},
booktitle = {Proc. 11th Annual ACM Symposium on Theory of Computing (STOC)},
pages = {209–213},
numpages = {5},
location = {Atlanta, GA}
}

@book{Falconer:fractal,
  author = {Falconer, Kenneth J.},
  doi = {10.1002/0470013850},
  publisher = {Wiley \&{} Sons},
  title = {Fractal Geometry: Mathematical Foundations and Applications (3rd ed.)},
  year = {2014}
}

@book{Falconer85:fractal2,
place={Cambridge}, 
series={Cambridge Tracts in Mathematics}, 
title={The Geometry of Fractal Sets}, 
publisher={Cambridge University Press}, 
author={Falconer, Kenneth J.}, 
year={1985}, 
collection={Cambridge Tracts in Mathematics},
doi = {10.1017/CBO9780511623738}
}

@ARTICLE{Gilbert:combinatorics,
  author={Gilbert, Edgar N.},
  journal={The Bell System Technical Journal}, 
  title={A comparison of signalling alphabets}, 
  year={1952},
  volume={31},
  number={3},
  pages={504-522},
  keywords={},
  doi={10.1002/j.1538-7305.1952.tb01393.x}}

@ARTICLE{Var:combinatorics,
  author={Rom R. Varshamov},
  journal={Doklady Akademii Nauk SSSR}, 
  title={Estimate of the number of signals in error correcting codes}, 
  year={1957},
  volume={117},
  number={5},
  pages={739–741},
  url = {https://www.mathnet.ru/eng/dan/v117/i5/p739}
}

@inproceedings{DI_simpler_impl,
author = {Bocchino, Robert L. and Adve, Vikram S. and Adve, Sarita V. and Snir, Marc},
title = {Parallel programming must be deterministic by default},
year = {2009},
publisher = {USENIX Association},
address = {USA},
booktitle = {Proc. 1st USENIX Conference on Hot Topics in Parallelism (HotPar)},
pages = {4},
numpages = {1},
location = {Berkeley, California}
}

@ARTICLE{DI_explicit_construction,
  author={Arikan, Erdal},
  journal={IEEE Transactions on Information Theory}, 
  title={{Channel Polarization: A Method for Constructing Capacity-Achieving Codes for Symmetric Binary-Input Memoryless Channels}}, 
  year={2009},
  volume={55},
  number={7},
  pages={3051-3073},
  doi={10.1109/TIT.2009.2021379}}

@article{Wolfowitz:converse,
  title={The coding of messages subject to chance errors},
  author={Jacob Wolfowitz},
  journal={Illinois Journal of Mathematics},
  year={1957},
  volume={1},
  number={4},
  month={12},
  pages={591-606},
  doi={10.1215/ijm/1255380682}
}

@book{CS:Packings_lattices,
author = {Conway, John H. and Sloane, Neil J. A.},
year = {1988},
title = {Sphere Packings, Lattices and Groups},
publisher = {Springer Verlag},
series={Grundlehren der mathematischen Wissenschaften},
volume = {290},
doi = {10.1007/978-1-4757-2016-7}
}

@inproceedings{Cohn:Order_Packing,
   title={Order and Disorder in Energy Minimization},
   DOI={10.1142/9789814324359_0152},
   booktitle={Proc. International Congress of Mathematicians 2010 (ICM 2010)},
   publisher={Hindustan Book Agency (HBA) in India; World Scientific in the rest of the world},
   author={Cohn, Henry},
   year={2011},
   pages={2416-2443},
   month={06}
}

@ARTICLE{Shannon:TheoryCommunication,
  author={Shannon, Claude E.},
  journal={The Bell System Technical Journal}, 
  title={A mathematical theory of communication}, 
  year={1948},
  volume={27},
  number={3{\&}4},
  pages={379-423 {\&} 623–656},
  doi={10.1002/j.1538-7305.1948.tb01338.x}
}

@phdthesis{loeber:PhDThesis,
      title={{Quantum Channels and Simultaneous ID Coding}}, 
      author={Peter L\"ober},
      school={University of Bielefeld, Mathematics Department}, 
      year={1999},
      month = {07},
      doi={10.48550/arXiv.quant-ph/99070019}
}

@phdthesis{Cao:PhDThesis,
      title={Discrete-time Poisson channel: capacity and signaling design}, 
      author={Cao, Jihai},
      school={McMaster University}, 
      year={2013},
      month = {10},
      url={http://hdl.handle.net/11375/13276}
}

@phdthesis{winter:PhDThesis,
      title={{Coding Theorems of Quantum Information Theory}}, 
      author={Andreas Winter},
      school={University of Bielefeld, Mathematics Department}, 
      year={1999},
      month = {07},
      doi={10.48550/arXiv.quant-ph/9907077}
}

@article{Winter:QCidentification,
author = {Winter, Andreas},
title = {{Quantum and Classical Message Identification via Quantum Channels}},
year = {2004},
publisher = {Rinton Press, Inc.},
address = {Paramus, NJ},
volume = {4},
number = {6-7},
journal = {Quantum Information and Computation},
month = {12},
pages = {563–578},
doi = {10.26421/QIC4.6-7-14},
note= {Erratum: Quantum Inf. Comp., vol. 5, no. 7, pp. 605-605, Nov. 2005}
}

@incollection{Winter:QC-ID-randomness,
    author = {Andreas Winter},
    title = {{Identification via Quantum Channels in the Presence of Prior
    Correlation and Feedback}}, 
    booktitle = {General Theory of Information Transfer and Combinatorics},
    editor={R. Ahlswede and L. B\"aumer and N. Cai and H. K. Aydinian and V. Blinovsky and C. Deppe and H. Mashurian},
    publisher={Springer Verlag},
    place={Berlin Heidelberg Dielmissen},
    series={Lecture Notes in Computer Science (LNCS)},
    volume={4123},
    pages = {486-504}, 
    year = {2006},
    doi={10.1007/11889342_27}
}

@article{HaydenWinter:QC-ID,
   title={{Weak Decoupling Duality and Quantum Identification}},
   volume={58},
   DOI={10.1109/TIT.2012.2191695},
   number={7},
   journal={IEEE Transactions on Information Theory},
   publisher={IEEE},
   author={Hayden, Patrick and Winter, Andreas},
   year={2012},
   month={7}, 
   pages={4914–4929} 
}

@incollection{Winter:Review,
	doi = {10.1007/978-3-642-36899-8_9},
	year = {2013},
	publisher = {Springer Berlin Heidelberg},
	pages = {217-233},
	author = {Andreas Winter},
	title = {{Identification via Quantum Channels}},
	booktitle = {Information Theory, Combinatorics, and Search Theory}
}

@ARTICLE{AW:StrongConverse,
  author={Ahlswede, Rudolf and Winter, Andreas},
  journal={IEEE Transactions on Information Theory}, 
  title={Strong converse for identification via quantum channels}, 
  year={2002},
  volume={48},
  number={3},
  month={3},
  pages={569-579},
  doi={10.1109/18.985947},
  note={Addendum: IEEE Trans. Inf. Theory, vol. 49, no. 1, p. 346, Jan. 2003}
}

@article{TSW:soft-covering,
    title={Quantum soft-covering lemma with applications to rate-distortion coding, resolvability and identification via quantum channels}, 
    author={Touheed Anwar Atif and S. Sandeep Pradhan and Andreas Winter},
    journal = {International Journal of Quantum Information},
    year={2024},
    volume ={22},
    numer = {5},
    pages = {2440013},
    note={arXiv[quant-ph]:2306.12416},
    doi={10.1142/S0219749924400136}
}

@ARTICLE{holevo:capacity,
  author={Holevo, Alexander S.},
  journal={IEEE Transactions on Information Theory}, 
  title={The capacity of the quantum channel with general signal states}, 
  year={1998},
  volume={44},
  number={1},
  month={1},
  pages={269-273},
  doi={10.1109/18.651037}}

@article{SW:capacity,
  title = {Sending classical information via noisy quantum channels},
  author = {Schumacher, Benjamin and Westmoreland, Michael D.},
  journal = {Physical Review A},
  volume = {56},
  number = {1},
  pages = {131-138},
  year = {1997},
  month = {7},
  publisher = {American Physical Society},
  doi = {10.1103/PhysRevA.56.131}
}

@article{
SY:0CapacityChannels,
author = {Graeme Smith and Jon Yard },
title = {{Quantum Communication with Zero-Capacity Channels}},
journal = {Science},
volume = {321},
number = {5897},
pages = {1812-1815},
year = {2008},
doi = {10.1126/science.1162242}
}

@ARTICLE{SSY:0capacityGaussian,
       author = {{Smith}, Graeme and {Smolin}, John A. and {Yard}, Jon},
        title = "{Quantum communication with Gaussian channels of zero quantum capacity}",
      journal = {Nature Photonics},
         year = 2011,
        month = oct,
       volume = {5},
       number = {10},
        pages = {624-627},
          doi = {10.1038/nphoton.2011.203}
}

@article{GMN:molecularComs,
author = {Gohari, Amin and Mirmohseni, Mahtab and Nasiri-Kenari, M.},
year = {2016},
month = {12},
pages = {120-142},
title = {{Information Theory of Molecular Communication: Directions and Challenges}},
volume = {2},
number = {2},
journal = {IEEE Transactions on Molecular, Biological and Multi-Scale Communications},
doi = {10.1109/TMBMC.2016.2640284}
}

@ARTICLE{BS:superact_wiretap,
  author={Boche, Holger and Schaefer, Rafael F.},
  journal={IEEE Transactions on Information Forensics and Security}, 
  title={Capacity results and super-activation for wiretap channels with active wiretappers}, 
  year={2013},
  volume={8},
  number={9},
  pages={1482-1496},
  keywords={Tin;Receivers;Decoding;Encoding;Security;Random variables;Wireless communication;Wiretap channel;secrecy capacity;strong secrecy;arbitrarily varying channel;common randomness;active wiretapper;super-activation;embedded security;medium access control in secure communication systems},
  doi={10.1109/TIFS.2013.2276049}}

@ARTICLE{SBP:SecureComsAdversial,
  author={Schaefer, Rafael F. and Boche, Holger and Poor, H. Vincent},
  journal={Proceedings of the IEEE}, 
  title={{Secure Communication Under Channel Uncertainty and Adversarial Attacks}}, 
  year={2015},
  volume={103},
  number={10},
  pages={1796-1813},
  keywords={Receivers;Security;Communication system security;Compounds;Physical layer;Transmitters;Communication systems;Security;Telecommunication services;Arbitrarily varying channel;common randomness;compound channel;continuity;robustness;secrecy capacity;wiretap channel;Arbitrarily varying channel;common randomness;compound channel;continuity;robustness;secrecy capacity;wiretap channel},
  doi={10.1109/JPROC.2015.2459652}
}

@ARTICLE{NWB:AVwiretap,
  author={N\"otzel, Janis and Wiese, Moritz and Boche, Holger},
  journal={IEEE Transactions on Information Theory}, 
  title={{The Arbitrarily Varying Wiretap Channel—Secret Randomness, Stability, and Super-Activation}}, 
  year={2016},
  volume={62},
  number={6},
  pages={3504-3531},
  keywords={Encoding;Receivers;Communication systems;Security;Jamming;Reliability;Information security;Secrecy capacity;Wiretap channel;Arbitrarily varying channel;Symmetrizable channel;Information security;secrecy capacity;wiretap channel;arbitrarily varying channel;symmetrizable channel},
  doi={10.1109/TIT.2016.2550587}}

@ARTICLE{BD:secureIDwiretap,
  author={Boche, Holger and Deppe, Christian},
  journal={IEEE Transactions on Information Forensics and Security}, 
  title={{Secure Identification for Wiretap Channels; Robustness, Super-Additivity and Continuity}}, 
  year={2018},
  volume={13},
  number={7},
  pages={1641-1655},
  keywords={Receivers;Compounds;Robustness;Task analysis;Entropy;Security;Probability distribution;Identification;wiretap channel;robustness;super-additivity;continuity},
  doi={10.1109/TIFS.2018.2797004}}

@ARTICLE{Watanabe:minmax,
  author={Watanabe, Shun},
  journal={IEEE Transactions on Information Theory}, 
  title={{Minimax Converse for Identification via Channels}}, 
  year={2022},
  volume={68},
  number={1},
  pages={25-34},
  keywords={Codes;Error probability;Testing;Channel coding;Capacity planning;Information rates;Channel capacity;Channel resolvability;general formula identification code;information spectrum;minimax;second-order rate},
  doi={10.1109/TIT.2021.3120033}}

@ARTICLE{AC:DI,
  author={Ahlswede, Rudolf and Cai, Ning},
  journal={IEEE Transactions on Information Theory}, 
  title={Identification without randomization}, 
  year={1999},
  month={11},
  volume={45},
  number={7},
  pages={2636-2642},
  doi={10.1109/18.796419}
}

@article{Ahlswede:newmethod,
    author = {Rudolf Ahlswede},
    title = {{A Method of Coding and an Application to Arbitrarily Varying Channels}},
    journal = {Journal of Combinatorics, Information {\&} System Sciences},
    year = {1980},
    volume = {5},
    number = {1},
    pages = {10-35}
}

@ARTICLE{SPBD:DI_power,
  author={Salariseddigh, Mohammad J. and Pereg, Uzi and Boche, Holger and Deppe, Christian},
  journal={IEEE Transactions on Information Theory}, 
  title={{Deterministic Identification Over Channels With Power Constraints}}, 
  year={2022},
  volume={68},
  number={1},
  month={1},
  pages={1-24},
  doi={10.1109/TIT.2021.3122811}
}

@ARTICLE{FVdG:ineq,
  author={Fuchs, Christopher A. and van de Graaf, Jeroen},
  journal={IEEE Transactions on Information Theory}, 
  title={Cryptographic distinguishability measures for quantum-mechanical states}, 
  year={1999},
  volume={45},
  number={4},
  month={7},
  pages={1216-1227},
  doi={10.1109/18.761271}
}

@incollection{CDBW:ID-0-sim,
  booktitle = {Information Theory and Related Fields: Essays in Memory of Ning Cai},
  editors = {Christian Deppe and Andreas Winter and Raymond W. Yeung and Holger Boche and Ingo Alth{\"o}fer and Jens Stoye and Ulrich Tamm and Rami Ezzine}, 
  publisher = {Springer Verlag},
  address = {Berlin},
  series = {LNCS}, 
  volume = {14620},
  year = {2025},
  pages = {478-502},
  title = {Zero-entropy encoders and simultaneous decoders in identification via quantum channels},
  author = {Pau Colomer and Christian Deppe and Holger Boche and Andreas Winter},
  note = {arXiv[quant-ph]:2402.09116},
  doi = {10.1007/978-3-031-82014-4_18}
}

@inproceedings{CDBW:ID-continuous-discrete:ISIT,
  booktitle = {Proc. 2024 IEEE International Symposium on Information Theory (ISIT), Athens 7-12 July 2024},
  title={Deterministic identification over channels with finite output: a dimensional perspective on superlinear rates},
  author={Pau Colomer and Christian Deppe and Holger Boche and Andreas Winter},
  year={2024},
  pages = {297-302},
  doi={10.1109/ISIT57864.2024.10619349}
}

@unpublished{VDB:DI-fading-new,
  title={{Deterministic Identification Codes for Fading Channels}},
  author={Ilya Vorobyev and Christian Deppe and Holger Boche},
  note = {arXiv[cs.IT]:2404.02723},
  year={2024},
  month={4},
  doi={10.48550/arXiv.2404.02723}
}

@inproceedings{CDBW:ID-0-sim:ICC,
  title = {Zero-entropy encoders and simultaneous decoders in identification via quantum channels},
  author = {Pau Colomer and Christian Deppe and Holger Boche and Andreas Winter},
  booktitle = {Proc. 2024 IEEE International Conference on Communications, 9–13 June 2024, Denver, CO},
  publisher = {IEEE},
  pages = {1776-1781},
  year = {2024},
  doi = {10.1109/ICC51166.2024.10622637}
}

@INPROCEEDINGS{Ja:ID_easier,
  author={JaJa, Joseph},
  booktitle={Proc. 26th Annual Symposium on Foundations of Computer Science (SFCS 1985)}, 
  title={Identification is easier than decoding}, 
  year={1985},
  pages={43-50},
  doi={10.1109/SFCS.1985.32}
}

@article{HanVerdu:ID,
  author={Han, Te Sun and Verd\'u, Sergio},
  title={New results in the theory of identification via channels}, 
  year={1992},
  month={1},
  journal = {IEEE Transactions on Information Theory},
  volume={38},
  number={1},
  pages={14-25},
  doi={10.1109/18.108245}
}

@article{Holevo:Fidelity,
  author={Alexander S. Holevo},
  title={On quasiequivalence of locally normal states}, 
  year={1972},
  month={1},
  journal = {Theoretical and Mathematical Physics},
  volume={13},
  number={2},
  pages={1071-1082},
  doi={10.1007/BF01035528}
}

@article{Ahlswede1968,
  author       = {Rudolf Ahlswede},
  title        = {{Beitr\"age zur Shannonschen Informationstheorie im Falle nichtstation\"arer Kan\"ale}},
  journal      = {Zeitschrift f\"ur Wahrscheinlichkeitstheorie und Verwandte Gebiete},
  year         = {1968},
  volume       = {10},
  number       = {1},
  pages        = {1-42},
  doi          = {10.1007/BF00572920}
}

@ARTICLE{CDBW:DI_IEEE,
  author={Colomer, Pau and Deppe, Christian and Boche, Holger and Winter, Andreas},
  journal={IEEE Transactions on Information Theory}, 
  title={Deterministic identification over channels with finite output: a dimensional perspective on superlinear rates}, 
  year={2025},
  volume={},
  number={},
  pages={},
  keywords={Codes;Noise measurement;Memoryless systems;Probability distribution;6G mobile communication;Testing;Quantum channels;Error probability;Entropy;Channel coding;Shannon theory;identification via channels;quantum information},
  doi={10.1109/TIT.2025.3531301}}

@Inbook{Ahlswede2006,
author={Ahlswede, Rudolf},
title={{On Concepts of Performance Parameters for Channels}},
bookTitle={General Theory of Information Transfer and Combinatorics},
year={2006},
publisher={Springer Berlin Heidelberg},
address={Berlin, Heidelberg},
pages={639-663},
isbn={978-3-540-46245-3},
doi={10.1007/11889342_40}
}

@book{CK:book2011, 
place={Cambridge}, 
edition={2}, 
title={Information Theory: Coding Theorems for Discrete Memoryless Systems}, 
publisher={Cambridge University Press}, 
author={Csisz\'ar, Imre and K\"orner, J\'anos}, 
year={2011},
doi = {10.1017/CBO9780511921889}
}

@ARTICLE{Shannon:zero-error,
  author={Shannon, Claude E.},
  journal={IRE Transactions on Information Theory}, 
  title={{The Zero-Error Capacity Of a Noisy Channel}}, 
  year={1956},
  volume={2},
  number={3},
  pages={8-19},
  keywords={Feedback;Decoding;Block codes;Upper bound;Memoryless systems;Telephony;Laboratories;Capacity planning;Information analysis;Filling},
  doi={10.1109/TIT.1956.1056798}}

@ARTICLE{Alon:zero-error,
  author={Alon, Noga},
  journal={Combinatorica}, 
  title={{The Shannon Capacity of a Union}}, 
  year={1998},
  volume={18},
  month={3},
  number={3},
  pages={301-310},
  doi={10.1007/PL00009824}}

@inproceedings{Wilde:Fidelity,
   title={{Recoverability for Holevo’s Just-as-Good Fidelity}},
   DOI={10.1109/isit.2018.8437346},
   booktitle={Proc. 2018 IEEE International Symposium on Information Theory (ISIT)},
   publisher={IEEE},
   author={Wilde, Mark M.},
   year={2018},
   month={06}
}

@INPROCEEDINGS{DI-fading,
  author={Salariseddigh, Mohammad J. and Pereg, Uzi and Boche, Holger and Deppe, Christian},
  booktitle={Proc. 2020 IEEE Information Theory Workshop (ITW)}, 
  title={{Deterministic Identification Over Fading Channels}}, 
  year={2021},
  pages={1-5},
  doi={10.1109/ITW46852.2021.9457587}
}

@INPROCEEDINGS{DI-poisson,
  author={Salariseddigh, Mohammad J. and Pereg, Uzi and Boche, Holger and Deppe, Christian and Schober, Robert},
  booktitle={Proc. 2021 IEEE Globecom Workshops}, 
  title={{Deterministic Identification Over Poisson Channels}}, 
  year={2021},
  pages={1-6},
  doi={10.1109/GCWkshps52748.2021.9682110}
}

@ARTICLE{DI-poisson_mc,
  author={Salariseddigh, Mohammad Javad and Jamali, Vahid and Pereg, Uzi and Boche, Holger and Deppe, Christian and Schober, Robert},
  journal={IEEE Transactions on Molecular, Biological and Multi-Scale Communications}, 
  title={{Deterministic Identification for Molecular Communications Over the Poisson Channel}}, 
  year={2023},
  month={4},
  volume={9},
  number={4},
  pages={408-424},
  keywords={Codes;Upper bound;Receivers;Olfactory;Channel coding;Nanobioscience;Molecular communication;Channel capacity;deterministic identification;molecular communication;Poisson channel},
  doi={10.1109/TMBMC.2023.3324487}}

@book{QCN,
title = "Quantum Communication Networks",
author = "Riccardo Bassoli and Holger Boche and Christian Deppe and Roberto Ferrara and Fitzek, {Frank H.P.} and Gisbert Janssen and Sajad Saeedinaeeni",
year = "2021",
doi = "10.1007/978-3-030-62938-0",
series = "Foundations in Signal Processing, Communications and Networking",
volume = {23},
publisher = {Springer Verlag},
place = {Cham New York}
}

@article{TCR:fully-q-AEP,
    author = {Marco Tomamichel and Roger Colbeck and Renato Renner},
    title = {{A Fully Quantum Asymptotic Equipartition Property}},
    journal = {IEEE Transactions on Information Theory},
    volume = {55},
    number = {12},
    pages = {5840-5847},
    year = {2009},
    month = {12},
    doi = {10.1109/TIT.2009.2032797}
}

@article{TCR:duality,
    author = {Marco Tomamichel and Roger Colbeck and Renato Renner},
    title = {{Duality Between Smooth Min- and Max-Entropies}},
    journal = {IEEE Transactions on Information Theory},
    volume = {59},
    number = {9},
    pages = {4674-4681},
    year = {2010},
    month = {09},
    doi = {10.1109/TIT.2010.2054130}
}

@book{DemboZeitouni,
  author = {Amir Dembo and Ofer Zeitouni},
  title = {Large Deviations: Techniques and Applications (2nd ed.)},
  publisher = {Springer Verlag},
  series = {Stochastic Modelling and Applied Probability},
  volume = {38},
  place = {Berlin Heidelberg Posemuckel},
  year = {2009},
  doi = {10.1007/978-3-642-03311-7}
}

@book{holevo:stat-structure-book,
    author = {Alexander S. Holevo},
    title = {Statistical Structure of Quantum Theory},
    publisher = {Springer Verlag},
    place = {Berlin Heidelberg}, 
    series = {Lecture Notes in Physics Monographs},
    volume = {67},
    year = {2001}, 
    doi = {10.1007/3-540-44998-1} 
}

@book{Wilde-book,
    author = {Mark M. Wilde},
    title = {Quantum Information Theory (2nd ed.)},
    publisher = {Cambridge University Press},
    place = {Cambridge}, 
    year = {2017}, 
    doi = {10.1017/CBO9781139525343} 
}

@article{ON:Strong_converse,
   title={Strong converse to the quantum channel coding theorem},
   volume={45},
   month={11},
   DOI={10.1109/18.796386},
   number={7},
   journal={IEEE Transactions on Information Theory},
   publisher={IEEE},
   author={Ogawa, Tomohiro and Nagaoka, Hiroshi},
   year={1999},
   pages={2486–2489} 
}

@article{Winter:Strong_converse,
    author = {Andreas Winter},
    title = {Coding theorem and strong converse for quantum channels},
    journal = {IEEE Transactions on Information Theory},
    year = {1999},
    month = {11},
    volume = {45},
    number = {7},
    pages = {2481-2485},
    doi = {10.1109/18.796385}
}

@article{VollbrechtWolf:PPT-NPT,
    author = {Karl Gerd H. Vollbrecht and Michael M. Wolf},
    title = {{Activating Distillation with an Infinitesimal Amount of Bound Entanglement}},
    journal = {Physical Review Letters},
    year = {2002},
    month = {05},
    volume = {88},
    pages = {247901},
    doi = {10.1103/PhysRevLett.88.247901}
}

@article{SST:superactivation,
    author = {Peter W. Shor and John A. Smolin and Ashish V. Thapliyal},
    title = {{Superactivation of Bound Entanglement}},
    journal = {Physical Review Letters},
    year = {2003},
    month = {03},
    volume = {90},
    pages = {107901},
    doi = {10.1103/PhysRevLett.90.107901}
}

@article{Haemers:theta,
    author = {Willem Haemers},
    title = {{On Some Problems of Lov{\'a}sz Concerning the Shannon Capacity of a Graph}},
    journal = {IEEE Transactions on Information Theory},
    year = {1979},
    month = {03},
    volume = {25},
    number = {2},
    pages = {231-232},
    doi = {10.1109/TIT.1979.1056027}
}

@article{RobinsonSharples,
    author = {James C. Robinson and Nicholas Sharples},
    title = {{Strict Inequality in the Box-Counting Dimension Product Formulas}},
    journal = {Real Analysis Exchange},
    year = {2012/2013},
    volume = {38},
    number = {1},
    pages = {95-120},
    url = {https://projecteuclid.org/journals/real-analysis-exchange/volume-38/issue-1/Strict-Inequality-in-the-Box-Counting-Dimension-Product-Formulas/rae/1367265642.full}
}

@book{Edgar:book,
    author = {Gerald A. Edgar},
    title = {Integral, Probability, and Fractal Measure},
    publisher = {Springer Verlag},
    year = {1998},
    place = {New York},
    doi = {10.1007/978-1-4757-2958-0}
}

@article{CK:wiretap,
    author={Csisz\'ar, Imre and K\"orner, J\'anos}, 
    title = {Broadcast channels with confidential messages},
    journal = {IEEE Transactions on Information Theory},
    year = {1978},
    volume = {24},
    number = {3},
    pages = {339-348},
    month = {05},
    doi = {10.1109/TIT.1978.1055892}
}

@article{SSW:symmetric,
    author = {Graeme Smith and John A. Smolin and Andreas Winter},
    title = {{The Quantum Capacity With Symmetric Side Channels}},
    journal = {IEEE Transactions on Information Theory},
    year = {2008},
    month = {09},
    volume = {54},
    number = {9},
    pages = {4208-4217},
    doi = {10.1109/TIT.2008.928269}
}

@article{YangWinter:potential,
    author = {Andreas Winter and Dong Yang},
    title = {{Potential Capacities of Quantum Channels}},
    journal = {IEEE Transactions on Information Theory},
    year = {2016},
    month = {03},
    volume = {62},
    number = {3},
    pages = {1415-1424},
    doi = {10.1109/TIT.2016.2519920}
}

@inproceedings{VDTB:practical-DI,
    author = {Ilya Vorobyev and Christian Deppe and Luis Torres-Figueroa and Holger Boche},
    title = {{Deterministic Identification: From Theoretical Analysis to Practical Identification Codes}},
    booktitle = {Proc. 2024 IEEE International Symposium on Information Theory (ISIT), Athens 7-12 July 2024},
    year = {2024},
    pages = {303-308},
    doi = {10.1109/ISIT57864.2024.10619366}
}

\phantom{.}

\addtocontents{toc}{\protect\setcounter{tocdepth}{0}}

\begin{IEEEbiographynophoto}{Pau Colomer} (Student Member, IEEE) received the Graduate degree in physics from Universitat Aut\`onoma de Barcelona (UAB), Bellaterra, Spain, in 2021; and the Master's degree from Universitat de Barcelona (UB), UAB, and Universitat Polit\`ecnica de Catalunya (UPC), Barcelona, Spain, in 2022.

During 2022, he was part of Institut de Ci\`encies Fot\`oniques (ICFO), Castelldefels, Spain, and Grup d'Informaci\'o Qu\`antica (GIQ), Bellaterra, Spain, as an associate student researcher. Since 2023 he has been pursuing a Ph.D. at the Technical University of Munich (TUM), Munich, Germany.
\end{IEEEbiographynophoto}

\begin{IEEEbiographynophoto}{Christian Deppe} (Member, IEEE) received his Dipl.-Math. degree in mathematics from the University of Bielefeld in 1996 and his Dr.-math. degree, also from the University of Bielefeld, in 1998. He then worked there until 2010 as a research associate and assistant at the Faculty of Mathematics, Bielefeld. In 2011, he took over the management of the project "Safety and Robustness of the Quantum Repeater" from the Federal Ministry of Education and Research at the Faculty of Mathematics, Bielefeld University, for two years. In 2014, Christian Deppe was funded by a DFG project at the Chair of Theoretical Information Technology, Technical University of Munich. At the Friedrich Schiller University in Jena, Christian Deppe took up a temporary professorship at the Faculty of Mathematics and Computer Science in 2015. Until 2023, he worked for six years at the Chair of Communications Engineering at the Technical University of Munich and since January 2024 has taken on new tasks at the Institute of Communications Engineering at the TU Braunschweig. He is project leader of several projects funded by the BMBF and the DFG.
\end{IEEEbiographynophoto}

\begin{IEEEbiographynophoto}{Holger Boche}
(Fellow, IEEE) received the Dipl.-Ing. degree in electrical engineering, the Graduate degree in mathematics, and the Dr.-Ing. degree in electrical engineering from the Technische Universität Dresden, Dresden, Germany, in 1990, 1992, and 1994, respectively, the master’s degree from Friedrich-Schiller Universität Jena, Jena, Germany, in 1997, and the Dr.Rer.Nat. degree in pure mathematics from Technische Universität Berlin, Berlin, Germany, in 1998. 
 
In 1997, he joined the Fraunhofer Institute for Telecommunications, Heinrich-Hertz-Institute (HHI), Berlin. From 2002 to 2010, he was a Full Professor of mobile communication networks with the Institute for Communications Systems, Technische Universität Berlin. In 2003, he became the Director of the Fraunhofer German-Sino Laboratory for Mobile Communications, Berlin. In 2004, he became the Director of the Fraunhofer Institute for Telecommunications, HHI. He was a Visiting Professor with ETH Zürich, Zürich, Switzerland, from 2004 to 2006 (Winter), and with KTH Stockholm, Stockholm, Sweden, in 2005 (Summer). He joined the Institute of Theoretical Information Technology, Technical University of Munich (TUM), Munich, Germany, in October 2010, where he is currently a Full Professor. He has been a member and an Honorary Fellow of the TUM Institute for Advanced Study, Munich, since 2014. Since 2018, he has been the Founding Director of the Center for Quantum Engineering, TUM. Since 2021, he has been jointly leading the BMBF Research Hub 6G-Life with Frank Fitzek. Among his publications is the recent book, \textit{Information Theoretic Security and Privacy
of Information Systems} (Cambridge University Press, 2017).

Prof. Boche was elected as a member of the German Academy of Sciences (Leopoldina) in 2008 and the Berlin Brandenburg Academy of Sciences and Humanities in 2009. He is a member of the IEEE Signal Processing Society SPCOM and SPTM Technical Committees. He was a recipient of the Research Award “Technische Kommunikation” from the Alcatel SEL Foundation in October 2003, the “Innovation Award” from the Vodafone Foundation in June 2006, and the Gottfried Wilhelm Leibniz Prize from Deutsche Forschungsgemeinschaft (German Research Foundation) in 2008. He was a recipient of the 2007 IEEE Signal Processing Society Best Paper Award. He was a co-recipient of the 2006 IEEE Signal Processing Society Best Paper Award. He was the General Chair of the Symposium on Information Theoretic Approaches to Security and Privacy at IEEE GlobalSIP 2016.
\end{IEEEbiographynophoto}

\begin{IEEEbiographynophoto}{Andreas Winter}
received a Diploma degree in Mathematics from Freie Universit\"at Berlin, Germany, in 1997, and a Ph.D. degree from Fakult\"at f\"ur Mathematik, Universit\"at Bielefeld, Germany, in 1999.
He was Research Associate at the University of Bielefeld until 2001, and then with the Department of Computer Science at the University of Bristol, UK. In 2003, still with the University of Bristol, he was appointed Lecturer in Mathematics, and in 2006 Professor of Physics of Information. From 2007 to 2012 he was in addition a Visiting Research Professor with the Centre of Quantum Technologies at NUS, Singapore. 
Since 2012 he has been ICREA Research Professor with the Universitat Aut\`onoma de Barcelona, Spain. His research interests include quantum and classical Shannon theory, and discrete mathematics.

He is recipient, along with Charles H. Bennett, Igor Devetak, Aram W. Harrow and Peter W. Shor, of the 2017 Information Theory Society Paper Award. In 2022, he received an Alexander von Humboldt Research Prize, a Hans Fischer Senior Fellowship of Technische Universit\"at M\"unchen, 
and one of three 2022 QCMC International Quantum Awards.
\end{IEEEbiographynophoto}

\end{document}